\newcommand{\bS}{\ensuremath{\bm{S}}}
\newcommand{\bP}{\ensuremath{\bm{P}}}
\newcommand{\bPP}{\ensuremath{\bm{\mathcal{P}}}}
\newcommand{\bI}{\ensuremath{\bm{I}}}
\newcommand{\bT}{\ensuremath{\bm{\theta}}}
\newcommand{\bX}{\ensuremath{\bm{X}}}
\newcommand{\bA}{\ensuremath{\bm{A}}}
\newcommand{\bB}{\ensuremath{\bm{B}}}
\newcommand{\supp}[1]{\ensuremath{\textsc{supp}(#1)}}
\newcommand{\prot}{\ensuremath{\Pi}}
\renewcommand{\SS}{\ensuremath{\mathcal{S}}}
\newcommand{\protds}{\ensuremath{\Pi}_{\textsf{DS}}}
\newcommand{\bSS}{\ensuremath{\bm{\SS}}}
\newcommand{\bR}{\ensuremath{\bm{R}}}
\newcommand{\Paren}[1]{\Big(#1\Big)}
\newcommand{\Bracket}[1]{\Big[#1\Big]}
\newcommand{\ICost}[2]{\ensuremath{\textnormal{\textsf{ICost}}_{#2}(#1)}\xspace}
\newcommand{\IC}[3]{\ensuremath{\textnormal{\textsf{IC}}_{#2}^{#3}(#1)}\xspace}
\newcommand{\CC}[3]{\ensuremath{\textnormal{\textsf{CC}}_{#2}^{#3}(#1)}\xspace}
\newcommand{\bprotds}{\ensuremath{\bm{\protds}}}
\newcommand{\bprot}{\ensuremath{\bm{\prot}}}
\newcommand{\diste}{\ensuremath{\dist_{\textsf{est}}}}
\newenvironment{tbox}{\begin{tcolorbox}[
		enlarge top by=5pt,
		enlarge bottom by=5pt,
		 boxsep=0pt,
                  left=4pt,
                  right=4pt,
                  top=10pt,
                  arc=0pt,
                  boxrule=1pt,toprule=1pt,
                  colback=white
                  ]
	}
{\end{tcolorbox}}
\newcommand{\toShrink}{-.25cm}
\newcommand{\toShrinkEnu}{-.2cm}
\newcommand{\toShrinkEqn}{-.45cm}
\renewcommand{\bar}[1]{\overline{#1}}
\newcommand{\card}[1]{\left\vert{#1}\right\vert}
\newcommand{\opt}{\text{{OPT}}\xspace}
\newcommand{\set}[1]{\ensuremath{\left\{ #1 \right\}}}
\newcommand{\poly}{\mbox{\rm poly}}
\newcommand{\REM}[1]{}
\newcommand{\Ot}{\widetilde{O}}
\newcommand{\union}{\ensuremath{\cup}}
\newcommand{\textbox}[2]{
{
\begin{tbox}
\textbf{#1}
{#2}
\end{tbox}
}
}
\newcommand{\eat}[1]{}
\DeclareMathOperator*{\Exp}{\ensuremath{\textnormal{E}}}
\DeclareMathOperator*{\Prob}{\ensuremath{\textnormal{Pr}}}
\renewcommand{\Pr}{\Prob}
\newcommand{\Ex}{\Exp}
\newcommand{\norm}[1]{\ensuremath{\|#1\|}}
\newcommand{\PR}[1]{\ensuremath{\Pr\left(#1\right)}\xspace}
\newcommand{\paren}[1]{\ensuremath{\left(#1\right)}\xspace}
\newcommand{\dist}{\ensuremath{\mathcal{D}}}
\newcommand{\Salpha}{\ensuremath{\widehat{\mathcal{S}}}}
\newcommand{\FC}{\ensuremath{\mathcal{F}}}
\newcommand{\dsest}{\textnormal{\ensuremath{\textsf{DomSet}_{\textsf{est}}}}\xspace}
\newcommand{\istar}{\ensuremath{{i^*}}\xspace}
\algnewcommand{\IfThenElse}[3]{
  \State \algorithmicif\ #1\ \algorithmicthen\ #2\ \algorithmicelse\ #3}
\tikzset{
  hello/.style={
    label={[draw,circle,xscale={1.25},minimum size=14mm+#1*21mm]center:},
    label={[red,yshift=#1*11mm]center:hello}
  }
}
\title{\mbox{Towards a Theory of Parameterized Streaming} Algorithms \footnote{A preliminary version of this paper will appear in IPEC 2019. Supported by ERC grant 2014-CoG 647557.}
}
\titlerunning{Towards a Theory of Parameterized Streaming Algorithms}
\author{Rajesh Chitnis\footnote{Work done while at University of Warwick, UK}}{School of Computer Science, University of Birmingham, UK.}{rajeshchitnis@gmail.com}{}{}
\author{Graham Cormode}{University of Warwick, UK.}{g.cormode@warwick.ac.uk}{}{}
\authorrunning{Rajesh Chitnis and Graham Cormode}
\keywords{%
Parameterized Algorithms,
Streaming Algorithms,
Kernels
}
\newcommand{\Pe}{\texttt{P}\xspace}
\newcommand{\NP}{\texttt{NP}\xspace}
\newcommand{\FPT}{\texttt{FPT}\xspace}
\newcommand{\fps}{\texttt{FPS}\xspace}
\newcommand{\subps}{\texttt{SubPS}\xspace}
\newcommand{\semips}{\texttt{SemiPS}\xspace}
\newcommand{\supps}{\texttt{SupPS}\xspace}
\newcommand{\bps}{\texttt{BrutePS}\xspace}
\newcommand{\dict}{\texttt{Dict}\xspace}
\newcommand{\nextt}{\texttt{Next}\xspace}
\newcommand{\T}{\ensuremath{T}}
\newcommand{\tw}{\ensuremath{\textbf{tw}}}
\begin{document}
\renewcommand*{\sectionautorefname}{Section}
\renewcommand*{\subsectionautorefname}{Section}
\renewcommand*{\subsubsectionautorefname}{Section}

\strutlongstacks{T}
\nolinenumbers


\maketitle

\begin{abstract}
Parameterized complexity attempts to give a more fine-grained analysis of the complexity of problems: instead of measuring the running time as a function of only the input size, we analyze the running time with respect to additional parameters.
This approach has proven to be highly successful in
delineating our understanding of \NP-hard problems.
Given this success with the TIME resource, it seems but natural to use
this approach for dealing with the SPACE resource.
First attempts in this direction have considered a few individual problems,
with some success:
Fafianie and Kratsch [MFCS'14] and Chitnis et al. [SODA'15]
introduced the notions of streaming kernels and parameterized streaming algorithms respectively.
For example, the latter shows how to refine the $\Omega(n^2)$ bit lower bound for finding a minimum Vertex Cover (VC) in the streaming setting by designing an algorithm for the parameterized $k$-VC problem which uses $O(k^{2}\log n)$ bits.

In this paper, we initiate a systematic study of graph problems from
the paradigm of parameterized streaming algorithms.
We first define a natural hierarchy of space complexity classes
of \fps, \subps, \semips, \supps and \bps, and then obtain tight
classifications for several well-studied graph problems such as
Longest Path, Feedback Vertex Set, Dominating Set, Girth, Treewidth, etc. into this hierarchy (see Figure~\ref{landscape} and Figure~\ref{table}).
On the algorithmic side, our parameterized streaming algorithms use
techniques from the FPT world such as bidimensionality, iterative compression and
bounded-depth search trees. On the hardness side, we obtain lower bounds for the parameterized streaming complexity of various problems via novel reductions from problems in communication complexity.
We also show a general (unconditional) lower bound for space complexity of parameterized streaming algorithms for a large class of problems inspired by the recently developed frameworks for showing (conditional) kernelization lower bounds.


Parameterized algorithms and streaming algorithms are approaches to
cope with TIME and SPACE intractability respectively. It is our hope
that this work on parameterized streaming algorithms leads to two-way
flow of ideas between these two previously separated areas of theoretical computer science.


\end{abstract}


\section{Introduction}

Designing and implementing efficient algorithms is at the heart of computer science.
Traditionally, efficiency of algorithms has been measured with respect to
running time as a function of instance size.
From this perspective, algorithms are said to be efficient if they can
be solved in time which is bounded by some polynomial function of the
input size.
However, very many interesting problems are \NP-complete, and so are
grouped together as ``not known to be efficient''.
This fails to discriminate within a large heterogenous group of
problems, and in response
the theory of \emph{parameterized (time) algorithms} was developed in
late 90's by Downey and Fellows~\cite{df99}.
Parameterized complexity attempts to delineate the complexity
of problems by expressing the costs in terms of additional
parameters.
Formally, we say that a problem is \emph{fixed-parameter tractable}
(\FPT) with respect to parameter $k$ if the problem can be solved in
time $f(k)\cdot n^{O(1)}$ where $f$ is a computable function and $n$
is the input size.
For example,
the problem of checking if a graph on $n$ vertices has a vertex cover
of size at most $k$ can be solved in $2^k \cdot n^{O(1)}$ time.
The study of various parameters helps to understand which parameters
make the problem easier (\FPT) and which ones cause it to be hard.
The parameterized approach towards \NP-complete problems has led to
development of various algorithmic tools such as kernelization,
iterative compression, color coding, and more~\cite{df13,pc-book}.



\noindent \textbf{Kernelization:} 
A key concept in fixed parameter tractability is that of kernelization which is an efficient preprocessing algorithm to produce a
smaller, equivalent output called the ``kernel''.
Formally, a kernelization algorithm for a parameterized problem $Q$ 
is an algorithm which takes as an instance $\langle x, k\rangle$ and
outputs in time polynomial in $(|x| + k)$ an equivalent\footnote{By
  equivalent we mean that $\langle x, k \rangle \in Q \Leftrightarrow
  \langle x', k' \rangle \in Q$} instance $\langle x', k' \rangle$
such that $\max\{|x'|, k'\}\leq f(k)$ for some computable function
$f$. The output instance $\langle x', k'\rangle$ is called the kernel,
while the function $f$ determines the size of the kernel.
Kernelizability is equivalent to fixed-parameter tractability, and
designing compact kernels is an important question.
In recent years, (conditional) lower bounds on kernels have emerged~\cite{DBLP:journals/jcss/BodlaenderDFH09,dell-and,dell,DBLP:conf/focs/Drucker12,fortnow-santhanam}.


\noindent \textbf{Streaming Algorithms:}
A very different paradigm for handling large problem instances arises
in the form of streaming algorithms.
The model is motivated by sources of data arising in communication
networks and activity streams that are considered to be too big to store conveniently.
This places a greater emphasis on the space complexity of algorithms.
A streaming algorithm processes the input in one or a few read-only
passes, with primary focus on the storage space needed.
In this paper we consider streaming algorithms for graph problems over
fixed vertex sets, where information about the edges arrives edge by edge~\cite{henzinger1998computing}.
We consider variants where edges can be both inserted and deleted, or
only insertions are allowed.
We primarily consider single pass streams, but also give some multi-pass
results.

\subsection{Parameterized Streaming Algorithms and Kernels}

Given that parameterized algorithms have been extremely successful for
the TIME resource, it seems natural to also use it attack the SPACE
resource.
In this paper, we advance the model of parameterized streaming
algorithms, and start to flesh out a hierarchy of complexity classes.
We focus our attention on graph problems, by analogy with FPT, where
the majority of results have addressed graphs.
From a space perspective, there is perhaps less headroom than when
considering the time cost:
for graphs on $n$ vertices, the entire graph can be stored using
$O(n^2)$ space\footnote{Throughout the paper, by space we mean words/edges/vertices. Each word can be represented using
  $O(\log n)$ bits}.
Nevertheless, given that storing the full graph can be prohibitive,
there are natural space complexity classes to consider.
We formalize these below, but informally, the classes partition the
dependence on $n$ as:
(i) (virtually) independent of $n$; (ii) sublinear in $n$; (iii) (quasi)linear in
$n$;
(iv) superlinear but subquadratic in $n$;
and (v) quadratic in $n$.

Naively, several graph problems have strong lower bounds: for
example, the problem of finding a minimum vertex cover on graphs of
$n$ vertices has a lower bound of $\Omega(n^2)$ bits.
However, when we adopt the parameterized view, we seek
streaming algorithms for (parameterized) graph problems whose space
can be expressed as a function of
\emph{both} the number of vertices $n$ and the parameter $k$.
With this relaxation, we can separate out the problem space and start
to populate our hierarchy.
We next spell out our results, which derive from a variety of upper
and lower bounds building on the streaming and FPT literature.

\subsection{Our Results \& Organization of the paper}

For a graph problem with parameter $k$, there can be several possible choices for the space complexity needed to solve it in the streaming setting. In this paper, we first define some natural space complexity classes below:
\begin{enumerate}
  \item \underline{\textit{$\Ot(f(k))$ space:}} Due to the connection to running time of \FPT algorithms, we call the class of parameterized problems solvable using $\Ot(f(k))$ bits as \fps (fixed-parameterized streaming)\footnote{Throughout this paper, we use the $\tilde{O}$ notation to hide $\log^{O(1)} n$ factors}.

  \item \underline{\textit{Sublinear space:}} When the dependence on $n$ is sublinear, we call the class of parameterized problems solvable using $\Ot(f(k)\cdot n^{1-\epsilon})$ bits as \subps (sublinear  parameterized streaming)

  \item \underline{\textit{Quasi-linear space:}} Due to the connection to the semi-streaming model~\cite{andrew-semi-streaming,muthu-book}, we call the set of problems solvable using $\Ot(f(k)\cdot n)$ bits as \semips (parameterized semi-streaming).

  \item \underline{\textit{Superlinear, subquadratic space:}} When the dependence
    on $n$ is superlinear (but subquadratic), we call the class of
    parameterized problems solvable using $\Ot(f(k)\cdot
    n^{1+\epsilon})$ bits (for some $1>\epsilon>0$) as \supps (superlinear parameterized streaming).

  \item \underline{\textit{Quadratic space:}}
    We call the set of graph problems solvable using $O(n^{2})$ bits as \bps (brute-force parameterized streaming). Note that every graph problem is in \bps since we can just store the entire adjacency matrix using $O(n^2)$ bits (see Remark~\ref{rem:unbounded-computation}).
\end{enumerate}

\begin{remark}
Formally, we need to consider the following 7-tuple when we attempt to find its correct position in the aforementioned hierarchy of complexity classes:
$$ [\text{Problem, Parameter, Space,} \#\ \text{of Passes, Type of Algorithm, Approx. Ratio, Type of Stream}] $$
By type of algorithm, we mean that the algorithm could be deterministic or randomized. For the type of stream, the standard alternatives are (adversarial) insertion, (adversarial) insertion-deletion, random order, etc. Figure~\ref{vc-case-study} gives a list of results for the $k$-VC problem (as a case study) in various different settings. Unless stated otherwise, throughout this paper, we consider the space requirement for 1-pass exact deterministic algorithms for problems with the standard parameter (size of the solution) in insertion-only streams.
\label{rem:7-tuple}
\end{remark}

\begin{remark}
There are various different models for streaming algorithms depending on how much computation is allowed on the stored data. In this paper, we consider the most general model by allowing \emph{unbounded computation} at each edge update, and also at the end of the stream.
\label{rem:unbounded-computation}
\end{remark}

Our goal is to provide a tight classification of graph problems into
the aforementioned complexity classes.
We make  progress towards this goal as follows:
Section~\ref{sec:algo-fpt-to-streaming} shows how various techniques from the
FPT world such as iterative compression, branching, bidimensionality, etc. can also be used to design parameterized streaming
algorithms. First we investigate whether one can further improve upon the \fps algorithm of Chitnis et al.~\cite{rajesh-soda-15} for $k$-VC which uses $O(k^{2}\cdot \log n)$ bits and one pass. We design two algorithms for $k$-VC which use $O(k\cdot \log n)$ bits\footnote{Which is essentially optimal since the algorithm also returns a VC of size $k$ (if one exists)}: an $2^{k}$-pass algorithm using bounded-depth search trees (Section~\ref{subsec:branching}) and an $(k\cdot 2^{2k})$-pass algorithm using iterative compression (Section~\ref{subsec:ic}).
Finally, Section~\ref{subsec:bidimensionality} shows that any
minor-bidimensional problem belongs to the class \semips.

Section~\ref{sec:lb} deals with lower bounds for parameterized streaming algorithms. First, in Section~\ref{sec:tight} we show that some parameterized problems are tight for the classes \semips and \bps. In particular, we show that $k$-Treewidth, $k$-Path and $k$-Feedback-Vertex-Set are tight for the class \semips, i.e., they belong to \semips but do not belong to the sub-class \subps. Our \semips algorithms are based on problem-specific structural insights. Via reductions from the \textsc{Perm} problem~\cite{woodruff-tight-bounds-insertion}, we rule out algorithms which use $\Ot(f(k)\cdot n^{1-\epsilon})$ bits (for any function $f$ and any $\epsilon\in (0,1)$) for these problems by proving $\Omega(n\log n)$ bits lower bounds for constant values of $k$.
Then we show that some parameterized problems such as $k$-Girth and $k$-Dominating-Set are tight for the class \bps, i.e, they belong to \bps but do not belong to the sub-class \supps. Every graph problem belongs to \bps since we can store the entire adjacency matrix of the graph using $O(n^2)$ bits. Via reductions from the \textsc{Index} problem~\cite{nisan-cc-book}, we rule out algorithms which use $\Ot(f(k)\cdot n^{1+\epsilon})$ bits (for any function $f$ and any $\epsilon\in (0,1)$) for these problems by proving $\Omega(n^2)$ bits lower bounds for constant values of $k$.

Section~\ref{sec:dom-set-lb} shows a lower bound of $\Omega(n)$ bits for any algorithm that approximates (within a factor $\frac{\beta}{32}$) the size of min dominating set on graphs of arboricity $(\beta+2)$, i.e., this problem has no $\Ot(f(\beta)\cdot n^{1-\epsilon})$ bits algorithm (since $\beta$ is a constant), and hence does not belong to the class \subps when parameterized by $\beta$. In Section~\ref{sec:lb-from-parameterized} we obtain unconditional lower bounds on the space complexity of $1$-pass parameterized streaming algorithms for a large class of graph problems inspired by some of the recent frameworks to show conditional lower bounds for kernels~\cite{DBLP:journals/jcss/BodlaenderDFH09,dell-and,dell,DBLP:conf/focs/Drucker12,fortnow-santhanam}. Finally, in Section~\ref{app:lb-sat} we show that any parameterized streaming algorithm for the $d$-SAT problem (for any $d\geq 2$) must (essentially) follow the naive algorithm of storing all the clauses.

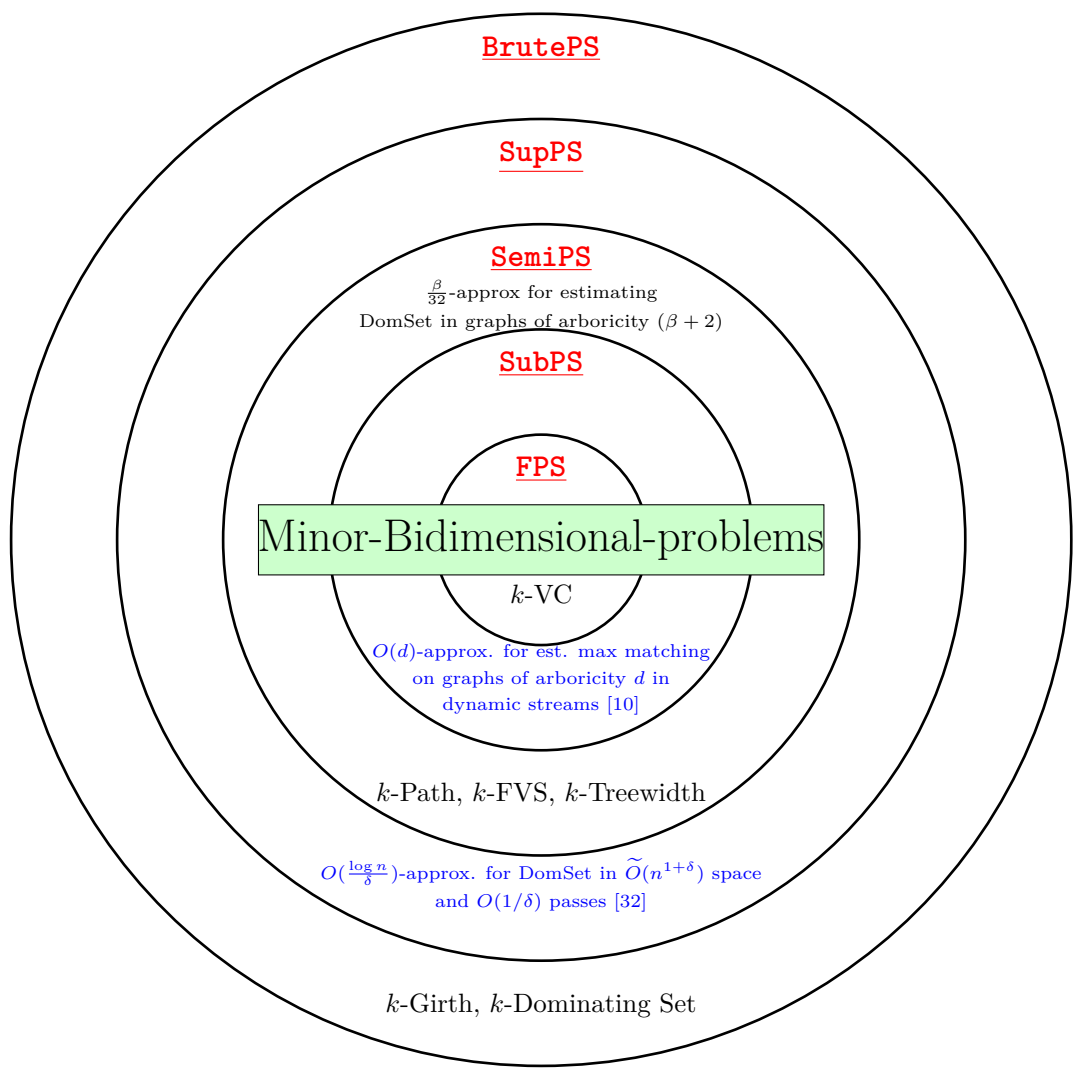
\begin{figure}[!ht]
  \centering

\begin{tikzpicture}[scale=0.93]

\coordinate (originNode) at (90:0cm);
\coordinate (fps) at (90:1.5cm);
\coordinate (subps) at (90:3cm);
\coordinate (semips) at (90:4.5cm);
\coordinate (supps) at (90:6cm);
\coordinate (bps) at (90:7.5cm);

\draw[black,line width=1pt] (0,0) circle (1.5cm);
\draw[black,line width=1pt] (0,0) circle (3.0cm);
\draw[black,line width=1pt] (0,0) circle (4.5cm);
\draw[black,line width=1pt] (0,0) circle (6.0cm);
\draw[black,line width=1pt] (0,0) circle (7.5cm);

\node [below=0.5em] at (fps) {\textcolor[rgb]{1.00,0.00,0.00}{\Large \textbf{\underline{\fps}}}};
\node [below=0.5em] at (subps) {\textcolor[rgb]{1.00,0.00,0.00}{\Large \textbf{\underline{\subps}}}};
\node [below=0.5em] at (semips) {\textcolor[rgb]{1.00,0.00,0.00}{\Large \textbf{\underline{\semips}}}};
\node [below=0.5em] at (supps) {\textcolor[rgb]{1.00,0.00,0.00}{\Large \textbf{\underline{\supps}}}};
\node [below=0.5em] at (bps) {\textcolor[rgb]{1.00,0.00,0.00}{\Large \textbf{\underline{\bps}}}};

\coordinate (bpscaption) at (270:7.5cm);
\node [above=1.5em] at (bpscaption) {$k$-Girth, $k$-Dominating Set};

\coordinate (semipscaption) at (270:4.5cm);
\node [above=1.5em] at (semipscaption) {$k$-Path, $k$-FVS,$\newline$ $k$-Treewidth};

\coordinate (fpscaption) at (270:1.5cm);
\node [above=1.25em] at (fpscaption) {$k$-VC};

\coordinate (subpscaption) at (270:3cm);
\node [above=3em] at (subpscaption) {\textcolor[rgb]{0.00,0.00,1.00}{{\scriptsize $O(d)$-approx. for est. max matching}}};
\node [above=2em] at (subpscaption) {\textcolor[rgb]{0.00,0.00,1.00}{{\scriptsize on graphs of arboricity $d$ in}}};
\node [above=1em] at (subpscaption) {\textcolor[rgb]{0.00,0.00,1.00}{{\scriptsize dynamic streams~\cite{rajesh-soda-16}}}};

\coordinate (suppscaption) at (270:6cm);
\node [above=2.5em] at (suppscaption) {\textcolor[rgb]{0.00,0.00,1.00}{{\scriptsize $O(\frac{\log n}{\delta})$-approx. for DomSet in $\Ot(n^{1+\delta})$ space}}};
\node [above=1.5em] at (suppscaption) {\textcolor[rgb]{0.00,0.00,1.00}{{\scriptsize and $O(1/\delta)$ passes~\cite{sepideh}}}};

\draw [fill=green!20] (-4,-0.5) rectangle (4,0.5);
\node at (originNode) {{\LARGE Minor-Bidimensional-problems}};

\node [below=1.75em] at (semips) {{\scriptsize $\frac{\beta}{32}$-approx for estimating}};
\node [below=3em] at (semips) {{\scriptsize DomSet in graphs of arboricity $(\beta+2)$}};

\end{tikzpicture}
\caption{Pictorial representation of classification of some graph problems into complexity classes: our results are in black and previous work is referenced in blue. All results are for 1-pass deterministic algorithms on insertion-only streams unless otherwise specified. It was already known that $k$-VC $\in$ \fps~\cite{rajesh-soda-15,rajesh-soda-16} using only 1-pass, but here we design an algorithm with optimal space storage at the expense of multiple passes.\label{landscape}}
\end{figure}

\begin{figure}[!ht]
\begin{center}
    \begin{tabular}{ | c | c | c | c | c |}
    \hline
     \textbf{Problem} & \Longstack{\textbf{Number}\\ \textbf{of}\\ \textbf{Passes}} & \Longstack{\textbf{Type of}\\ \textbf{Stream}} & \Longstack{\textbf{Space}\\ \textbf{Upper Bound}} & \Longstack{\textbf{Space}\\ \textbf{Lower Bound}}\\ \hline
     \Longstack{\small{$g(r)$-minor-bidimensional}\\ \small{problems} [Sec.~\ref{subsec:bidimensionality}]}  & 1 & Ins-Del. & \Longstack{\scriptsize{$\Ot((g^{-1}(k+1))^{10} n)$} \\ words}   & --- \\\hline
     $k$-VC [Sec.~\ref{subsec:ic}]       & \Longstack{$2^{2k}\cdot k$} & Ins-only & $O(k)$ words & $\Omega(k)$ words\\ \hline
     $k$-VC [Sec.~\ref{subsec:branching}]      & $2^{k}$ & Ins-only & $O(k)$ words  & $\Omega(k)$ words\\ \hline
     \Longstack{$k$-FVS, $k$-Path\\ $k$-Treewidth [Sec.~\ref{sec:tight}]} & 1 & Ins-only & $O(k\cdot n)$ words & \Longstack{No $f(k)\cdot n^{1-\epsilon}\log^{O(1)} n$\\ bits algorithm} \\\hline
     \Longstack{$k$-FVS, $k$-Path\\ $k$-Treewidth [Sec.~\ref{sec:tight}]} & 1 & Ins-Del. & $\Ot(k\cdot n)$ words & \Longstack{No $f(k)\cdot n^{1-\epsilon}\log^{O(1)} n$\\ bits algorithm} \\\hline
     \Longstack{$k$-Girth, $k$-DomSet,\\ [Sec.~\ref{sec:tight}]} & 1 & Ins-Del. & $O(n^2)$ bits & \Longstack{No $f(k)\cdot n^{2-\epsilon}\log^{O(1)} n$\\ bits algorithm} \\\hline
     \Longstack{$\frac{\beta}{32}$-approximation for size of \\ min DomSet on graphs of \\ arboricity $\beta$ [Sec.~\ref{sec:dom-set-lb}]} & 1 & Ins-only & $\Ot(n\beta)$ bits & \Longstack{No $f(\beta)\cdot n^{1-\epsilon}$\\ bits algorithm} \\\hline

     \Longstack{AND-compatible problems\\ and OR-compatible\\ problems [Sec.~\ref{sec:lb-from-parameterized}]} & 1 & Ins-only & $O(n^2)$ bits &  \Longstack{No $\Ot(f(k)\cdot n^{1-\epsilon})$\\ bits algorithm} \\\hline

     \Longstack{$d$-SAT with\\ $N$ variables  [Sec.~\ref{app:lb-sat}]} & 1 & \Longstack{Clause \\ Arrival} & $\Ot(d\cdot N^d)$ bits & $\Omega((N/d)^d)$ bits \\\hline


    \end{tabular}
\end{center}
\caption{Table summarizing our results (in the order in which they appear in the paper). All our algorithms are deterministic. All the lower bounds are unconditional, and hold even for randomized algorithms in insertion-only streams.
}
\label{table}
\end{figure}

Figure~\ref{landscape} provides a pictorial representation of the complexity classes, and the known classification of several graph problems (from this paper and some previous work) into these classes.
Figure~\ref{table} summarizes our results, and clarifies the
stream arrival model(s) under which they hold. Figure~\ref{vc-case-study} summarizes known results for the $k$-VC problem in the different settings outlined in Remark~\ref{rem:7-tuple}.

\begin{figure}[!ht]
\begin{center}
    \begin{tabular}{ | c | c | c | c | c | c |}
    \hline
     \textbf{Problem} & \Longstack{\textbf{\# of}\\ \textbf{Passes}} & \Longstack{\textbf{Type of}\\ \textbf{Stream}} & \Longstack{\textbf{Type of}\\ \textbf{Algorithm}} & \Longstack{\textbf{Approx.}\\ \textbf{Ratio}} & \Longstack{\textbf{Space}\\ \textbf{ Bound}}\\ \hline
     $k$-VC  & 1 & Ins-only & Det. & 1  & $O(k^2 \log n)$ bits~\cite{rajesh-soda-15} \\\hline

     $k$-VC  & 1 & Ins-only & Rand. & 1   & $\Omega(k^2)$ bits~\cite{rajesh-soda-15} \\\hline

     $k$-VC  & 1 & Ins-Del. & Rand. & 1  & $O(k^2 \log^{O(1)} n)$ bits~\cite{rajesh-soda-16} \\\hline

     $k$-VC  & $2^k$ & Ins-only & Det. & 1  & $O(k \log n)$ bits [Algorithm~\ref{alg:streaming-branching}] \\\hline

     $k$-VC  & $k\cdot 2^k$ & Ins-only. & Det. & 1   & $O(k \log n)$ bits [Algorithm~\ref{alg:streaming-ic}] \\\hline

     Estim. $k$-VC  & $\Omega(k/\log n)$ & Ins-only. & Rand. & 1   & $O(k\log n)$ bits~\cite[Theorem 16]{abboud} \\\hline

     \Longstack {Estim. $k$-VC \\ on Trees}  & $1$ & Ins-only. & \Longstack{Det. \\ Rand.} & $(3/2-\epsilon)$   & \Longstack{$\Omega(n)$ bits~\cite[Theorem 6.1]{hossein-soda-15}\\ $\Omega(\sqrt{n})$ bits~\cite[Theorem 6.1]{hossein-soda-15}}  \\\hline

    \end{tabular}
\end{center}
\caption{Table summarizing some of the results for the $k$-VC problem in the different settings outlined in Remark~\ref{rem:7-tuple}.
}
\label{vc-case-study}
\end{figure}

%
%
%
%

%


\subsection{Prior work on Parametrized Streaming Algorithms}

Prior work began by considering how to implement kernels in the
streaming model.
Formally, a streaming kernel~\cite{fafianie-kratsch} for a
parameterized problem $(I,k)$ is a streaming algorithm that receives
the input $I$ as a stream of elements, stores $f(k)\cdot \log^{O(1)}
|I|$ bits and returns an equivalent
instance\footnote{~\cite{fafianie-kratsch} required $f(k)=k^{O(1)}$,
  but we choose to relax this requirement}.
 This is especially important from the practical point of view since several real-world situations can be modeled by the streaming setting, and streaming kernels would help to efficiently preprocess these instances. Fafianie and Kratsch~\cite{fafianie-kratsch} showed that the kernels for some problems like Hitting Set and Set Matching can be implemented in the streaming setting, but other problems such as Edge Dominating Set, Feedback Vertex Set, etc. do not admit (1-pass) streaming kernels.

Chitnis et al.~\cite{rajesh-soda-15} studied how to circumvent the
worst case bound of $\Omega(n^2)$ bits for Vertex Cover by designing a
streaming algorithm for the parameterized $k$-Vertex-Cover
($k$-VC)\footnote{That is, determine   whether there is a vertex cover
  of size at most $k$?}.
They showed that the $k$-VC problem can be solved in
insertion-only streams using storage of $O(k^2)$ space.
They also showed an almost matching lower bound of $\Omega(k^2)$ bits
for any streaming algorithm for $k$-VC.
A sequence of papers showed how to solve the $k$-VC problem in more
general streaming models: Chitnis et
al.~\cite{rajesh-soda-15,rajesh-spaa-15} gave an $\Ot(k^2)$ space
algorithm under a particular promise,
which was subsequently removed in~\cite{rajesh-soda-16}.


Recently, there have been several papers considering the problem of
estimating the size of a maximum matching using $o(n)$ space in graphs
of bounded arboricity.
If the space is required to be sublinear in $n$, then versions of the
problem that involve estimating the size of a maximum matching (rather
than demonstrating such a matching) become the focus.
Since the work of Esfandiari et al.~\cite{hossein-soda-15}, there have
been several sublinear space
algorithms~\cite{andrew-approx,andrew-sosa,graham-sparse,rajesh-soda-16}
which obtain $O(\alpha)$-approximate estimations of the size of
maximum matching in graphs of arboricity $\alpha$.
The current best bounds~\cite{Bury2018,graham-sparse} for insertion-only streams is
$O(\log^{O(1)} n)$ space and for insertion-deletion streams is
$\Ot(\alpha\cdot n^{4/5})$.
All of these results can be viewed as parameterized streaming algorithms (\fps or \subps) for approximately estimating the size of maximum matching in graphs parameterized by the arboricity.

\section{\mbox{Parameterized Streaming Algorithms Inspired by FPT techniques}}
\label{sec:algo-fpt-to-streaming}

In this section we design parameterized streaming algorithms using three techniques from the world of parameterized algorithms, viz. branching, iterative compression and bidimensionality.

\subsection{Multipass \fps algorithm for $k$-VC using Branching}
\label{subsec:branching}

The streaming algorithm (Algorithm~\ref{alg:streaming-ic}) from Section~\ref{subsec:ic} already uses optimal storage of $O(k\log n)$ bits but requires $O(2^{k}\cdot (n-k))$ passes. In this section, we show how to reduce the number of passes to $2^k$ (while still maintaining the same storage) using the technique of bounded-depth search trees (also known as branching).
The method of bounded-depth search trees gives a folklore FPT algorithm for $k$-VC which runs in $2^{O(k)}\cdot n^{O(1)}$ time. The idea is simple: any vertex cover must contain at least one end-point of each edge. We now build a search tree as follows: choose an arbitrary edge, say $e=u-v$ in the graph. Start with the graph $G$ at the root node of the search tree. Branch into two options, viz. choosing either $u$ or $v$ into the vertex cover\footnote{Note that if we choose $u$ in the first branch then that does not imply that we cannot or will not choose $v$ later on in the search tree}. The resulting graphs at the two children of the root node are $G-u$ and $G-v$. Continue the branching process. Note that at each step, we branch into two options and we only need to build the search tree to height $k$ for the $k$-VC problem. Hence, the binary search tree has $2^{O(k)}$ leaf nodes. If the resulting graph at any leaf node is empty (i.e., has no edges) then $G$ has a vertex cover of size $\leq k$ which can be obtained by following the path from the root node to the leaf node in the search tree. Conversely, if the resulting graphs at none of the leaf nodes of the search tree are empty then $G$ does not have a vertex cover of size $\leq k$: this is because at each step we branched on all the (two) possibilities at each node of the search tree.

\textbf{Simulating branching-based FPT algorithm using multiple passes}: We now simulate the branching-based FPT algorithm described in the previous section using $2^{k}$ passes and $O(k\log n)$ bits of storage in the streaming model.

\begin{definition}
Let $V(G)=\{v_1, v_2, \ldots, v_n\}$. Fix some ordering $\phi$ on $V(G)$ as follows: $v_1 < v_2 < v_3 < \ldots< v_n$.
%
Let $\dict_k$ be the dictionary ordering on the $2^{k}$ binary strings of $\{0,1\}^{k}$. Given a string $X\subseteq \{0,1\}^{k}$, let $\dict_{k}(\nextt(X))$ denote the string that comes immediately after $X$ in the ordering $\dict_k$. We set $\dict_{k}(\nextt(1^k))=\spadesuit$
\end{definition}

We formally describe our multipass algorithm in Algorithm~\ref{alg:streaming-branching}. This algorithm crucially uses the fact that in each pass we see the edges of the stream in the \emph{same} order.

\begin{algorithm}[h]
\caption{\small{$2^{k}$-pass Streaming Algorithm for $k$-VC using $O(k\log n)$ bits via Branching} \label{alg:streaming-branching}}
\textbf{Input:} \mbox{An undirected graph $G=(V,E)$ and an integer $k$.}\\
\textbf{Output:} \mbox{A vertex cover $S$ of $G$ of size $\leq k$ (if one exists), and NO otherwise}\\
\textbf{Storage:} $i$, $j$, $S$, $X$

\begin{algorithmic}[1]

%
\State Let $X=0^k$, and suppose the edges of the graph are seen in the order $e_1, e_2, \ldots, e_m$

\While{$X\neq \spadesuit$}

        {$S=\emptyset, i=1, j=1$}
        \While{$i\neq k+1$}
             \State Let $e_j = u-v$ such that $u<v$ under the ordering $\phi$
             \If{Both $u\notin S$ and $v\notin S$}

                \If{$X[i] =0$} $S\leftarrow S\cup \{u\}$
                    \Else {}  $S\leftarrow S\cup \{v\}$
                \EndIf
                \State $i\leftarrow i+1$
             \EndIf
             \State $j\leftarrow j+1$
        \EndWhile
        \If{$j=m+1$} $\text{Return}\ S\ \text{and abort}$
        \Else\ $X\leftarrow \dict_{k}(\nextt(X))$
        \EndIf
\EndWhile
        \If{$X=\spadesuit$} $\text{Return NO}$
        \EndIf

\end{algorithmic}
\end{algorithm}

%

\begin{theorem}
\label{thm:alg-vc-branching}
Algorithm~\ref{alg:streaming-branching} correctly solves the $k$-VC problem using $2^{k}$ passes and $O(k\log n)$ bits of storage.
\end{theorem}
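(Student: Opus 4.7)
The plan is to verify three claims: the space bound, the pass count, and correctness. The space claim is the easiest. At any moment the algorithm stores only the $k$-bit string $X$, the partial cover $S$ of size at most $k$ (costing $O(k \log n)$ bits), and two indices $i \in \{1,\dots,k+1\}$ and $j \in \{1,\dots,m+1\}$ (each $O(\log n)$ bits), for a total of $O(k\log n)$ bits. The pass count follows from the structure of the outer while loop: it iterates over the $2^k$ strings in $\{0,1\}^k$ in dictionary order, each iteration making exactly one pass through the stream $e_1,\ldots,e_m$, so the total number of passes is at most $2^k$.

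For correctness I would show the two directions separately. If the algorithm outputs some $S$, then the inner loop terminated with $j = m+1$, meaning every edge $e_1,\ldots,e_m$ was examined and either already had an endpoint in $S$ at the time of examination or contributed a fresh endpoint to $S$; either way $S$ covers all edges, and $|S|\le k$ since $i$ never exceeded $k+1$. Conversely, suppose $G$ has a vertex cover of size at most $k$. I would argue that some iteration with string $X^\star \in \{0,1\}^k$ succeeds, by reducing the behaviour of a single pass to a root-to-leaf path of the textbook branching tree for $k$-VC. Because the stream delivers edges in the same order $e_1,e_2,\ldots$ in every pass, the $i$-th vertex added to $S$ during a pass is determined by $X[i]$ together with the deterministic ``first uncovered edge'' in the current residual graph, so the $2^k$ passes correspond bijectively to the $2^k$ root-to-leaf paths in the depth-$k$ binary branching tree.

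The key step is therefore the standard branching-tree invariant: for any minimal bad edge $e_j = uv$ encountered with neither $u$ nor $v$ yet in $S$, any vertex cover $C$ of $G$ must contain $u$ or $v$, so one of the two choices encoded by $X[i]\in\{0,1\}$ is consistent with $C$. Inducting on the depth of the branching tree gives a string $X^\star$ such that the sequence of choices made during the pass with $X = X^\star$ yields a set $S \subseteq C$ of size at most $k$ that covers every edge, so this pass exits with $j=m+1$ and returns $S$. If the outer loop exhausts all $2^k$ strings without returning, then no root-to-leaf path produces a valid cover, which by the same invariant implies $G$ has no vertex cover of size $\le k$, so the final \No answer is correct.

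The main obstacle I anticipate is making precise the correspondence between a pass of Algorithm~\ref{alg:streaming-branching} and a root-to-leaf path in the branching tree. Since the algorithm sees edges in streaming order rather than having random access, I must check carefully that the ``next bad edge'' picked in the pass always equals the edge the offline branching algorithm would have chosen, given the same partial set $S$ and the same fixed ordering of edges; this is immediate from the fact that the inner loop advances $j$ monotonically and only consumes $X[i]$ when both endpoints of $e_j$ are outside $S$, but it should be stated explicitly as a lemma tying the online execution to the offline branching recursion.
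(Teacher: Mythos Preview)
Your proposal is correct and follows essentially the same approach as the paper. The paper argues soundness exactly as you do, and for completeness it directly constructs the witness string $X^\star$ by simulating the algorithm and, at each uncovered edge, setting $X^\star[i]$ according to which endpoint lies in a fixed size-$\le k$ cover $S^*$; your framing via the bijection with root-to-leaf paths of the depth-$k$ branching tree is the same argument viewed one level up, and the ``lemma'' you anticipate needing is precisely that direct construction.
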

\begin{proof}
First we argue the correctness of Algorithm~\ref{alg:streaming-branching}. Suppose that there is a string $X\in \{0,1\}^{k}$ such that $j=m+1$ and we return the set $S$. Note that initially we have $S=\emptyset$, and the counter $i$ increases each time we add a vertex to $S$. Hence, size of $S$ never exceeds $k$. Moreover, if an edge was not covered already (i.e., neither endpoint was in $S$) then we add at least one of those end-points in $S$ (depending on whether $X[i]$ is $0$ or $1$) and increase $i$ by one. Hence, if $j=m+1$ then this means that we have seen (and covered) all the edges and the current set $S$ is indeed a vertex cover of size $\leq k$. Now suppose that the algorithm returns NO. We claim that indeed $G$ cannot have a vertex cover of size $k$. Suppose to the contrary that $G$ has a vertex cover $S^*$ of size $\leq k$, but Algorithm~\ref{alg:streaming-branching} returned NO. We construct a string $X^{*}\in \{0,1\}^{k}$ for which Algorithm~\ref{alg:streaming-branching} would return the set $S^*$: we start with $i$=1 and the edge $e_1$. Since $S^*$ is a vertex cover of $G$ it must cover the edge $e_1$. Set $X^{*}[1]$ to be 0 or 1 depending on which of the two endpoints of $e_1$ is in $S^*$ (if both endpoints are in $S^*$, then it does not matter what we set $X^{*}[1]$ to be). Continuing this way suppose we have filled the entries till $X^{*}[i]$ and the current edge under consideration is $e_j$. If $e_j$ is not covered then $i\neq k$ since $S^*$ is a vertex cover of $G$ of size $\leq k$. In this case, we set $X^{*}[i+1]$ to be 0 or 1 depending on which of the two endpoints of $e_1$ is in $S^*$.

We now analyze the storage and number of passes required. The number
of passes is at most $2^{k}$ since we have one pass for each string from $\{0,1\}^{k}$. During each pass, we store four quantities:
\begin{itemize}
  \item The string $X\in \{0,1\}^{k}$ under consideration in this pass. This needs $k$ bits.
  \item The index $i$ of current bit of the $k$-bit binary string $X$ under consideration in this pass. This needs $\log k$ bits.
  \item The index $j$ of the current edge under consideration in this pass. This needs $\log n$ bits.
  \item The set $S$. Since size of $S$ never exceeds $k$ throughout the algorithm, this can be done using $k\log n$ bits.
\end{itemize}
\end{proof}

Note that the total storage of Algorithm~\ref{alg:streaming-branching} is $O(k\log n)$ bits which is essentially optimal since the algorithm also outputs a vertex cover of size at most $k$ (if one exists).
The next natural question is whether one need exponential (in $k$) number of passes when we want to solve the $k$-VC problem using only $O(k\log n)$ bits. A lower bound of $(k/\log n)$ passes follows for such algorithms from the following result of Abboud et al.
\begin{theorem}\emph{(rewording of ~\cite[Thm 16]{abboud})}
Any algorithm for the $k$-VC problem which uses $S$ bits of space and $R$ passes must satisfy $RS \geq n^2$
\label{thm:abboud-lb}
\end{theorem}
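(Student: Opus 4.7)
The plan is to prove this space--pass tradeoff by a standard reduction from a two-party communication problem whose $R$-round communication complexity is $\Omega(n^2)$. Throughout, I take $k = \Theta(n)$, which is the regime in which the bound is meaningful (for small $k$ the trivial storage $O(n^2)$ bits dominates and $RS \geq n^2$ is trivial once $R\geq 1$).

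First, I would record the simulation lemma: given a streaming algorithm $\mc{A}$ for $k$-VC using $S$ bits of space and $R$ passes, any two-party decomposition of the edge stream $E = E_A \cup E_B$ yields a communication protocol of total cost $O(RS)$. Alice feeds $E_A$ into $\mc{A}$, sends the $S$-bit memory state to Bob, Bob feeds $E_B$, sends back the updated state, and this is repeated $R$ times; the answer at the end of $R$ passes is known to both parties. Thus a lower bound of $\Omega(n^2)$ bits on the total $R$-round communication of any protocol for the associated promise problem immediately yields $RS = \Omega(n^2)$.

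Second, I would choose the hard two-party problem to be a graph-encoded variant of set disjointness (or, equivalently, a perfect-matching detection problem) on input size $\Theta(n^2)$. Concretely, partition the $n$ vertices as $L \cup R$ with $|L|=|R|=n/2$, let Alice hold a bit-matrix $M^A \in \{0,1\}^{L\times R}$ and Bob hold $M^B \in \{0,1\}^{L\times R}$ with the promise that $|\{(i,j): M^A_{ij}=M^B_{ij}=1\}| \leq 1$, and encode $M^A, M^B$ as edges on $L \times R$. Combined with a carefully chosen gadget that fixes the vertex-cover value up to whether such an intersection edge exists, deciding $k$-VC decides disjointness. The distributional $R$-round communication complexity of disjointness on inputs of size $N$ is $\Omega(N)$ by the classical Kalyanasundaram--Schnitger / Razborov bound, so taking $N=\Theta(n^2)$ gives the desired $\Omega(n^2)$ total-communication lower bound independent of the number of rounds, hence $RS = \Omega(n^2)$.

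The main obstacle is ensuring the $\Omega(n^2)$ communication bound survives in the multi-round regime: single-round $\Omega(n^2)$ bounds are easy via \Index on $n^2$-bit strings, but they collapse to $O(\log n)$ once $R$ is unrestricted, so the reduction must be to a problem that is genuinely hard for many rounds. Disjointness on $\Theta(n^2)$ bits is the canonical such choice, and most of the technical work lies in designing the gadget encoding so that (i) Alice's and Bob's edges can both be streamed using the same underlying graph, (ii) the vertex-cover threshold $k$ changes by exactly one between the disjoint and intersecting cases, and (iii) the promise structure of unique disjointness is preserved by the reduction. With these in place, the simulation lemma closes the argument.
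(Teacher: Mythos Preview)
The paper does not prove this theorem at all; it is stated purely as a citation (a ``rewording'' of a result of Abboud et al.) and used as a black box to derive the $\Omega(k/\log n)$ pass lower bound mentioned just before it. So there is no proof in the paper to compare your attempt against.

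That said, two substantive issues with your proposal are worth flagging.

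First, your justification for restricting to $k=\Theta(n)$ is backwards. You write that for small $k$ ``the trivial storage $O(n^2)$ bits dominates and $RS\geq n^2$ is trivial once $R\geq 1$'', but $O(n^2)$ is an \emph{upper} bound on storage, not a lower bound. In fact the paper itself gives a one-pass $O(k^2\log n)$-bit algorithm for $k$-VC, so for constant $k$ we have $RS=O(\log n)\ll n^2$ and the inequality as literally written is \emph{false}. The intended statement (consistent with the paper's own use of it to derive an $\Omega(k/\log n)$ pass bound for $O(k\log n)$-bit algorithms, and with the table entry citing Abboud et al.) is $RS=\Omega(k^2)$; the ``$n^2$'' in the rewording appears to be a slip, or implicitly assumes $k=\Theta(n)$. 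You are right that $k=\Theta(n)$ is the relevant regime, but for the opposite reason from the one you give.

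Second, the crux of your reduction --- the ``carefully chosen gadget that fixes the vertex-cover value up to whether such an intersection edge exists'' --- is exactly the nontrivial step, and you have not supplied it. In your bipartite encoding on $L\cup R$, a shared $1$-bit means Alice and Bob add the \emph{same} edge $(i,j)$, so the union graph does not distinguish intersection from disjointness at all; some further structure is required. The standard per-coordinate gadget for embedding Disjointness into Vertex Cover uses $\Theta(N)$ vertices for $N$ coordinates, which here would give only $RS=\Omega(n)$. Packing $\Theta(n^2)$ coordinates of Disjointness into an $n$-vertex VC instance so that the optimum shifts by exactly one on intersection is precisely the difficulty, and without a concrete construction the argument is incomplete.
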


\subsection{Multipass \fps algorithm for $k$-VC using Iterative Compression}
\label{subsec:ic}

The technique of \emph{iterative compression} was introduced by Reed et al.~\cite{reed-smith-vetta-ic} to design the first FPT algorithm for the $k$-OCT problem\footnote{Is there a set of size at most $k$ whose deletion makes the graph odd cycle free, i.e. bipartite}. Since then, iterative compression has been an important tool in the design of faster  parameterized algorithms~\cite{dfvs-jacm,rajesh-talg,chen-fvs-ic} and kernels~\cite{ic-kernel}. In this section, using the technique of iterative compression, we design an algorithm (Algorithm~\ref{alg:streaming-ic}) for $k$-VC which uses $O(k\log n)$ bits but requires $O(k\cdot 2^{2k})$ passes. Although this algorithm is strictly worse (same storage, but higher number of passes) compared to Algorithm~\ref{alg:streaming-branching}, we include it here to illustrate that the technique of iterative compression can be used in the streaming setting.

\subsubsection{FPT algorithm for $k$-VC using iterative compression}
\label{app:ic-fpt}

We first define a variant problem where we are given some additional information in the input in the form of a vertex cover of size of size $k+1$ (just more than the budget).

\begin{center}
\noindent\framebox{\begin{minipage}{5in}
\textbf{\textsc{Compression-VC}}\\
\emph{Input}: A graph $G$, a positive integer $k$ and a vertex cover $T$ of size $k+1$\\
\emph{Parameter}: $k$\\
\emph{Question}: Does there exist a set $X\subseteq V(G)$ with $|X|\leq k$ such that $G\setminus X$ has no edges?
\end{minipage}}
\end{center}

\begin{lemma}
[\textbf{power of iterative compression}]
 $k$-VC can be solved by $k$ calls to an algorithm for the
\textsc{Compression-VC} problem.
\label{lem:ic}
\end{lemma}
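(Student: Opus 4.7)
The plan is to first obtain an initial vertex cover of $G$ of size at most $2k$ (or conclude that no vertex cover of size $k$ exists), and then iteratively invoke \textsc{Compression-VC} to shrink this cover by one vertex at each call, going from size $2k$ down to size $k$. Since each call to \textsc{Compression-VC} decreases the size of the working cover by at least one, the total number of invocations is at most $k$.

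First I would compute a maximal matching $M$ of $G$ greedily. If $|M| > k$, then every vertex cover must include at least one endpoint of each edge of $M$ and therefore has size strictly greater than $k$; output NO. Otherwise $|M| \leq k$, and by maximality the set $T_0 := V(M)$ is a vertex cover of $G$ of size at most $2k$ (any uncovered edge could otherwise be added to $M$). This bootstrapping step is the key point of the reduction: no call to \textsc{Compression-VC} is expended on it, yet it gives us a cover whose size is within an additive $k$ of the target.

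Starting from $T_0$, I run the following loop. At iteration $j$, let $T_{j-1}$ be the current vertex cover and let $s := |T_{j-1}|$. If $s \leq k$, halt and output $T_{j-1}$. Otherwise call \textsc{Compression-VC} on input $(G, s-1, T_{j-1})$, which is well-formed since $|T_{j-1}| = (s-1)+1$. If the call returns a set $T_j$ with $|T_j| \leq s-1$, continue with $T_j$; if it returns NO, output NO. Note that whenever we enter the loop body we have $s \geq k+1$, so the parameter $s-1$ of the call is at least $k$.

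For correctness, if every intermediate call succeeds then the final $T_j$ is a vertex cover of $G$ of size at most $k$, as desired. Conversely, if $G$ admits a vertex cover of size at most $k$, then it also admits vertex covers of every intermediate size in $\{k, k+1, \ldots, 2k-1\}$, so each call to \textsc{Compression-VC} must succeed and the loop terminates with a valid answer. If on the other hand some call returns NO at parameter $s-1 \geq k$, then $G$ has no vertex cover of size $s-1$ and hence none of size $k$, so NO is the correct output. For the call count, $|T_0| \leq 2k$ and the working cover strictly shrinks with each call, so \textsc{Compression-VC} is invoked at most $2k - k = k$ times. The main subtlety is precisely this accounting: only because the bootstrapping via a maximal matching produces a cover whose size exceeds $k$ by at most $k$ do we achieve the claimed bound of $k$ calls, rather than the $\Theta(n)$ calls that the standard vertex-by-vertex iterative compression would incur.
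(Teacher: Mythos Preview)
Your proposal is correct and follows essentially the same approach as the paper: both obtain an initial vertex cover of size at most $2k$ via a maximal matching (or answer NO if the matching exceeds $k$ edges), and then repeatedly invoke \textsc{Compression-VC} to shrink the cover one vertex at a time, for a total of at most $k$ calls. Your write-up is in fact slightly more careful than the paper's about the loop invariant and the correctness argument when a call returns NO, but the underlying argument is identical.
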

\begin{proof}
Let $e_1, e_2, \ldots, e_t$ be the edges of a maximal matching $M$ in $G$, and let $V_M$ be the set of vertices which are matched in $M$ . If $t>k$ then there is no vertex cover of size $k$ since any vertex cover needs to pick at least one vertex from every edge of the maximal matching. Hence, we have $t\leq k$. By maximality of $M$, it follows that the set $V_M$ forms a vertex cover of size $2t\leq 2k$. For each $2k\geq r\geq k+1$ we now run the \textsc{Compression-VC} problem to see whether there exists a vertex cover of size $r-1$. If the answer is YES, then we continue with the compression. On the other hand, if the the \textsc{Compression-VC} problem answers NO for some $2k\geq r\geq k+1$ then clearly there is no vertex cover of $G$ which has size $\leq k$.
\end{proof}

Now we solve the \textsc{Compression-VC} problem via the following problem whose only difference is that the vertex cover in the output must be disjoint from the one in the input:

\begin{center}
\noindent\framebox{\begin{minipage}{5.00in}
\textbf{\textsc{Disjoint-VC}}\\
\emph{Input}: A graph $G$, a positive integer $k$ and a vertex cover $T$ of size $k+1$\\
\emph{Parameter}: $k$\\
\emph{Question}: Does there exist a set $X\subseteq V(G)$ with $|X|\leq k$ such that $X\cap T=\emptyset$ and $G\setminus X$ has no edges?
\end{minipage}}
\end{center}

\begin{lemma}[\textbf{adding disjointness}] \textsc{Compression-VC} can be solved by $O(2^{|T|})$ calls to an algorithm for the
\textsc{Disjoint-VC} problem. \label{lem:disjoint}
\end{lemma}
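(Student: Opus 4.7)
The plan is to enumerate over all subsets $T' \subseteq T$ of the given vertex cover; intuitively $T'$ is a guess for $X \cap T$, the portion of the unknown size\hy{}$k$ vertex cover that overlaps with $T$. Since $|T|=k+1$, there are $2^{k+1} = O(2^{|T|})$ such guesses. For each guess $T'$ with $|T'| \leq k$, I would construct an instance of \textsc{Disjoint-VC} as follows: let $G' := G - T'$ (delete the vertices $T'$ together with their incident edges), let $k' := k - |T'|$, and let $T^\star := T \setminus T'$. Observe that $T^\star$ is a vertex cover of $G'$ (since $T$ is a vertex cover of $G$) and that $|T^\star| = (k+1) - |T'| = k'+1$, so $(G', k', T^\star)$ is a valid instance of \textsc{Disjoint-VC}. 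Call the \textsc{Disjoint-VC} oracle on this instance; return YES as soon as any call returns YES, and return NO if all calls return NO.

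\textbf{Correctness.} For the forward direction, suppose some call returns YES with witness $X' \subseteq V(G')$, i.e.\ $X'$ is a vertex cover of $G'$ of size at most $k'$ disjoint from $T^\star$. Then $X := X' \cup T'$ has size at most $k' + |T'| = k$ and covers every edge of $G$: edges incident to $T'$ are covered by $T'$, and the remaining edges lie in $G'$ and are covered by $X'$. Conversely, if $G$ has a vertex cover $X$ of size at most $k$, set $T' := X \cap T$; then $|T'| \leq k$, and $X \setminus T' = X \setminus T$ is a subset of $V(G) \setminus T \subseteq V(G')$ of size at most $k - |T'| = k'$, is disjoint from $T^\star = T \setminus T'$, and covers every edge of $G' = G - T'$. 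Hence the \textsc{Disjoint-VC} call for this particular $T'$ returns YES.

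\textbf{Cost.} The algorithm makes one \textsc{Disjoint-VC} call per subset of $T$ with at most $k$ elements, which is at most $2^{|T|} = 2^{k+1}$ calls. The only potentially tricky step is verifying that the ``remaining'' instance is a genuine instance of \textsc{Disjoint-VC}: namely, that $T^\star$ is still a vertex cover of $G'$ and that $|T^\star|=k'+1$; both are immediate from the setup above. I do not expect any genuine obstacle in this proof — it is the standard textbook reduction from compression to the disjoint variant, transcribed to the specific case of vertex cover.
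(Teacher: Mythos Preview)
Your proposal is correct and follows essentially the same approach as the paper: both guess the intersection $Y = X \cap T$ (your $T'$), delete it, and reduce to a \textsc{Disjoint-VC} instance $(G\setminus Y,\, T\setminus Y,\, k-|Y|)$, with the same two-direction correctness argument. You are in fact slightly more careful than the paper in explicitly verifying that $T\setminus T'$ is a vertex cover of $G-T'$ of the right size, which the paper leaves implicit.
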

\begin{proof}
  Given an instance $I=(G,T,k)$ of \textsc{Compression-VC} we
  guess the intersection $Y$ of the given vertex $T$ of size $k+1$ and the desired vertex cover
  $X$ of size $k$ in the output. We have at most $2^{|T|}-1$ choices for $Y$ since we can have all possible subsets of $T$ except $T$ itself. Then
  for each guess for $Y$, we solve the \textsc{Disjoint-VC} problem for the instance $I_Y=(G\setminus
  Y,T\setminus Y,k-|Y|)$. It is easy to see that if $X$ is a
  solution for instance $I$ of \textsc{Compression-VC}, then
  $X\setminus Y$ is a solution of instance $I_Y$ of \textsc{Disjoint-VC} for $Y=T\cap X$. Conversely, if $Z$ is a
  solution to some instance $I_Y=(G\setminus Y,T\setminus Y,k-|Y|)$ of \textsc{Disjoint-VC}, then $Z\cup Y$ is a solution for the instance $I=(G,T,k)$ of \textsc{Compression-VC}.
\end{proof}

Using a maximal matching, we either start with a vertex cover of size $\leq 2k$ or we can answer that $G$ has no vertex cover of size $\leq k$. Hence, any algorithm for \textsc{Disjoint-VC} gives an algorithm for
the $k$-VC problem, with an additional blowup of $O(2^{2k}\cdot k)$.
Since our objective is to show that the $k-$VC problem is FPT, then it
is enough to give an FPT algorithm for the \textsc{Disjoint-VC}
problem (which has additional structure that we can exploit!).
In fact we show that the \textsc{Disjoint-VC} problem can be solved in polynomial time.
%
%
%
\begin{lemma}
The \textsc{Disjoint-VC} problem can be solved in polynomial time.
\end{lemma}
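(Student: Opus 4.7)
The plan is to exploit the structural constraint that comes from combining ``$T$ is a vertex cover'' with the disjointness requirement $X \cap T = \emptyset$. Since $T$ is a vertex cover of $G$, the set $V(G)\setminus T$ is automatically an independent set. The condition $X \cap T = \emptyset$ forces $X \subseteq V(G)\setminus T$, which almost uniquely pins down what $X$ can be.

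First I would partition the edges of $G$ into two classes: (i) edges with both endpoints in $T$, and (ii) edges with exactly one endpoint in $T$ and one in $V(G)\setminus T$. There are no edges with both endpoints in $V(G)\setminus T$, precisely because $V(G)\setminus T$ is independent. Edges of class (i) cannot be covered by any $X$ disjoint from $T$, so the first step of the algorithm is to check whether $T$ is itself an independent set; if $T$ contains any edge, return \textsf{NO} immediately.

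Otherwise every edge of $G$ has exactly one endpoint $u \in T$ and one endpoint $v \in V(G)\setminus T$, and since $u \notin X$ such an edge must be covered by $v$. Hence $X$ is \emph{forced} to contain every vertex in $V(G)\setminus T$ that has at least one neighbour in $T$. Writing $X^* := \{\, v \in V(G)\setminus T : N(v) \cap T \neq \emptyset\,\}$, this set is simultaneously a feasible vertex cover of $G$ (by the partition above) and a subset of every feasible solution, hence the unique minimum-size one. The algorithm therefore outputs $X^*$ if $|X^*| \leq k$ and \textsf{NO} otherwise; checking independence of $T$, computing $X^*$ by scanning edges and recording which vertices of $V(G)\setminus T$ see $T$, and comparing $|X^*|$ to $k$, are all clearly polynomial-time operations.

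I do not expect any genuine obstacle here: once one notices that $X$ must live inside the independent set $V(G)\setminus T$, the problem collapses to a purely local check at each vertex of $V(G)\setminus T$, and in fact the solution, when it exists, is unique. The role of this lemma is simply to close the iterative-compression recipe: together with Lemma~\ref{lem:ic} and Lemma~\ref{lem:disjoint}, it yields an \FPT{} algorithm for $k$-VC with a $O(2^{2k}\cdot k)$ multiplicative overhead, which is what Algorithm~\ref{alg:streaming-ic} will then simulate in the streaming setting.
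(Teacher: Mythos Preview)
Your proposal is correct and follows essentially the same argument as the paper: both observe that $V(G)\setminus T$ is independent, that any edge inside $T$ forces a \textsf{NO}, and that otherwise the forced set $X^{*}=\{v\in V(G)\setminus T : N(v)\cap T\neq\emptyset\}$ is the unique minimal solution, so one just compares $|X^{*}|$ with $k$. The only cosmetic difference is that you phrase the case split via an explicit edge partition, whereas the paper states the same observations more tersely.
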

\begin{proof}
Let $(G,T,k)$ be an instance of \textsc{Disjoint-VC}. Note that
$G\setminus T$ has no edges since $T$ is a vertex cover.
Meanwhile, if $G[T]$ has even a single edge, then answer is NO since we cannot pick any vertices from $T$ in the vertex cover. So the only edges are between $T$ and $G\setminus T$. Since we cannot pick any vertex from $T$ in vertex cover, we are forced to pick all vertices in $G\setminus T$ which have neighbors in $T$. Formally, we have to pick the set $X=\{x\notin T\ :\ \exists y\in T\ \text{such that}\ x-y\in E(G) \}$. Note that picking $X$ is both necessary and sufficient. So it simply remains to compare $|X|$ with $k$ and answer accordingly.
\end{proof}

Consequently, we obtain a $O(2^{2k}\cdot k\cdot n^{O(1)})$ time algorithm for $k$-VC
by composing these two reductions.

\subsubsection{Simulating the FPT algorithm in streaming using multiple passes}
\label{app:ic-streaming}

In this section, we show how to simulate the FPT algorithm of the previous section in the
multi-pass streaming model.
First, let us fix some order on all subsets of $[n]$.

\begin{definition}
Let $U=\{u_1, u_2, \ldots, u_n\}$ and $k\leq n$. Let $\mathcal{U}_k$ denote the set of all $\sum_{i=0}^{k} \binom{|U|}{i}$ subsets of $U$ which have at most $k$ elements, and $\dict_{U_k}$ be the dictionary ordering on $\mathcal{U}_k$. Given a subset $X\in \mathcal{U}_k$, let $\dict_{U_k}(\nextt(X))$ denote the subset that comes immediately after $X$ in the ordering $\dict_U$. We denote the last subset in the dictionary order of
$\mathcal{U}_k$ by $\text{Last}(\mathcal{U}_k)$ and use the notation that $\dict_{U_k}(\text{Last}(\mathcal{U}_k))=\spadesuit$.
\end{definition}

\begin{algorithm}[t]
\caption{Multipass Streaming Algorithm for $k$-VC using Iterative Compression \label{alg:streaming-ic}}
\textbf{Input:} \mbox{An undirected graph $G=(V,E)$, integer $k$.}\\
\textbf{Output:} \mbox{A vertex cover $S$ of $G$ of size at most $k$ (if one exists), and NO otherwise}\\
\textbf{Storage:} $i$, $S$, $Y$, $V_M$
\begin{algorithmic}[1]

\State Find a maximal matching $M$ (upto size $k$) in 1 pass which saturates the vertices $V_M$ \label{step:mm}

\State If $|M|$ exceeds $k$, then return NO and abort

\State Let $S=V_M$

\For{$i=|V_M|$ to $k+1$}
        \State $Y=\emptyset$
        \While{$Y\in \mathcal{S}_k, Y\neq \spadesuit$}
             \If{$|\{x\in V\setminus S\ :\ \exists y\in S\setminus Y\ \text{s.t.}\ \{x,y\}\in E(G)\}|\leq k-|Y|$} \label{step:disjoint-VC}
                \State $S\leftarrow Y\cup \{x\in V\setminus
                S\ :\ \exists y\in S\setminus Y\ \text{s.t.}\ x-y\in
                E(G)\}$  \label{step:disj-VC}
                \Comment{Requires a pass through the data}
                \State Break 
                \Comment{Found a solution, and reduce value of $i$ by $1$}
             \Else
             \State $Y\leftarrow \dict_{\mathcal{S}_k}(\nextt(Y))$
             \Comment{Try the next subset}
             \EndIf
        \EndWhile
        \If{$Y=\spadesuit$}
            \State Return NO and abort
        \EndIf

\EndFor

        \If{$i=k$}
            \State Return $S$
        \EndIf



%

\end{algorithmic}
\end{algorithm}

We give our multipass algorithm as Algorithm~\ref{alg:streaming-ic},
whose correctness follows from Section~\ref{app:ic-fpt}.
We now analyze the storage and number of passes required.

We first use one pass to store a maximal matching $M$ (upto $k$ edges). The remaining number
of passes used by the algorithm is at most $2^{2k}\cdot k=O(2^{2k}\cdot k)$ since we
have $(k)$ iterations over the index $i$, we have $2^{2k}$
choices for the set $Y\in \mathcal{S}_k$ (since $|S|\leq 2k$) and we need one pass for each execution of
Step~\ref{step:disjoint-VC} and Step~\ref{step:disj-VC}. Throughout the algorithm, we store three quantities:
\begin{itemize}
    \item We store the vertices $V_M$  saturated by a maximal matching $M$ (but only until the size of $M$ exceeds $k$ in which case we output NO). This needs at most $2k\log n$ bits
  \item The index $i$ of current iteration. This needs $\log n$ bits.
  \item The set $S$. Since size of $S$ never exceeds $2k$ throughout the algorithm, this can be done using $2k\log n$ bits.
  \item The current subset $Y\subseteq \mathcal{S}_k$ under consideration for being the intersection of $S$ and new potential VC of size $\leq k$. Since $|S|\leq 2k$ and we store $S$ explicitly, it follows that we can store $Y$ and find $\nextt(Y)$ using $O(k\log n)$ bits.
\end{itemize}

Hence, the total storage of the algorithm is $O(k\log n)$ bits which
is essentially optimal since the algorithm also outputs a vertex cover
of size at most $k$ (if one exists).

\subsection{Minor-Bidimensional problems belong to \semips}
\label{subsec:bidimensionality}

The  theory of  bidimensionality~\cite{mth-bidimensionality-jacm,mth-bidimensionality-survey} provides a general technique for designing (subexponential) \FPT for \NP-hard graph  problems on various graph classes. First, we introduce some graph theoretic concepts.

\begin{definition}[\textbf{treewidth}]
Let $G$ be a given undirected graph. Let $\T$ be a tree and
$B:V(\T)\rightarrow 2^{V(G)}$. The pair $(\T,B)$ is a \emph{tree
  decomposition} of an undirected graph $G$ if every vertex $x\in
V(\T)$ of the tree $\T$
has an assigned set of vertices $B_x \subseteq V(G)$ (called a bag) such that the following properties are satisfied:
\begin{itemize}
\item $\textbf{(P1)}$: $\bigcup_{x\in V(\T)} B_x = V (G)$.
\item $\textbf{(P2)}$: For each $\{u,v\}\in E(G)$, there exists an $x\in V(\T)$ such that $u, v\in B_x$.
\item $\textbf{(P3)}$: For each $v\in  V(G)$, the set of vertices of $\T$ whose bags contain $v$ induce a connected subtree of $\T$.
\end{itemize}
The \emph{width} of a tree decomposition $(\T,B)$ is $\max_{x\in V(\T)} |B_x|-1$. The treewidth of a graph $G$, usually denoted by $\tw(G)$, is the minimum width over all tree decompositions of $G$.
\end{definition}

Intuitively, the treewidth of a graph captures how tree-like it is. Trees (and more generally forests) have treewidth $1$.

\begin{definition}[\textbf{minor}]
  Let $H,G$ be two undirected graphs. We say that $H$ is a minor of $G$ if $H$ can be obtained from $G$ by a sequence of edge deletions, vertex deletions or edge contractions.
\end{definition}


One of the foundational results of graph theory is the Excluded Grid Minor Theorem of Robertson and Seymour~\cite{rs86} which states that large treewidth forces large grid minors:
\begin{theorem}
\label{thm:excluded-minor}
\emph{~\cite{rs86}} There is a function $f: \mathbb{N}\rightarrow \mathbb{N}$ such that for $r\geq 1$ any graph of treewidth $\geq f(r)$ contains the $r\times r$ grid as a minor.
\end{theorem}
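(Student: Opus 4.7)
The plan is to leverage the duality theory between treewidth and highly connected combinatorial obstructions. Since the statement only asks for the existence of \emph{some} function $f$, I will aim for the cleanest qualitative argument rather than optimizing the bound. The three main ingredients are: a duality that turns ``high treewidth'' into a concrete combinatorial object, extraction of a large well-linked vertex set from this object, and a routing argument that carves out an $r \times r$ grid from the well-linked set.

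First, I would invoke the Seymour--Thomas duality: $\tw(G) \geq k$ if and only if $G$ contains a \emph{bramble} (equivalently, a \emph{haven}) of order exceeding $k$. Here a bramble is a family $\mathcal{B}$ of pairwise-touching connected subgraphs of $G$, and its order is the minimum size of a vertex set hitting every member of $\mathcal{B}$. So a graph of treewidth at least $f(r)$ gives me a bramble of order greater than $f(r)$, which will be the working combinatorial object throughout the rest of the proof.

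Next, from a bramble of sufficiently large order I would extract a \emph{well-linked} set $W \subseteq V(G)$, that is, a set of vertices such that for every balanced partition $W = A \cup B$ there exist $|A|$ vertex-disjoint paths between $A$ and $B$ in $G$. Concretely, I would pick representative vertices of carefully chosen bramble members and use the touching property, together with Menger's theorem, to certify the required linkages. The size of $W$ that I can force scales with the bramble order, which is where the growth rate of $f$ enters.

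Finally, I would build the $r \times r$ grid as a minor by an iterative path-routing scheme. Starting from $W$, one routes a first set of $r$ vertex-disjoint ``column'' paths, then successively routes ``row'' paths that cross the columns in the prescribed order, each time using the residual well-linkedness of $W$ in the graph with the already-used material deleted; after contracting suitable pieces one obtains an $r \times r$ grid minor. The main obstacle is precisely this last step: each rerouting can damage the structure just exploited, so one must start with $W$ substantially larger than $r$ and be willing to discard vertices to maintain the linkage invariants. Robertson and Seymour's original argument circumvents this with the full machinery of \emph{tangles} in place of brambles and produces only a tower-type bound on $f$; polynomial bounds due to Chekuri, Chuzhoy, Kawarabayashi and others require considerably more technical work, but are not needed for the purely qualitative statement here.
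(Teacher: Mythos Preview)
The paper does not prove this theorem at all: it is stated with a citation to Robertson and Seymour~\cite{rs86} and immediately used as a black box (the paper only discusses the history of the bounds on~$f$ and then applies the result via Lemma~\ref{lem:edge-characterization-general-graphs}). So there is no ``paper's own proof'' to compare your proposal against.

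As for your sketch itself, it is a reasonable high-level outline of one modern route to the Excluded Grid Theorem, but note two things. First, the bramble/haven duality you invoke in step~1 is due to Seymour and Thomas and postdates the original Robertson--Seymour proof, which worked directly with tangles; you partially acknowledge this at the end. Second, your step~3 is where essentially all the difficulty lies, and your description (``iterative path-routing scheme \ldots\ each rerouting can damage the structure'') is really just a restatement of the problem rather than a plan for solving it. Getting from a large well-linked set to a grid minor is genuinely hard and is exactly what the heavy machinery (path-of-sets systems, crossbar lemmas, etc.) is for; a proposal that stops at ``route rows across columns and be willing to discard vertices'' has not yet identified the key idea. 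For the purposes of this paper, though, none of this matters: the theorem is imported wholesale from the literature and you should treat it the same way.
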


Robertson and Seymour~\cite{rs86} did not provide an explicit bound on
$f$, but proved it was bounded by a tower of exponentials.
The first explicit bounds on $f$ were given by Robertson, Seymour and
Thomas~\cite{rst94} who showed that $f(r)=2^{O(r^5)}$ suffices and
there are graphs which force $f(r)=\Omega(r^{2}\cdot \log r)$. The
question whether $f(r)$ can be shown to be bounded by a polynomial in $r$
was open for a long time until Chekuri and
Chuzhoy~\cite{julia-chanrda-poly-grid-minor} showed that
$f(r)=O(r^{98}\cdot \log^{O(1)} r)$ suffices. The current best bound
is $f(r)=O(r^{9}\cdot \log^{O(1)} r)$ due to
Chuzhoy and Tan~\cite{julia-soda-19}. Henceforth, for ease of presentation, we will use the weaker bound $f(r)=O(r^{10})$.

The theory of bidimensionality~\cite{mth-bidimensionality-jacm,mth-bidimensionality-survey} exploits the idea that many problems can be solved efficiently via dynamic programming on graphs of bounded treewidth, and have large values on grid-like graphs.

\begin{definition}[\textbf{minor-bidimensional}]
A graph problem $\Pi$ is said to be $g(r)$-minor-bidimensional if
\begin{itemize}
  \item The value of $\Pi$ on the $r\times r$ grid is $\geq g(r)$
  \item $\Pi$ is closed under taking minors, i.e., the value of $\Pi$ does not increase under the operations of vertex deletions, edge deletions, edge contractions.
\end{itemize}
\label{dfn:g(r)-minor-bidimensional}
\end{definition}

Hence, we obtain a ``win-win'' approach for designing FPT algorithms for bidimensional problems as follows:
\begin{itemize}
  \item Either the graph has small treewidth and we can then use dynamic programming algorithms for bounded treewidth graphs; or
  \item The treewidth is large which implies that the graph contains a large grid as a minor. This implies that the solution size is large, since the parameter is minor-bidimensional.
\end{itemize}
Several natural graph parameters are known to be minor-bidimensional. For example, treewidth is $\Omega(r)$-minor-dimensional and Feedback  Vertex Set, Vertex Cover, Minimum Maximal Matching, Long Path, etc are $\Omega(r^2)$-minor-bidimensional.
To design parameterized streaming algorithms, we will replace the dynamic programming step for bounded treewidth graphs by simply storing all the edges of such graphs.

The following (folklore) lemma shows that bounded treewidth graphs cannot have too many edges.

\begin{lemma}
\label{lem:edge-upper-bound-in-bounded-tw}
Let $G=(V,E)$ be a graph on $n$ vertices. Then $|E(G)|\leq \tw(G)\cdot |V(G)|$
\end{lemma}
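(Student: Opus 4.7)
The plan is to prove the stronger fact that any graph $G$ of treewidth $w$ is $w$\textnormal{-degenerate} (every subgraph has a vertex of degree $\le w$), and then obtain the edge bound by a standard degeneracy argument. Concretely, I will proceed by induction on $|V(G)|$, using the tree decomposition directly to find a low-degree vertex that I can peel off.

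For the base case $|V(G)| = 1$, the inequality is trivial. For the inductive step, fix a tree decomposition $(\T, B)$ of $G$ of width exactly $\tw(G) = w$. The first step is to locate a leaf bag $B_x$ with parent bag $B_y$ and a vertex $v \in B_x \setminus B_y$: if every leaf bag were contained in its parent, we could just delete the leaf from $\T$ without losing any information, so after repeatedly trimming such redundant leaves we may assume such a $v$ exists (or else $\T$ is a single node, in which case $B_x = V(G)$ and we get $|E(G)| \leq \binom{w+1}{2} \leq w\cdot |V(G)|$ since $|V(G)|=w+1$).

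The second step is to observe that this $v$ has low degree. By property \textbf{(P3)}, the set of bags containing $v$ is connected in $\T$; since $v \notin B_y$ and $x$ is a leaf, $v$ appears only in $B_x$. By property \textbf{(P2)}, every neighbour of $v$ in $G$ must share a bag with $v$, and therefore lies in $B_x$. Hence $\deg_G(v) \leq |B_x| - 1 \leq w$.

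The final step is to remove $v$ and apply induction. Deleting $v$ from $G$ and from the bag $B_x$ yields a tree decomposition of $G - v$ of width at most $w$, so $\tw(G-v) \leq w$. The inductive hypothesis gives $|E(G-v)| \leq w\cdot(|V(G)|-1)$, and since removing $v$ erases at most $w$ edges we conclude $|E(G)| \leq w\cdot(|V(G)|-1) + w = w\cdot |V(G)|$. The only mildly delicate point is justifying the existence of the vertex $v \in B_x \setminus B_y$, which is a routine cleanup of the tree decomposition; everything else is an immediate consequence of the defining properties \textbf{(P1)}--\textbf{(P3)} and induction.
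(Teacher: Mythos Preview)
Your proof is correct and follows essentially the same approach as the paper: both arguments locate a vertex of degree at most $w$ by examining a leaf bag of the tree decomposition (the paper uses a minimality-of-$|\T|$ argument, you use an equivalent trimming argument) and then peel it off and recurse. Your handling of the single-node tree decomposition case is arguably slightly more careful than the paper's.
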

\begin{proof}
Let $\tw(G)=k$. We first show that there is a vertex $v\in G$ whose
degree in $G$ is at most $k$. Among all tree-decompositions of $G$ of
width $k$, let $(\T,B)$ be one which minimizes $|T|$. Since $T$ is a
tree, it has a leaf say $t$. Let $t'$ be the unique neighbor of $t$ in
$T$. Minimality of $|T|$ implies that $B_t\nsubseteq B_{t'}$, since
otherwise deleting $t$ (and the bag $B_t$) would still give a
tree-decomposition of $G$. Hence, there is a vertex $v\in B_t$ and
$v\notin B_{t'}$. This implies that all neighbors of $v$ in $G$ must
be in the bag $B_{t}$, i.e., $v$ has degree at most $|B_t|-1=k$. Now,
delete the vertex $v$. It follows from the definition of treewidth that $\tw(G-v)\leq \tw(G)=k$, and hence we can conclude that $G-v$ also has a vertex of degree at most $k$. Continuing this way, we obtain $|E(G)|\leq \tw(G)\cdot |V(G)|$.
\end{proof}

Note that cliques are a tight example (up to factor $2$) for the bound in Lemma~\ref{lem:edge-upper-bound-in-bounded-tw}.



\begin{lemma}
\label{lem:edge-characterization-general-graphs}
Let $\Pi$ be a $g(r)$-minor-dimensional problem. Any graph $G$ having more than $O((g^{-1}(k+1))^{10}\cdot |V(G)|)$ edges is a NO (resp. YES) instance of $k$-$\Pi$ if $\Pi$ is a minimization (resp. maximization) problem.
\end{lemma}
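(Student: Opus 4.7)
The plan is to prove the statement by a direct application of the win-win bidimensionality framework, combining the excluded grid minor theorem with the two defining properties of a minor-bidimensional problem, all via the contrapositive of Lemma~\ref{lem:edge-upper-bound-in-bounded-tw}.

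\medskip

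\noindent\textbf{Step 1 (Edges force treewidth).} Assume $|E(G)| > c \cdot (g^{-1}(k+1))^{10} \cdot |V(G)|$, where $c$ is the constant hidden inside the $O(r^{10})$ bound of Theorem~\ref{thm:excluded-minor}. By the contrapositive of Lemma~\ref{lem:edge-upper-bound-in-bounded-tw}, this forces
\[
\tw(G) \;>\; c \cdot (g^{-1}(k+1))^{10}.
\]

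\noindent\textbf{Step 2 (Treewidth forces a grid minor).} Set $r := g^{-1}(k+1)$. Then $\tw(G) > c \cdot r^{10} \geq f(r)$, so by Theorem~\ref{thm:excluded-minor} the graph $G$ contains the $r \times r$ grid as a minor.

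\medskip

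\noindent\textbf{Step 3 (Grid forces large value).} By the first bullet of Definition~\ref{dfn:g(r)-minor-bidimensional}, the value of $\Pi$ on the $r \times r$ grid is at least $g(r) = g(g^{-1}(k+1)) = k+1$. Since $\Pi$ is closed under taking minors (second bullet of Definition~\ref{dfn:g(r)-minor-bidimensional}), the value of $\Pi$ on $G$ is no less than the value of $\Pi$ on any minor of $G$, and in particular no less than the value on the grid minor. Therefore the value of $\Pi$ on $G$ is at least $k+1$.

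\medskip

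\noindent\textbf{Step 4 (Conclusion).} If $\Pi$ is a minimization problem, then the $k$-$\Pi$ question asks whether the value is at most $k$; since the value is at least $k+1$, we return NO. If $\Pi$ is a maximization problem, then $k$-$\Pi$ asks whether the value is at least $k$; since the value is at least $k+1 > k$, we return YES.

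\medskip

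None of the steps should be particularly hard. The only mild subtlety is keeping the direction of ``closed under minors'' straight — the value weakly decreases when passing to a minor, so the grid minor inside $G$ provides a \emph{lower} bound on the value of $\Pi$ on $G$, which is exactly what is needed. Setting the radius $r = g^{-1}(k+1)$ (rather than $g^{-1}(k)$) is what pushes the value strictly above the budget $k$ and lets a single threshold cover both the minimization and maximization cases uniformly.
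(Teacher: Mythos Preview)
Your proposal is correct and follows essentially the same approach as the paper: apply Lemma~\ref{lem:edge-upper-bound-in-bounded-tw} in contrapositive to get large treewidth, invoke the polynomial excluded-grid theorem to obtain a $g^{-1}(k+1)\times g^{-1}(k+1)$ grid minor, and then use the two properties of minor-bidimensionality to force the value of $\Pi$ to be at least $k+1$. The paper's proof is terser but structurally identical.
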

\begin{proof}
Suppose $G$ has more than $\tau (g^{-1}(k+1))^{10}\cdot n$ edges. By Lemma~\ref{lem:edge-upper-bound-in-bounded-tw}, it follows that $\tw(G)\geq \tau (g^{-1}(k+1))^{10}$. This implies $G$ has the $g^{-1}(k+1)\times g^{-1}(k+1)$ grid as a minor~\cite{julia-soda-19}. Since $\Pi$ is minor-dimensional, this implies that the value of $\Pi$ is at least $g(g^{-1}(k+1)) = k+1$, i.e., $G$ is a NO (resp. YES) instance of $k$-$\Pi$ if $\Pi$ is a minimization (resp. maximization) problem.
\end{proof}

Lemma~\ref{lem:edge-characterization-general-graphs} implies streaming algorithms for $\Pi$ in both insertion-only and insertion-deletion streams. First, we define a data structure that we need.

\begin{definition}[\textbf{$k$-sparse recovery algorithm}]
\label{dfn:k-sparse}
 A $k$-sparse recovery algorithm is a data structure which accepts
 insertions and deletions of elements from $[n]$ so that, if the current number of elements stored in it is at most $k$, then these can be recovered in full.
\end{definition}

Barkay et al.~\cite{ely-k-sparse} showed that a $k$-sparse recovery algorithm can be constructed deterministically using $\Ot(k)$ space.

\begin{theorem}
Let $M\geq 1$. Then we can check if a graph stream contains at most $M$ edges (and also store all these edges) using
\begin{itemize}
  \item $O(M)$ space in insertion-only streams
  \item $\Ot(M)$ space in insertion-deletion streams
\end{itemize}
\label{thm:storing-edges-meta-thm}
\end{theorem}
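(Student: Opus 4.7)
The plan is to handle the two stream models separately, with the insertion-only case being essentially trivial and the insertion-deletion case reducing directly to the sparse-recovery primitive from Definition~\ref{dfn:k-sparse}.

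For insertion-only streams, I would simply maintain an explicit list $L$ of the distinct edges seen so far, together with the counter $|L|$. On each arriving edge $e$, since we allow unbounded per-update computation (Remark~\ref{rem:unbounded-computation}), we may check in ``free'' time whether $e \in L$; if not, append $e$ to $L$. As soon as $|L|$ would exceed $M$, abort and report ``more than $M$ edges''; otherwise, at the end of the stream $L$ is exactly the edge set and $|L| \le M$. The storage is at most $M+1$ edges, i.e., $O(M)$ words, matching the claim.

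For insertion-deletion streams, I would instantiate a single $M$-sparse recovery data structure over the universe $\binom{V(G)}{2}$ of potential edges (of size $\binom{n}{2}$), using the deterministic $\widetilde{O}(M)$-space construction of Barkay et al.~\cite{ely-k-sparse} cited immediately before the theorem. Each stream update (insertion or deletion of an edge $e$) is forwarded directly to the data structure. At the end of the stream, we invoke the recovery procedure: by Definition~\ref{dfn:k-sparse}, if the currently present edge set has size at most $M$ then it is returned in full, and otherwise the data structure reports failure, in which case we output ``more than $M$ edges''. The space bound is $\widetilde{O}(M)$ by the guarantee of~\cite{ely-k-sparse}.

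There is no real obstacle in this proof; the only small point to be careful about is that in the insertion-only case we must suppress duplicate edges (handled by the membership check into $L$), and in the insertion-deletion case we must encode each edge as a single element of a universe of size $\binom{n}{2}$ so that the sparse-recovery black box applies verbatim. Both are immediate.
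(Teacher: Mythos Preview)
Your approach is essentially identical to the paper's: store edges directly in the insertion-only case, and use the $M$-sparse recovery primitive of~\cite{ely-k-sparse} in the insertion-deletion case.

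One small technical point: in the insertion-deletion case you rely on the sparse-recovery structure to \emph{report failure} when more than $M$ elements are present, but Definition~\ref{dfn:k-sparse} as stated only promises correct recovery when the current set has size at most $M$; it says nothing about what happens otherwise (it could in principle return garbage rather than a failure flag). The paper sidesteps this by additionally maintaining an explicit counter of the number of edges currently present (incremented on insertions, decremented on deletions), and only invoking recovery when the counter is at most $M$. This costs $O(\log n)$ extra bits and removes the unstated assumption. With that one-line fix your argument is complete and matches the paper's.
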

\begin{proof}
The algorithm in insertion-only streams simply stores all the edges. It also maintains a counter (using $O(\log n)$ bits) to count how many edges have been seen so far. If the counter exceeds $M$ then the graph has more than $M$ edges. Otherwise, we have stored the entire graph which uses $O(M)$ space since the number of edges is $\leq M$.

In insertion-deletion streams we also keep a counter (to count how many
edges are currently present) and also maintain an $M$-sparse recovery algorithm $\mathcal{X}$. At the end of the stream, if the counter exceeds $M$ then the graph stream has more than $M$ edges. Otherwise we recover the whole graph by extracting the $\leq M$ edges from $\mathcal{X}$. The counter can be implemented in $O(\log n)$ bits, and $\mathcal{X}$ can be implemented in $\Ot(M)$ space~\cite{ely-k-sparse}.
\end{proof}

Now we are ready to show the main theorem of this section: minor-bidimensional problems belong to the class \semips.

\begin{theorem}\label{thm:bidim-general}
(\textbf{minor-bidimensional problems are in \semips})
Let $\Pi$ be a $g(r)$-minor-dimensional problem. Then the $k$-$\Pi$ problem on graphs with $n$ vertices can be solved using
\begin{itemize}
  \item $O((g^{-1}(k+1))^{10}\cdot n)$ space in insertion-only streams
  \item $\Ot((g^{-1}(k+1))^{10}\cdot n)$ space in insertion-deletion streams
\end{itemize}
\end{theorem}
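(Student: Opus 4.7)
The plan is to combine the structural dichotomy established in Lemma~\ref{lem:edge-characterization-general-graphs} with the edge-storage meta-algorithm of Theorem~\ref{thm:storing-edges-meta-thm}. Concretely, I would set the threshold $M := c\cdot (g^{-1}(k+1))^{10} \cdot n$, where $c$ is the constant hidden in the $O(\cdot)$ of Lemma~\ref{lem:edge-characterization-general-graphs}. The algorithm then runs the appropriate procedure from Theorem~\ref{thm:storing-edges-meta-thm} with parameter $M$: in insertion-only streams we simply store edges while a counter tracks their number, using $O(M)$ space; in insertion-deletion streams we maintain a counter together with an $M$-sparse recovery structure, using $\Ot(M)$ space.

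Having processed the stream, we branch on the counter. If the number of edges exceeds $M$, then by Lemma~\ref{lem:edge-characterization-general-graphs} the instance is immediately resolved: output \No if $\Pi$ is a minimization problem and \Yes if $\Pi$ is a maximization problem. Otherwise, Theorem~\ref{thm:storing-edges-meta-thm} guarantees that we have recovered the entire edge set of $G$ at the end of the stream. Since by Remark~\ref{rem:unbounded-computation} we are allowed unbounded post-processing computation, we can now solve the $k$-$\Pi$ instance exactly from the stored graph (for instance, by exhaustive search, or by running any offline FPT algorithm for $\Pi$).

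The space bound is dominated by the edge-storage data structure, which uses $O(M) = O((g^{-1}(k+1))^{10}\cdot n)$ words in the insertion-only case and $\Ot(M) = \Ot((g^{-1}(k+1))^{10}\cdot n)$ words in the insertion-deletion case; the auxiliary counter adds only $O(\log n)$ bits. Correctness follows directly: in the ``many edges'' branch we invoke Lemma~\ref{lem:edge-characterization-general-graphs}, and in the ``few edges'' branch the stored graph is exactly $G$, so any correct offline decision procedure for $k$-$\Pi$ returns the right answer.

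There is no real obstacle here — the two ingredients have been set up precisely to be composed in this way. The only subtlety worth flagging is making sure that the minor-bidimensionality of $\Pi$ is used in the correct direction for both minimization and maximization variants (since taking minors does not increase the value, a large grid minor forces a large value, which is a \No{} witness for minimization and a \Yes{} witness for maximization); Definition~\ref{dfn:g(r)-minor-bidimensional} and Lemma~\ref{lem:edge-characterization-general-graphs} already handle this cleanly, so the composition is essentially immediate.
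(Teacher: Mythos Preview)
Your proposal is correct and follows essentially the same approach as the paper: set $M = O((g^{-1}(k+1))^{10}\cdot n)$, invoke Theorem~\ref{thm:storing-edges-meta-thm} to either detect that $|E(G)|>M$ (and answer via Lemma~\ref{lem:edge-characterization-general-graphs}) or store the whole graph and solve offline. Your write-up is in fact slightly more explicit than the paper's about the post-processing step and the minimization/maximization direction, but the argument is the same.
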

\begin{proof}
We invoke Theorem~\ref{thm:storing-edges-meta-thm} with $M=O((g^{-1}(k+1))^{10}\cdot n)$. By Lemma~\ref{lem:edge-characterization-general-graphs}, we know that if $G$ has more than $O(g^{-1}(k+1))^{10}\cdot n$ edges then $G$ is a NO (resp. YES) instance of $k$-$\Pi$ if $\Pi$ is a minimization (resp. maximization) problem. Hence, we use the algorithms from Theorem~\ref{thm:storing-edges-meta-thm} to check if $G$ has at most $M$ edges: if it has more edges then we say NO (resp. YES) if $\Pi$ is a minimization (resp. maximization) problem, and otherwise we store the entire graph.
%
%
\end{proof}

Theorem~\ref{thm:bidim-general} implies the following results for specific graph problems\footnote{We omit the simple proofs of why these problems satisfy the conditions of Definition~\ref{dfn:g(r)-minor-bidimensional}}:
\begin{itemize}
  \item Since Treewidth is $\Omega(r)$-minor-bidimensional, it follows
    that $k$-Treewidth has an $O(k^{10}\cdot n)$ space algorithm in insertion-only streams and $\Ot(k^{10}\cdot n)$ space algorithm in insertion-deletion streams.
  \item Since problems such as Long Path, Vertex Cover, Feedback Vertex Set, Minimum Maximal Matching, etc. are $\Omega(r^2)$-minor-bidimensional, it follows that their parameterized versions have $O(k^{5}\cdot n)$ space algorithm in insertion-only streams and $\Ot(k^{5}\cdot n)$ space algorithm in insertion-deletion streams.
\end{itemize}

In Section~\ref{sec:tight}, we design algorithms for some of the aforementioned problems with smaller storage. In particular, we design problem-specific structural lemmas (for example, Lemma~\ref{lem:k-path-extremal} and Lemma~\ref{lem:fvs-extremal}) to reduce the dependency of $k$ on the storage from $k^{O(1)}$ to $k$.

\begin{remark}
It is tempting to conjecture a lower bound complementing
Theorem~\ref{thm:bidim-general}:
for example, can we show that the bounds for minor-bidimensional problems are tight for \semips, i.e., they do not belong to \subps or even \fps? Unfortunately, we can rule out such a converse to Theorem~\ref{thm:bidim-general} via the two examples of Vertex Cover (VC) and Feedback Vertex Set (FVS) which are both $\Omega(r^2)$-minor-bidimensional. Chitnis et al.~\cite{rajesh-soda-15} showed that $k$-VC can be solved in $O(k^2)$ space and hence belongs to the class \fps. However, we show (Theorem~\ref{thm:k-fvs-lb}) that $k$-FVS cannot belong to \subps since it has a $\Omega(n\log n)$ bits lower bound for $k=0$.
\end{remark}

%
%
%
%
%

\section{Lower Bounds for Parameterized Streaming Algorithms}
\label{sec:lb}

\subsection{Tight Problems for the classes \semips and \bps}
\label{sec:tight}

In this section we show that certain problems are tight for the classes \semips and \bps. All of the results hold for 1-pass in the insertion-only model. Our algorithms are deterministic, while the lower bounds also hold for randomized algorithms.

\subsubsection{Tight Problems for the class \semips}
\label{sec:tight-semi}

We now show that some parameterized problems are tight for the class \semips, i.e.,
\begin{itemize}
  \item They belong to \semips, i.e., can be solved using $\Ot(g(k)\cdot n)$ bits for some function $g$.
  \item They do not belong to \subps, i.e., there is no algorithm which uses $\Ot(f(k)\cdot n^{1-\epsilon})$ bits for any function $f$ and any constant $1>\epsilon>0$. We do this by showing $\Omega(n\cdot \log n)$ bits lower bounds for these problems for constant values of $k$.
\end{itemize}
For each of the problems considered in this section, a lower bound of $\Omega(n)$ bits (for constant values of $k$) was shown by Chitnis et al.~\cite{rajesh-soda-16}. To obtain the improved lower bound of $\Omega(n\cdot \log n)$ bits for constant $k$, we will reduce from the \textsc{Perm} problem defined by Sun and Woodruff~\cite{woodruff-tight-bounds-insertion}.
\begin{center}
\noindent\framebox{\begin{minipage}{5in}
\textsc{Perm}\\
\emph{Input}: Alice has a permutation $\delta: [N]\rightarrow [N]$ which is represented as a bit string $B_{\delta}$ of length $N\log N$ by concatenating the images of $1,2,\ldots,N$ under $\delta$. Bob has an index $I\in [N\log N]$.\\
\emph{Goal}: Bob wants to find the $I$-th bit of $B_{\delta}$
\end{minipage}}
\end{center}
Sun and Woodruff~\cite{woodruff-tight-bounds-insertion} showed that the one-way (randomized) communication complexity of \textsc{Perm} is $\Omega(N\cdot \log N)$. Using the \textsc{Perm} problem, we show $\Omega(n\cdot \log n)$ bit lower bounds for constant values of $k$ for various problem such as $k$-Path, $k$-Treewidth, $k$-Feedback-Vertex-Set, etc. We also show a matching upper bound for these problems: for each $k$, these problems can be solved using $O(kn\cdot \log n)$ words in insertion-only streams and $\Ot(kn\cdot \log n)$ words in insertion-deletion streams. 
To the best of our knowledge, the only problems known previously to be tight for \semips were $k$-vertex-connectivity and $k$-edge-connectivity~\cite{crouch,woodruff-tight-bounds-insertion,eppstein-jacm}.

\begin{center}
\noindent\framebox{\begin{minipage}{5in}
\textbf{$k$-Path}\\
\emph{Input}: An undirected graph $G$ on $n$ nodes\\
\emph{Parameter}: $k$\\
\emph{Question}: Does $G$ have a path of length at least $k$? (or alternatively, a path on at least $k+1$ vertices)
\end{minipage}}
\end{center}

\begin{theorem}
The $k$-Path problem has a lower bound of $\Omega(n\cdot \log n)$ bits even for $k=5$.
\label{thm:k-path-lb}
\end{theorem}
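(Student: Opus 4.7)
The plan is to reduce from \textsc{Perm} to the $5$-Path streaming problem in a one-way communication setting, thereby inheriting the $\Omega(N\log N)$ lower bound of Sun and Woodruff~\cite{woodruff-tight-bounds-insertion}. Given Bob's index $I\in[N\log N]$, write $I=(i^*-1)\log N + j^*$ with $i^*\in[N]$ and $j^*\in[\log N]$; then $B_\delta[I]$ is exactly the $j^*$-th bit of the binary representation of $\delta(i^*)$. It therefore suffices to decide this single bit using any $5$-Path streaming algorithm.

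The graph I would construct has $n=2N+4$ vertices $\{u_1,\ldots,u_N\}\cup\{v_1,\ldots,v_N\}\cup\{a_1,a_2,a_3,a_4\}$. Alice, who holds $\delta$, inserts the matching edges $\{u_i,v_{\delta(i)}\}$ for all $i\in[N]$ and then hands the memory state of the streaming algorithm to Bob. Bob, who knows $(i^*,j^*)$, appends the single ``anchor'' edge $\{a_1,u_{i^*}\}$; the pendant path $\{a_2,a_3\},\{a_3,a_4\}$; and the ``selector'' edges $\{v_\ell,a_2\}$ for every $\ell\in[N]$ whose binary representation has $j^*$-th bit equal to $1$. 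A streaming $5$-Path algorithm using $s(n)$ bits then yields a one-way protocol for \textsc{Perm} of communication $s(n)$.

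The heart of the proof is the claim that $G$ contains a path of length $5$ if and only if the $j^*$-th bit of $\delta(i^*)$ equals $1$. The ``if'' direction is witnessed explicitly by $a_1 - u_{i^*} - v_{\delta(i^*)} - a_2 - a_3 - a_4$, which is valid precisely because the hypothesis forces $\{v_{\delta(i^*)},a_2\}$ to be an edge. For the ``only if'' direction, I would case-analyze a hypothetical $5$-path by which of $a_1,a_2,a_3,a_4$ it uses, exploiting that (i) $a_1$ and $a_4$ have degree $1$; (ii) every $u_i$ with $i\neq i^*$ has degree $1$; and (iii) every $v_\ell$ whose $j^*$-th bit is $0$ has degree $1$. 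These degree constraints force $a_2$ to be the only branching vertex, and a short enumeration shows that the only way to stitch together six distinct vertices is to traverse $a_1 - u_{i^*} - v_{\delta(i^*)} - a_2 - \ldots$, which presupposes $\{v_{\delta(i^*)},a_2\}\in E(G)$ and hence that the target bit is $1$.

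Combining the reduction with the Sun--Woodruff lower bound yields $s(n) \geq \Omega(N\log N) = \Omega(n\log n)$, as desired. The main (and essentially only) obstacle is to carry out the case analysis for the converse cleanly: the construction is specifically engineered so that any attempt to extend a path through the one high-degree vertex $a_2$ is cut off after at most four edges by the degree-one constraints on the $u_i$ and ``$0$-bit'' $v_\ell$ vertices, unless the path additionally uses the unique escape route $v_{\delta(i^*)} - u_{i^*} - a_1$, which certifies the target bit.
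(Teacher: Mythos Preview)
Your proposal is correct and follows essentially the same approach as the paper: reduce from \textsc{Perm}, have Alice encode her permutation as a perfect matching between two copies of $[N]$, and have Bob attach an ``anchor'' edge to $u_{i^*}$ together with ``selector'' edges from a hub vertex to those $v_\ell$ whose $j^*$-th bit has the desired value.

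The only difference is in the gadget that furnishes the remaining length of the witnessing path. The paper uses just two extra vertices $v,y$: the intended length-$5$ path is $v - w_j - x_{\delta(j)} - y - x_{r'} - w_{\delta^{-1}(r')}$, i.e.\ it re-enters the matching on the far side of the hub $y$. You instead add four extra vertices and a fixed pendant $a_2 - a_3 - a_4$, so the intended path is $a_1 - u_{i^*} - v_{\delta(i^*)} - a_2 - a_3 - a_4$. Both constructions are valid; yours makes the ``only if'' direction marginally cleaner (the pendant side is trivially bounded, so the case analysis reduces to checking that every walk out of $a_2$ into the matching dies at a degree-$1$ vertex within two steps), at the cost of two more vertices. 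Either way $n=\Theta(N)$ and the $\Omega(N\log N)$ bound transfers.
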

\begin{proof}
Let $n=2N+2$. We start with an instance of \textsc{Perm} of size $N$. Alice has a permutation $\delta$ which she uses to build a perfect matching from $[N]$ to $[N]$ as follows: let $W=\{w_1, w_2, \ldots, w_N\}$ and $X=\{x_1, x_2, \ldots, x_N\}$ denote two sets of size $N$ each. Alice's edge set consists of a perfect matching built as follows: for each $i\in [N]$ there is an edge between $w_i$ and $x_{\delta(i)}$.
Suppose Bob has the index $I\in [N]$. This corresponds to the $\ell$-th bit of $\delta(j)$ for some $j\in [N]$ and $\ell\in [\log N]$. Bob adds two new vertices $v,y$ and adds edges using the index $I$ as follows:
\begin{itemize}
  \item Bob adds an edge between $v$ and $w_j$
  \item Let $S_{\ell}\subseteq X$ where $S_{\ell}=\{x_r\ :\ \ell{\text{-th}}\  \text{bit of}\ r\ \text{is}\ 0\}$. Bob adds edges from $y$ to each vertex of $S_{\ell}$.
\end{itemize}
Let the graph constructed this way be $G'$. It is easy to see that $G'$ has a path of length $5$ if and only $x_j\in S_{\ell}$, i.e., the $\ell$-th bit of $\delta(j)$ is zero. Hence, the lower bound of $\Omega(N\log N)$ of \textsc{Perm} translates to an $\Omega(n\log n)$ lower bound for $5$-Path.
\end{proof}

%

\begin{lemma}
\label{lem:k-path-extremal}
Any graph on $n$ vertices with at least $nk$ edges has a path on $k+1$ vertices
\end{lemma}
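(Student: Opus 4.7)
The plan is to reduce the statement to the well-known fact that a graph of minimum degree at least $k$ contains a path on $k+1$ vertices, via a standard degree-peeling argument.

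First, I would establish an auxiliary claim: any graph $G$ on $n$ vertices with at least $nk$ edges contains a nonempty subgraph $H$ of minimum degree at least $k$. This is shown by the usual peeling procedure: as long as the current graph has a vertex of degree strictly less than $k$, delete it. Each such deletion removes at most $k-1$ edges. If the procedure were to terminate with the empty graph, it would have deleted all $n$ vertices and therefore removed at most $n(k-1) < nk$ edges in total, contradicting $|E(G)| \geq nk$. Hence the procedure must stop at some nonempty subgraph $H$ in which every vertex has degree at least $k$.

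Next I would invoke the standard longest-path argument inside $H$. Let $P = v_0 v_1 \cdots v_\ell$ be a longest path in $H$. By maximality of $P$, every neighbor of $v_0$ in $H$ must lie on $P$ (otherwise, prepending such a neighbor would yield a strictly longer path). Since $v_0$ has at least $k$ neighbors in $H$, all of which are among $v_1, \ldots, v_\ell$, we conclude $\ell \geq k$. Thus $P$ is a path on at least $k+1$ vertices, and since $H$ is a subgraph of $G$, the same path exists in $G$.

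The only thing that needs quantitative care is the peeling step: one must use the strict inequality ``fewer than $k$ edges per deletion'' rather than ``at most $k$'', so that $n$ deletions remove at most $n(k-1) < nk$ edges, yielding the contradiction. Beyond that, the argument is routine and there is no real obstacle. Note that this bound is not tight (Erd\H{o}s--Gallai would give a path on $k+1$ vertices already from $\tfrac{(k-1)n}{2}$ edges), but the weaker form $|E(G)| \geq nk$ is exactly what is needed for the $k$-Path upper bound in \semips and admits this short self-contained proof.
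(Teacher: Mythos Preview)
Your proof is correct and follows essentially the same approach as the paper: peel vertices of degree less than $k$ to obtain a nonempty subgraph of minimum degree at least $k$, then exploit this minimum degree to exhibit a path on $k{+}1$ vertices. Your write-up is in fact a bit more careful than the paper's (you explicitly justify nonemptiness via the $n(k-1)<nk$ edge count, and you use the clean longest-path-endpoint argument rather than a greedy extension), but the underlying idea is identical.
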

\begin{proof}
Preprocess the graph to enforce that the minimum degree $\geq k$ by
iteratively deleting vertices of degree $<k$.
Then we have a graph $G'$ which has $n'$ vertices and $\geq n'k$ edges
whose min degree is $\geq k$.
Now consider an arbitrary path $P$ in this
graph $G'$, say $v_1-v_2-v_3-\ldots-v_r$.
At each intermediate vertex $v_j$, at most $j-1$ neighbors have been
visited, and so at least $k-j+1$ possibilities are open.
Hence, there is always a possible next step up to node $k+1$,
i.e. there is a path of length $k$.
\end{proof}

\begin{theorem}
\label{thm:algo-k-path}
The $k$-Path problem can be solved using
\begin{itemize}
  \item $O(k\cdot n)$ space in insertion-only streams
  \item $\Ot(k\cdot n)$ space in insertion-deletion streams
\end{itemize}
\end{theorem}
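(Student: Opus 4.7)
The plan is to combine the extremal bound in Lemma~\ref{lem:k-path-extremal} with the generic edge-storage procedure from Theorem~\ref{thm:storing-edges-meta-thm}, exactly in the spirit of the proof of Theorem~\ref{thm:bidim-general}. The point is that Lemma~\ref{lem:k-path-extremal} gives us an edge-count threshold of $M = nk$ beyond which the answer is immediately known to be YES, so the algorithm never needs to remember more than $M$ edges of the input.

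Concretely, I would invoke Theorem~\ref{thm:storing-edges-meta-thm} with parameter $M := nk$. In the insertion-only model the procedure stores every edge seen, together with a counter of the number of edges processed so far, using $O(M) = O(nk)$ space; in the insertion-deletion model the stored edge set is maintained via an $M$-sparse recovery structure, using $\tilde{O}(M) = \tilde{O}(nk)$ space. In either model, at the end of the stream exactly one of two situations holds: either more than $nk$ edges have been inserted (net, in the dynamic case), in which case Lemma~\ref{lem:k-path-extremal} lets us output \textbf{YES} without further work, or the entire edge set of the current graph has been recovered. In the latter case the stream processor now possesses the whole graph $G$ in memory, and so by Remark~\ref{rem:unbounded-computation} it may perform arbitrary (non-streaming) computation to decide whether $G$ contains a path on $k+1$ vertices and answer accordingly.

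Correctness is immediate from Lemma~\ref{lem:k-path-extremal}: if $|E(G)| \geq nk$ then a $(k{+}1)$-vertex path certainly exists, so the early output is justified; otherwise we have reconstructed $G$ exactly and compute the correct answer offline. The space bounds of $O(nk)$ and $\tilde{O}(nk)$ words follow directly from the two bullets of Theorem~\ref{thm:storing-edges-meta-thm}. There is no real obstacle here beyond observing that Lemma~\ref{lem:k-path-extremal} provides precisely the threshold needed to instantiate the meta-theorem; the work of bounding the sparse-recovery cost in the dynamic setting is already handled by~\cite{ely-k-sparse} as cited in the proof of Theorem~\ref{thm:storing-edges-meta-thm}.
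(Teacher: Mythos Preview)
Your proposal is correct and matches the paper's proof essentially verbatim: the paper also invokes Theorem~\ref{thm:storing-edges-meta-thm} with $M = nk$, uses Lemma~\ref{lem:k-path-extremal} to justify outputting YES whenever the edge count exceeds $M$, and otherwise stores the whole graph. Your added remarks about offline computation via Remark~\ref{rem:unbounded-computation} and the sparse-recovery citation are consistent with the paper's treatment.
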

\begin{proof}
We invoke Theorem~\ref{thm:storing-edges-meta-thm} with $M=nk$. By Lemma~\ref{lem:k-path-extremal}, we know that if $G$ has more than $M$ edges then it has a $k$-Path. Hence, we use the algorithms from Theorem~\ref{thm:storing-edges-meta-thm} to check if $G$ has at most $M$ edges: if it has more edges then we say YES, and otherwise we store the entire graph.
%
%
%
\end{proof}

\begin{center}
\noindent\framebox{\begin{minipage}{5.00in}
\textbf{$k$-Treewidth}\\
\emph{Input}: An undirected graph $G$\\
\emph{Parameter}: $k$\\
\emph{Question}: Is the treewidth of $G$ at most $k$?
\end{minipage}}
\end{center}

\begin{theorem}~\cite[Theorem 7]{woodruff-tight-bounds-insertion}
The $k$-Treewidth problem has a lower bound of $\Omega(n\cdot \log n)$ bits even for $k=1$.
\label{thm:k-tw-lb}
\end{theorem}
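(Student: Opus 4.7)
The plan is to follow the template of Theorem~\ref{thm:k-path-lb} and reduce from the \textsc{Perm} problem of Sun and Woodruff, arranging Bob's edges so that whether the resulting graph is a forest (equivalently, has treewidth at most $1$) encodes one bit of Alice's permutation. This will yield the claimed $\Omega(N \log N) = \Omega(n \log n)$ lower bound for $1$-Treewidth.

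Concretely, I would set $n = 2N + 2$ and let Alice encode her permutation $\delta : [N] \to [N]$ as the perfect matching $\{w_i, x_{\delta(i)}\}_{i \in [N]}$ on vertex sets $W = \{w_1, \ldots, w_N\}$ and $X = \{x_1, \ldots, x_N\}$; this is already a forest. Bob's index decodes to a pair $(j, \ell) \in [N] \times [\log N]$ whose target bit is the $\ell$-th bit of $\delta(j)$. Bob introduces two fresh vertices $v, y$ and inserts the edges $\{v, w_j\}$, $\{v, y\}$, together with $\{y, x_r\}$ for every $r$ in $S_\ell = \{r : \ell\text{-th bit of } r \text{ is } 0\}$.

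I would then prove that the constructed graph $G'$ contains a cycle if and only if $x_{\delta(j)} \in S_\ell$, i.e.\ the $\ell$-th bit of $\delta(j)$ equals $0$. The ``if'' direction is witnessed by the $4$-cycle $v - w_j - x_{\delta(j)} - y - v$. For the converse, note that before Bob adds the edges incident to $y$, the graph is a disjoint union of Alice's matching with the path $w_j - v - y$; each subsequent edge $\{y, x_r\}$ with $r \neq \delta(j)$ attaches a fresh matching edge to the growing tree rooted at $y$ without forming a cycle, because the only neighbour of $v$ in $W$ is $w_j$ and the matching is a bijection. Only the edge $\{y, x_{\delta(j)}\}$, inserted precisely when $\delta(j) \in S_\ell$, closes the path $y - v - w_j - x_{\delta(j)}$ into a cycle.

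Feeding the concatenated edge stream (Alice first, then Bob) to any $1$-pass streaming algorithm for $k$-Treewidth with $k = 1$ therefore recovers the target bit of \textsc{Perm}, and the $\Omega(N \log N)$ one-way randomized communication lower bound transfers to an $\Omega(n \log n)$ memory lower bound. The main obstacle is the forest direction of the characterisation: one must verify that none of Bob's $|S_\ell| = N/2$ edges incident to $y$ creates a spurious cycle, which reduces to the two structural observations above.
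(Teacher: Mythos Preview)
Your argument is correct and follows the same reduction-from-\textsc{Perm} template as the paper's proof. The only difference is cosmetic: the paper introduces a single auxiliary vertex $v$ (so $n = 2N+1$) and lets Bob add the edge $\{v,w_j\}$ together with edges $\{v,x_r\}$ for all $r\in S_\ell$, so that the witnessing cycle is the triangle $v\text{--}w_j\text{--}x_{\delta(j)}\text{--}v$; you carry over the two-vertex gadget from Theorem~\ref{thm:k-path-lb} verbatim, obtaining a $4$-cycle instead. Either way the graph is a forest iff the $\ell$-th bit of $\delta(j)$ is $1$, and the $\Omega(N\log N)$ bound transfers identically. Your forest-direction justification via iterative component merging is fine and in fact more explicit than the paper's ``it is easy to see''.
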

\begin{proof}
Let $n=2N+1$. We start with an instance of \textsc{Perm} of size $N$. Alice has a permutation $\delta$ which she uses to build a perfect matching from $[N]$ to $[N]$ as follows: let $W=\{w_1, w_2, \ldots, w_N\}$ and $X=\{x_1, x_2, \ldots, x_N\}$ denote two sets of size $N$ each. Alice's edge set consists of a perfect matching built as follows: for each $i\in [N]$ there is an edge between $w_i$ and $x_{\delta(i)}$.
Suppose Bob has the index $I\in [N]$. This corresponds to the $\ell$-th bit of $\delta(j)$ for some $j\in [N]$ and $\ell\in [\log N]$. Bob adds a new vertex $v$ and adds edges using the index $I$ as follows:
\begin{itemize}
  \item Bob adds an edge between $v$ and $w_j$
  \item Let $S_{\ell}\subseteq X$ where $S_{\ell}=\{x_r\ :\ \ell\text{-th bit of}\ r\ \text{is}\ 0\}$. Bob adds edges from $v$ to each vertex of $S_{\ell}$.
\end{itemize}
Let the graph constructed this way be $G'$. It is easy to see that $G'$ has no cycles if and only $x_j\notin S_{\ell}$, i.e., the $\ell$-th bit of $\delta(j)$ is $1$. Recall that a graph has treewidth $1$ if and only if it has no cycles. Hence, the lower bound of $\Omega(N\log N)$ of \textsc{Perm} translates to a $O(n\log n)$ lower bound for $k$-Treewidth with $k=1$.
\end{proof}

\begin{theorem}
\label{thm:algo-k-tw}
The $k$-Treewidth problem can be solved using
\begin{itemize}
  \item $O(k\cdot n)$ space in insertion-only streams
  \item $\Ot(k\cdot n)$ space in insertion-deletion streams
\end{itemize}
\end{theorem}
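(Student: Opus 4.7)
The plan is to mirror the strategy used in the proof of Theorem~\ref{thm:algo-k-path}, substituting the $k$-Path extremal lemma with the treewidth edge bound given by Lemma~\ref{lem:edge-upper-bound-in-bounded-tw}. That lemma tells us that for every graph $G$ on $n$ vertices, $|E(G)| \leq \tw(G) \cdot |V(G)|$. Contrapositively, any graph with strictly more than $kn$ edges must have treewidth strictly greater than $k$, and is therefore a \No-instance of $k$-Treewidth.

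Given this, I would invoke Theorem~\ref{thm:storing-edges-meta-thm} with the threshold $M = kn$. In the insertion-only setting this yields an $O(kn)$-space sketch, and in the insertion-deletion setting an $\widetilde{O}(kn)$-space sketch (using the $k$-sparse recovery structure of Definition~\ref{dfn:k-sparse} as supplied by Barkay et al.). The streaming algorithm runs this data structure alongside an edge counter: after the stream ends, if the counter exceeds $kn$, output \No; otherwise, the data structure has successfully recovered the full edge set of $G$.

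In the latter case, since we now hold the entire graph in memory and the model of Remark~\ref{rem:unbounded-computation} permits unbounded post-processing on stored data, we can compute $\tw(G)$ offline and compare it to $k$, reporting \Yes iff $\tw(G) \leq k$. Correctness in the ``too many edges'' branch is immediate from Lemma~\ref{lem:edge-upper-bound-in-bounded-tw}; correctness in the ``few edges'' branch is immediate because we have exact access to $G$. The space bounds are exactly those provided by Theorem~\ref{thm:storing-edges-meta-thm} with $M = kn$, giving the claimed $O(kn)$ and $\widetilde{O}(kn)$ space complexities.

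There is no real obstacle here: the entire argument is a direct plug-in, and the only place where care is needed is ensuring that the edge-count threshold $M = kn$ matches the upper bound from Lemma~\ref{lem:edge-upper-bound-in-bounded-tw} (which it does exactly, since $\tw(G) \leq k$ forces $|E(G)| \leq kn$, so $|E(G)| > kn \Rightarrow \tw(G) > k$). The computational cost of computing treewidth exactly in the post-processing step is irrelevant to the space bound under our model.
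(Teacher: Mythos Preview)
Your proposal is correct and essentially identical to the paper's proof: both invoke Theorem~\ref{thm:storing-edges-meta-thm} with $M = nk$, use Lemma~\ref{lem:edge-upper-bound-in-bounded-tw} to conclude that more than $nk$ edges forces $\tw(G) > k$ (answer \No), and otherwise store the whole graph for offline processing.
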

\begin{proof}
We invoke Theorem~\ref{thm:storing-edges-meta-thm} with $M=nk$. By Lemma~\ref{lem:edge-upper-bound-in-bounded-tw}, we know that if $G$ has more than $M$ edges then $\tw(G)>k$. Hence, we use the algorithms from Theorem~\ref{thm:storing-edges-meta-thm} to check if $G$ has at most $M$ edges: if it has more edges then we say NO, and otherwise we store the entire graph.
%
%
\end{proof}


\begin{center}
\noindent\framebox{\begin{minipage}{5.00in}
\textbf{$k$-FVS}\\
\emph{Input}: An undirected graph $G=(V,E)$\\
\emph{Parameter}: $k$\\
\emph{Question}: Does there exist a set $X\subseteq V$ such that $|X|\leq k$ and $G\setminus X$ has no cycles?
\end{minipage}}
\end{center}

\begin{theorem}
~\cite[Theorem 7]{woodruff-tight-bounds-insertion}
The $k$-FVS problem has a lower bound of $\Omega(n\cdot \log n)$ bits even for $k=0$.
\label{thm:k-fvs-lb}
\end{theorem}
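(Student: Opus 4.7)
The plan is to prove this via a direct reduction from \textsc{Perm}, reusing the very same construction as in Theorem~\ref{thm:k-tw-lb}. The crucial observation is that the case $k=0$ of $k$-FVS asks whether there exists a set $X \subseteq V$ with $|X| \leq 0$ such that $G \setminus X$ is acyclic; since the only such set is $X = \emptyset$, this is precisely the question of whether $G$ itself is a forest. But this is exactly the same decision problem as $k$-Treewidth with $k=1$, because a graph has treewidth at most $1$ if and only if it has no cycles.

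Concretely, I would set $n = 2N + 1$ and start from an instance of \textsc{Perm} of size $N$. Alice, who holds the permutation $\delta \colon [N] \to [N]$, constructs vertex sets $W = \{w_1, \ldots, w_N\}$ and $X = \{x_1, \ldots, x_N\}$ and streams the perfect matching $\{\{w_i, x_{\delta(i)}\} : i \in [N]\}$. Bob, who holds an index $I \in [N \log N]$ corresponding to the $\ell$-th bit of $\delta(j)$, introduces a single new vertex $v$, adds the edge $\{v, w_j\}$, and then adds edges from $v$ to every vertex in $S_\ell = \{x_r : \text{the } \ell\text{-th bit of } r \text{ is } 0\}$. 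Let $G'$ be the resulting graph.

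The analysis mirrors Theorem~\ref{thm:k-tw-lb}: every cycle in $G'$ must use $v$ twice, and the only two-edge paths incident to $v$ of the form $v - w_j - x_{\delta(j)}$ close into a cycle precisely when $x_{\delta(j)} \in S_\ell$, which in turn happens iff the $\ell$-th bit of $\delta(j)$ is $0$. Hence $G'$ is acyclic iff that bit is $1$, i.e., iff Bob's target bit of the \textsc{Perm} string $B_\delta$ equals $1$. Any streaming algorithm for $0$-FVS with space $s$ thus yields a one-way protocol for \textsc{Perm} of communication $s$, and the $\Omega(N \log N)$ lower bound of Sun and Woodruff~\cite{woodruff-tight-bounds-insertion} translates directly into $\Omega(n \log n)$ bits for $0$-FVS.

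There is essentially no obstacle here: the reduction is identical in structure to the two preceding theorems, and the only substantive content is the equivalence ``$G$ has an FVS of size $0$'' $\Leftrightarrow$ ``$G$ is acyclic,'' which is immediate from the definition. One could alternatively simply invoke Theorem~\ref{thm:k-tw-lb} as a black box, observing that the hard instances produced there are graphs with treewidth at most $1$ iff acyclic iff $0$-FVS $=\emptyset$, so that the same lower bound applies verbatim.
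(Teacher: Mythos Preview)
Your proposal is correct and is essentially identical to the paper's own proof: the paper also simply invokes the construction from Theorem~\ref{thm:k-tw-lb}, observes that a graph has an FVS of size $0$ iff it is acyclic iff it has treewidth at most $1$, and concludes that the $\Omega(N\log N)$ lower bound for \textsc{Perm} yields an $\Omega(n\log n)$ lower bound for $0$-FVS.
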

\begin{proof}
We use exactly the same reduction as in Theorem~\ref{thm:k-tw-lb}. Recall that graph has a FVS of size $0$ if and only if it has no cycles. Also a graph has treewidth $1$ if and only if it has no cycles. The proof of Theorem~\ref{thm:k-tw-lb} argues that $G'$ has no cycles if and only $x_j\notin S_{\ell}$, i.e., the $\ell$-th bit of $\delta(j)$ is $1$. Since $G'$ has $n=2n+1$ vertices, the lower bound of $\Omega(N\log N)$ of \textsc{Perm} translates to a $O(n\log n)$ lower bound for $k$-FVS with $k=0$.
\end{proof}

\begin{lemma}
If $G$ has a feedback vertex set of size $k$ then $|E(G)|\leq n(k+1)$
\label{lem:fvs-extremal}
\end{lemma}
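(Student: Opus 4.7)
The plan is a direct edge-counting argument that exploits the structural consequence of possessing a small feedback vertex set. Let $X \subseteq V(G)$ be a feedback vertex set of size $k$, so that by definition $G \setminus X$ is acyclic, i.e., a forest on $n-k$ vertices. I will partition $E(G)$ into three classes according to how their endpoints meet $X$, and then bound each class separately.

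First I would partition the edges as $E(G) = E_1 \cup E_2 \cup E_3$, where $E_1$ is the set of edges with both endpoints inside $X$, $E_2$ is the set of edges with exactly one endpoint in $X$ and the other in $V \setminus X$, and $E_3$ is the set of edges with both endpoints in $V \setminus X$. For $E_3$, the subgraph $G \setminus X$ is a forest on $n-k$ vertices, which yields the standard bound $|E_3| \leq n-k-1$. For $E_1$, the trivial bound from the total number of pairs in $X$ gives $|E_1| \leq \binom{k}{2}$. For $E_2$, each of the $k$ vertices of $X$ can be adjacent to at most $n-k$ vertices of $V\setminus X$, so $|E_2| \leq k(n-k)$.

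Summing these three estimates gives
\[
|E(G)| \leq \binom{k}{2} + k(n-k) + (n-k-1) = n(k+1) - \tfrac{(k+1)(k+2)}{2} \leq n(k+1),
\]
which is the desired inequality.

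No step here is really an obstacle; the only mild subtlety is making sure the bound on $E_3$ is not misapplied when $n = k$ (in which case $G \setminus X$ is empty and the forest bound is vacuous), but the final inequality $|E(G)| \leq n(k+1)$ holds trivially in that degenerate case since $|E(G)| \leq \binom{n}{2}$. The lemma will be used in the same style as Lemma~\ref{lem:edge-upper-bound-in-bounded-tw} and Lemma~\ref{lem:k-path-extremal}, feeding into a streaming upper bound via Theorem~\ref{thm:storing-edges-meta-thm} with $M = n(k+1)$ to place $k$-FVS in \semips.
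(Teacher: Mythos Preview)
Your proof is correct and takes essentially the same approach as the paper: both exploit that $G\setminus X$ is a forest (hence has at most $n-k-1$ edges) and then bound the remaining edges incident to $X$ by a simple degree argument. The paper uses the slightly coarser two-way split, bounding all edges touching $X$ by $k(n-1)$ (allowing edges inside $X$ to be counted twice), whereas your three-way partition yields the tighter intermediate bound $n(k+1) - \tfrac{(k+1)(k+2)}{2}$; both collapse to the stated inequality.
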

\begin{proof}
Let $X$ be a feedback vertex of $G$ of size $k$. Then $G\setminus X$ is a forest and has at most $n-k-1$ edges. Each vertex of $X$ can have degree $\leq n-1$ in $G$. Hence, we have that $|E(G)|\leq (n-k-1)+k(n-1)\leq n+kn = n(k+1)$
\end{proof}

\begin{theorem}
\label{thm:algo-k-fvs}
The $k$-FVS problem can be solved using
\begin{itemize}
  \item $O(k\cdot n)$ space in insertion-only streams
  \item $\Ot(k\cdot n)$ space in insertion-deletion streams
\end{itemize}
\end{theorem}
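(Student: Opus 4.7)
The plan is to mimic the template already used for Theorem~\ref{thm:algo-k-path} and Theorem~\ref{thm:algo-k-tw}, combining the extremal edge bound of Lemma~\ref{lem:fvs-extremal} with the generic storage meta-theorem (Theorem~\ref{thm:storing-edges-meta-thm}). The central observation is that Lemma~\ref{lem:fvs-extremal} gives the contrapositive we need: any graph whose edge count exceeds $n(k+1)$ cannot admit a feedback vertex set of size at most $k$, and so can be answered \No immediately without seeing the rest of the stream.

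Concretely, I would invoke Theorem~\ref{thm:storing-edges-meta-thm} with the threshold $M = n(k+1) = O(k\cdot n)$. The algorithm maintains the data structure from that theorem: in insertion-only streams this is just an edge list together with an $O(\log n)$-bit counter, costing $O(M) = O(k\cdot n)$ words; in insertion-deletion streams it is an $M$-sparse recovery sketch of Barkay, Porat and Shalem~\cite{ely-k-sparse}, costing $\Ot(M) = \Ot(k\cdot n)$ words. At the end of the stream we test whether the current edge count exceeds $M$.

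If the count exceeds $M$, Lemma~\ref{lem:fvs-extremal} tells us that $G$ has no feedback vertex set of size $\leq k$, so we output \No. Otherwise the sparse-recovery structure yields the entire edge set of $G$, and we can solve $k$-FVS by arbitrary (unbounded) computation on the stored graph, as permitted by Remark~\ref{rem:unbounded-computation}.

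There is essentially no obstacle here beyond checking that the parameters line up: once Lemma~\ref{lem:fvs-extremal} is in hand, the proof is a direct plug-in to Theorem~\ref{thm:storing-edges-meta-thm} with $M = n(k+1)$, matching the $O(k\cdot n)$ and $\Ot(k\cdot n)$ bounds claimed in the statement.
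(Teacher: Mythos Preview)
Your proposal is correct and essentially identical to the paper's own proof: invoke Theorem~\ref{thm:storing-edges-meta-thm} with $M = n(k+1)$, use Lemma~\ref{lem:fvs-extremal} to output \No when the edge count exceeds $M$, and otherwise recover the entire graph and solve $k$-FVS offline. The only difference is that your write-up is more detailed (e.g., explicitly invoking Remark~\ref{rem:unbounded-computation}), which is fine.
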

\begin{proof}
We invoke Theorem~\ref{thm:storing-edges-meta-thm} with $M=n(k+1)$. By Lemma~\ref{lem:fvs-extremal}, we know that if $G$ has more than $M$ edges then $G$ cannot have a feedback vertex set of size $k$. Hence, we use the algorithms from Theorem~\ref{thm:storing-edges-meta-thm} to check if $G$ has at most $M$ edges: if it has more edges then we say NO, and otherwise we store the entire graph.
%
%
\end{proof}

\subsubsection{Tight Problems for the class \bps}
\label{sec:tight-bps}


We now show that some parameterized problems are tight for the class \bps, i.e.,
\begin{itemize}
  \item They belong to \bps, i.e., can be solved using $O(n^2)$ bits. Indeed any graph problem can be solved by storing the entire adjacency matrix which requires $O(n^2)$ bits.
  \item They do not belong to \subps, i.e., there is no algorithm which uses $\Ot(f(k)\cdot n^{1+\epsilon})$ bits for any function $f$ and any $\epsilon\in (0,1)$. We do this by showing $\Omega(n^2)$ bits lower bounds for these problems for constant values of $k$ via reductions from the \textsc{Index} problem.
\end{itemize}
\begin{center}
\noindent\framebox{\begin{minipage}{5in}
\textbf{\textsc{Index}}\\
\emph{Input}: Alice has a string $B=b_1 b_2 \ldots b_N \in \{0,1\}^{N}$. Bob has an index $I\in [N]$\\
\emph{Goal}: Bob wants to find the value $b_I$
\end{minipage}}
\end{center}

There is a $\Omega(N)$ lower bound on the (randomized) one-way communication complexity of \textsc{Index}~\cite{nisan-cc-book}. Via reduction from the \textsc{Index} problem, we are able to show $\Omega(n^2)$ bits for constant values of $k$ for several problems such as $k$-Dominating-Set and $k$-Girth. 



\begin{center}
\noindent\framebox{\begin{minipage}{5in}
\textbf{$k$-Dominating Set}\\
\emph{Input}: An undirected graph $G=(V,E)$\\
\emph{Parameter}: $k$\\
\emph{Question}: Is there a set $S\subseteq V(G)$ of size $\leq k$ such that each $v\in V\setminus S$ has at least one neighbor in $S$?
\end{minipage}}
\end{center}

\begin{theorem}
The $k$-Dominating Set problem has a lower bound of $\Omega(n^2)$ bits for $1$-pass algorithms, even when $k=3$.
\label{thm:k-domset-lb}
\end{theorem}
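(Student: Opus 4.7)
The plan is to reduce from the one-way communication problem $\textsc{Index}$, whose randomized complexity is $\Omega(N)$ on strings of length $N$. I will take $N = {N'}^2$ with $N' = \Theta(n)$ and identify an $\textsc{Index}$ instance with a matrix $M \in \{0,1\}^{N' \times N'}$ held by Alice, together with a target entry $(i^*, j^*) \in [N']^2$ held by Bob. An $\Omega(N) = \Omega(n^2)$ lower bound for this instance of $\textsc{Index}$ will then transfer to $1$-pass streaming algorithms for $3$-Dominating Set on graphs with $n = 2N' + 4$ vertices.

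The reduction constructs a graph $G'$ as follows. Alice creates independent sets $L = \{\ell_1, \dots, \ell_{N'}\}$ and $R = \{r_1, \dots, r_{N'}\}$ and, in a fixed canonical order, streams every edge $\ell_i r_j$ for which $M[i,j] = 1$. Bob then appends four fresh vertices $u, u', v, v'$ together with the edges $\{u\ell_i : i \neq i^*\} \cup \{vr_j : j \neq j^*\} \cup \{uu', vv'\}$, so that $u'$ and $v'$ are pendants attached to $u$ and $v$ respectively. Bob feeds the combined stream to the hypothesized $1$-pass $3$-Dominating Set algorithm and reports $M[i^*, j^*] = 1$ precisely when the algorithm returns a dominating set of size at most $3$.

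Correctness rests on the equivalence ``$G'$ has a dominating set of size $\leq 3$ iff $M[i^*, j^*] = 1$''. The easy direction exhibits $\{u, v, \ell_{i^*}\}$: $u$ dominates $(L \setminus \{\ell_{i^*}\}) \cup \{u, u'\}$, $v$ dominates $(R \setminus \{r_{j^*}\}) \cup \{v, v'\}$, and $\ell_{i^*}$ dominates itself together with $r_{j^*}$ exactly when $M[i^*, j^*] = 1$. For the converse, the pendants force any dominating set $D$ with $|D| \leq 3$ to satisfy $\{u, u'\} \cap D \neq \emptyset$ and $\{v, v'\} \cap D \neq \emptyset$. The core claim is $\{u, v\} \subseteq D$: if $u \notin D$ then $u' \in D$, and the single remaining ``free'' vertex $z$ must dominate all of $L$ by itself; bipartiteness of Alice's graph forces $z \in R$, covering the entire independent set $L$ requires $z = r_j$ for some column $j$ of $M$ that is all-ones, and covering the still-unserved $r_{j^*}$ then forces $j = j^*$, which already implies $M[i^*, j^*] = 1$. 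A symmetric argument handles $v \notin D$. Once $\{u, v\} \subseteq D$, the third vertex $z$ must satisfy $\{\ell_{i^*}, r_{j^*}\} \subseteq N[z]$; since $N(\ell_{i^*}) \subseteq R$ and $N(r_{j^*}) \subseteq L$ by bipartiteness together with the deliberate exclusions in Bob's construction, the only candidates are $z \in \{\ell_{i^*}, r_{j^*}\}$, each of which requires the edge $\ell_{i^*} r_{j^*}$, i.e.\ $M[i^*, j^*] = 1$.

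The main obstacle is exactly this case analysis: one must rule out ``spurious'' dominating sets arising from rows or columns of $M$ that happen to be dense, and verify that each such configuration implicitly already certifies $M[i^*, j^*] = 1$. The reason the argument closes cleanly is that Alice's graph is bipartite and Bob is careful not to join $u, u', v, v'$ to $\ell_{i^*}$ or $r_{j^*}$, so every path of domination between $\ell_{i^*}$ and $r_{j^*}$ is funnelled through the single potential edge $\ell_{i^*} r_{j^*}$. Given the equivalence, any $s$-bit $1$-pass (possibly randomized) streaming algorithm for $3$-Dominating Set yields an $s$-bit one-way protocol for $\textsc{Index}$ on $N = \Theta(n^2)$ bits, forcing $s = \Omega(n^2)$ and proving the theorem.
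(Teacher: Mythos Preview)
Your proposal is correct and follows essentially the same reduction as the paper: the construction (bipartite encoding of the bit matrix, two hub vertices each missing one neighbour, two pendants) is identical up to renaming. The only minor difference is in the converse direction of the equivalence: the paper invokes the standard pendant rule to assume outright that both hubs $u,v$ lie in any minimum dominating set and then does a five-line case analysis on the third vertex, whereas you explicitly chase the sub-cases $u\notin D$ and $v\notin D$ and show they either contradict or already force $M[i^*,j^*]=1$. Both routes are valid; the paper's is slightly cleaner, while yours avoids appealing to the pendant-replacement folklore.
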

\begin{proof}
Let $r=\sqrt{N}$. We start with an instance of \textsc{Index} where
Alice has a bit string $B\in \{0,1\}^N$.
Fix a canonical bijection $\phi:[N]\rightarrow  [r]\times [r]$. We now construct a graph with vertex set $Y={y_1, y_2, \ldots, y_r}$ and $W={w_1, w_2, \ldots, w_r}$. For each $I\in [N]$ we do the following:
\begin{itemize}
  \item If $B[I]=1$ then add the edge $y_{i'}-w_{i''}$ where $\phi_{I}=(i',i'')$
  \item If $B[I]=0$ then do not add any edge
\end{itemize}
Suppose Bob has the index $I^{*}\in [N]$. Let $\phi(I^*)=(\alpha,\beta)$ where $\alpha,\beta\in [r]$. Bob adds four new vertices $x_1, x_2, z_1$ and $z_2$. He also adds the following edges:
\begin{itemize}
  \item The edge $x_1 - x_2$
  \item The edge $z_1 - z_2$
  \item An edge from $x_1$ to each vertex of $Y\setminus y_{\alpha}$
  \item An edge from $z_1$ to each vertex of $W\setminus w_{\beta}$
\end{itemize}
Let the final constructed graph be $G$. A simple observation is that if a vertex has degree exactly $1$, then its unique neighbor can be assumed to be part of a minimum dominating set. We now show that $G$ has a dominating set of size $3$ if and only $B[I^*]=1$.

First suppose that $B[I^*]=1$, i.e., $y_{\alpha}-w_{\beta}$ forms an edge in $G$. Then we claim that $\{x_1, z_1, y_{\alpha}\}$ form a dominating set of size $3$. This is because $x_1$ dominates $x_2\cup (Y\setminus y_{\alpha})$, $z_1$ dominates $z_2\cup (W\setminus w_{\beta})$ and finally $y_{\alpha}$ dominates $w_{\beta}$.

Now suppose that $G$ has a dominating set $S$ of size $3$ but $y_{\alpha}-w_{\beta}\notin E(G)$. Since $x_2, z_2$ have degree $1$ we can assume that $\{x_1, z_1\}\subseteq S$. Let $S\setminus \{x_1, z_1\}=u$. We now consider different possibilities for $u$ and obtain a contradiction in each case:
\begin{itemize}
  \item $u=x_2$ or $u=z_2$: In this case the vertex $y_{\alpha}$ is not dominated by $S$
  \item $u\in (Y\setminus y_{\alpha})$: In this case the vertex $y_{\alpha}$ is not dominated by $S$
  \item $u\in (W\setminus w_{\beta})$: In this case the vertex $w_{\beta}$ is not dominated by $S$
  \item $u=y_{\alpha}$: In this case the vertex $w_{\beta}$ is not dominated by $S$ since $y_{\alpha}-w_{\beta}\notin E(G)$
  \item $u=w_{\beta}$: In this case the vertex $y_{\alpha}$ is not dominated by $S$ since $y_{\alpha}-w_{\beta}\notin E(G)$
\end{itemize}
Hence, the $3$-Dominating Set problem on graphs with $2r+4 = O(\sqrt{N})$ vertices can be used to solve instances of the \textsc{Index} problem of size $N$. Since \textsc{Index} has a lower bound of $\Omega(N)$, it follows that the $3$-Dominating Set problem on graphs of $n$ vertices has a lower bound of $\Omega(n^2)$ bits.
\end{proof}


\begin{center}
\noindent\framebox{\begin{minipage}{5.00in}
\textbf{$k$-Girth}\\
\emph{Input}: An undirected graph $G$\\
\emph{Parameter}: $k$\\
\emph{Question}: Is the length of smallest cycle of $G$ equal to $k$?
\end{minipage}}
\end{center}

\begin{theorem}
The $k$-Girth problem has a lower bound of $\Omega(n^2)$ bits for $1$-pass algorithms, even when $k=3$.
\label{thm:k-girth-lb}
\end{theorem}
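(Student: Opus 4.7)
The plan is to reduce from \textsc{Index} in essentially the same way as in Theorem~\ref{thm:k-domset-lb}, but with a much simpler gadget on Bob's side: since $3$-girth is equivalent to the existence of a triangle, Bob only needs to add one extra vertex in order to test a single potential edge of Alice's bipartite graph.

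Concretely, set $r=\sqrt{N}$ and $n=2r+1$, and fix a canonical bijection $\phi:[N]\to [r]\times[r]$. Alice encodes her string $B\in\{0,1\}^N$ exactly as in the proof of Theorem~\ref{thm:k-domset-lb}: she creates vertex sets $Y=\{y_1,\ldots,y_r\}$ and $W=\{w_1,\ldots,w_r\}$ and, for each $I\in[N]$ with $\phi(I)=(i',i'')$, she inserts the edge $y_{i'}w_{i''}$ iff $B[I]=1$. Bob receives the index $I^*\in[N]$ with $\phi(I^*)=(\alpha,\beta)$, introduces a single new vertex $v$, and inserts the two edges $v y_\alpha$ and $v w_\beta$. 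Alice streams her edges first, Bob appends his, and they run the putative $1$-pass $k$-Girth algorithm on the final graph $G'$ with $k=3$.

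The correctness argument is short. Alice's graph is bipartite and therefore contains no odd cycle, in particular no triangle. Any triangle in $G'$ must therefore use the new vertex $v$; since $v$'s only neighbors are $y_\alpha$ and $w_\beta$, the unique candidate triangle is $v$--$y_\alpha$--$w_\beta$--$v$, which is present iff $y_\alpha w_\beta\in E(G')$, iff $B[I^*]=1$. Because every simple graph has girth at least $3$, this means the girth of $G'$ equals $3$ exactly when $B[I^*]=1$; otherwise the girth is at least $4$ (possibly infinite). Thus a $1$-pass $k$-Girth algorithm on $G'$ solves the \textsc{Index} instance.

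Finally, $G'$ has $n=2r+1=O(\sqrt{N})$ vertices, so the $\Omega(N)$ one-way communication lower bound for \textsc{Index}~\cite{nisan-cc-book} translates into an $\Omega(n^2)$ space lower bound for any $1$-pass algorithm that decides whether the girth equals $3$. I do not anticipate any real obstacle here: the main point to double-check is just that Alice's bipartite construction introduces no triangles (which is automatic) and that Bob's two-edge gadget cannot accidentally create a triangle in the $B[I^*]=0$ case (which is immediate, since a triangle through $v$ would require $y_\alpha w_\beta$ to be an edge).
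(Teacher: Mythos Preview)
Your proposal is correct and essentially identical to the paper's own proof: the paper also encodes Alice's string as a bipartite graph on $Y\cup W$, has Bob add a single fresh vertex adjacent to $y_\alpha$ and $w_\beta$, and argues that a triangle exists iff $B[I^*]=1$, yielding the $\Omega(n^2)$ bound via \textsc{Index}. The only differences are cosmetic (vertex names).
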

\begin{proof}
Let $r=\sqrt{N}$. We start with an instance of \textsc{Index} where
Alice has a bit string $B\in \{0,1\}^N$.
Fix a canonical bijection $\phi:[N]\rightarrow  [r]\times [r]$. We now construct a graph with vertex set $Y={y_1, y_2, \ldots, y_r}$ and $W={w_1, w_2, \ldots, w_r}$. For each $I\in [N]$ we do the following:
\begin{itemize}
  \item If $B[I]=1$ then add the edge $y_{i'}-w_{i''}$ where $\phi_{I}=(i',i'')$
  \item If $B[I]=0$ then do not add any edge
\end{itemize}
Suppose Bob has the index $I^{*}\in [N]$. Let $\phi(I^*)=(\alpha,\beta)$ where $\alpha,\beta\in [r]$. Bob adds a new vertex $z$ and adds the edges $z-y_{\alpha}$ and $z-w_{\beta}$. Let the final constructed graph be $G$. It is easy to see that $G\setminus z$ is bipartite, and hence has girth $\geq 4$ (we say the girth is $\infty$ if the graph has no cycle). The only edges incident on $z$ are to $y_{\alpha}$ and $w_{\beta}$. Hence, $G$ has a cycle of length 3 if and only if the edge $y_{\alpha}-w_{\beta}$ is present in $G$, i.e., $B[I^*]=1$. Hence, the $3$-Girth problem on graphs with $2r+1 = O(\sqrt{N})$ vertices can be used to solve instances of the \textsc{Index} problem of size $N$. Since \textsc{Index} has a lower bound of $\Omega(N)$, it follows that the $3$-Girth problem on graphs of $n$ vertices has a lower bound of $\Omega(n^2)$ bits.
\end{proof}

Super-linear lower bounds for multi-pass algorithms for $k$-Girth were shown in Feigenbaum et al.~\cite{andrew-sicomp-08}.

\begin{remark}
We usually only design \FPT algorithms for \NP-hard problems. However, parameterized streaming algorithms make sense for all graph problems since we are only comparing ourselves against the naive choice of storing all the $O(n^2)$ edges. Hence, here we consider the $k$-Girth problem as an example of a polynomial time solvable problem.
\end{remark}

\subsection{\mbox{Lower bound for approximating size of minimum Dominating Set} on graphs of bounded arboricity}
\label{sec:dom-set-lb}

\begin{theorem}
\label{thm:dom-set-lb}
Let $\beta\geq 1$ be any constant. Then any streaming algorithm which $\frac{\beta}{32}$-approximates the size of a min dominating set on graphs of arboricity $\beta+2$ requires $\Omega(n)$ space.
\end{theorem}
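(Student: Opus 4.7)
The plan is to prove this lower bound via a reduction from a one-way communication complexity problem of linear complexity. The natural candidate is \textsc{Index} (used throughout Section~\ref{sec:tight-bps}), possibly composed in a direct-sum fashion so that each queried bit of Alice's string controls an independent local gadget. Alternatively, a Boolean Hidden Matching/Gap-Set-Cover style source in the spirit of the streaming set cover hardness results would work. In either case, what is needed is a graph family in which YES and NO instances differ in min dominating set size by a factor strictly larger than $\beta/32$, while keeping arboricity at most $\beta+2$.

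The construction I would use has three layers. Alice's string of length $\Theta(n)$ is partitioned into $n/\Theta(\beta)$ blocks of size $\Theta(\beta)$. For each block she instantiates a local gap gadget on $\Theta(\beta)$ vertices: a small set of ``element'' vertices and a small set of ``set'' vertices joined bipartitely, where every set vertex has degree at most $\beta$. Because each block induces a bipartite subgraph with one-sided degree at most $\beta$, every subgraph has arboricity at most $\beta$. Bob then augments the construction, based on his query index, with a constant number of ``apex'' connector vertices that identify the block and element being queried; these apex vertices are what raise the final arboricity by the additive $+2$. The gadget is designed so that: (i) in the YES case (the queried bit is $1$), the targeted block is dominatable by $O(1)$ vertices, and thus the whole graph has a dominating set of size $O(n/\beta)$; (ii) in the NO case, at least one block contributes $\Omega(\beta)$ to any dominating set, blowing the solution up to $\Omega(n)$. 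Taking the constants to give a ratio strictly larger than $\beta/32$ allows any $\beta/32$-approximation to distinguish the two cases.

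To complete the reduction, Alice streams the edges of her gadget portion into the algorithm, sends the algorithm's state to Bob, and Bob continues the stream with the $O(1)$ edges of his query gadget. The final approximate value of min dominating set separates the YES and NO cases by construction, so Bob can recover his answer. A sub-$\Omega(n)$ space algorithm would therefore yield a sub-$\Omega(N)$ one-way communication protocol, contradicting the lower bound of the source problem.

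The main obstacle will be the combined design of the gadget so that it simultaneously (a) keeps arboricity at most $\beta+2$ for \emph{every} subgraph (not merely the total edge count, since arboricity is a worst-case subgraph density), (b) realises a YES/NO dominating-set gap of at least $\beta/32$ after summing local contributions, and (c) maps cleanly onto the bits of \textsc{Index} so that Bob's query is honestly encoded. The arboricity-$\beta+2$ requirement is the tightest constraint: the apex/connector vertices are limited to contributing at most $2$ extra to the arboricity, which forces the connector gadget to be essentially a pair of stars or equivalent low-density structures, and the main bipartite part must be carefully balanced so that no subregion exceeds $\beta$ edge-to-vertex density. Once the gadget is built and verified, the rest is the standard two-player reduction argument used throughout Section~\ref{sec:tight}.
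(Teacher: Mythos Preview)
Your concrete \textsc{Index}-based construction does not produce the gap you claim. If Bob's single queried bit governs only one block, then the YES and NO instances differ in that one block alone. For the YES case to have global dominating-set size $O(n/\beta)$, every non-queried block must already be $O(1)$-dominatable regardless of Alice's bits there; but then in the NO case only the queried block costs an extra $\Omega(\beta)$, so the total is $O(n/\beta)+O(\beta)$, not $\Omega(n)$. If instead the non-queried blocks each require $\Theta(\beta)$, both cases already cost $\Theta(n)$ and a $\beta/32$-approximation cannot separate them. A single \textsc{Index} bit simply cannot induce a multiplicative $\beta/32$ gap in the global optimum via disjoint local gadgets; this is exactly why the set-cover lower bounds you allude to use information complexity on a correlated distribution rather than a black-box \textsc{Index} reduction.

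The paper's proof is of that latter type and is structurally quite different from your plan. It adapts the Assadi--Khanna--Li streaming set-cover argument. Alice's input is $n$ sets $S'_i=\{i\}\cup S_i$ with each $S_i$ a uniformly random $\beta$-subset of $[n]$; these are the closed neighborhoods $N[i]$ in a graph on $[n+1]$. Bob draws a hidden $i^*\in[n]$ and $\theta\in\{0,1\}$ and sets $N[n+1]=\{n+1\}\cup T$, where $\bar T=[n]\setminus T$ has size $\beta/8$ and is sampled inside $S_{i^*}$ if $\theta=0$ and inside $[n]\setminus S_{i^*}$ if $\theta=1$. Then $\opt=2$ deterministically when $\theta=0$ (namely $\{i^*,n+1\}$), while when $\theta=1$ a Chernoff-plus-union-bound argument shows $\opt>\beta/16$ with probability $1-o(1)$. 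The arboricity bound $\beta+2$ is obtained from the global edge count, not from per-gadget degree control. The $\Omega(n)$ space bound is then proved by \emph{information cost}: one re-expresses the input distribution via independent random $(2\beta,16)$-partitions $P_i$, applies the chain rule to get $\ICost{\protds}{\diste}\ge\sum_i I(\bprotds;\bS_i\mid\bP_i)$, and uses Fano's inequality (the protocol can predict $\theta$) together with a symmetry argument over the $t=16$ blocks to show each summand is $\Omega(1)$.

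So the two ingredients you are missing are: (i) the hard instance has one global ``Bob vertex'' and the gap is $2$ versus $\Theta(\beta)$, not a sum of many local block gaps; and (ii) the lower bound is an information-complexity computation on a non-product distribution, not a one-shot reduction from \textsc{Index}.
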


Note that Theorem~\ref{thm:dom-set-lb} shows that the naive algorithm which stores all the $O(n\beta)$ edges of an $\beta$-arboriticy graph is essentially optimal. Our lower bound holds even for randomized algorithms (required to have success probability $\geq 3/4$) and also under the vertex arrival model, i.e., we see at once all edges incident on a vertex.
We \textbf{(very) closely} follow the outline from~\cite[Theorem 4]{set-cover-stoc-16} who used this approach for showing that any $\alpha$-approximation for estimating size of a minimum dominating set in general graphs\footnote{They actually talk about Set Cover} requires $\tilde{\Omega}(\frac{n^2}{\alpha^2})$ space. Because we are restricted to bounded arboriticy graphs, we cannot just use their reduction as a black-box but need to adapt it carefully for our purposes.

Let $V(G)=[n+1]$, and $\FC_{\beta}$ be the collection of all subsets of $[n]$ with cardinality $\beta$. Let \dsest be the problem of $\frac{\beta}{32}$-approximately estimating the size of a minimum dominating set on graphs of arboricity $\beta+2$. Consider the following distribution $\diste$ for \dsest.

\textbox{Distribution $\diste$: \textnormal{A hard input distribution for \dsest.}}{
	\medskip 
\begin{itemize}
\item \textbf{Alice.} The input of Alice is a collection of $n$ sets $\mathcal{S}' = \{S'_{1}, S'_{2}, \ldots, S'_{n}\}$ where for each $i \in [n]$ we have that $S'_{i}=\{i\}\cup S_i$ with $S_i$ being a set chosen independently and uniformly at random from $\FC_{\beta}$.
\item \textbf{Bob.} Pick $\theta \in \set{0,1}$ and $\istar \in [n]$ independently and
  uniformly at random; the input of Bob is a single set $T$ defined as follows.
  \begin{itemize}
  \item If $\theta = 0$, then $\bar{T}=[n]\setminus T$ is a set of size $\beta/8$ chosen uniformly
    at random from $S_\istar$. 
  \item If $\theta = 1$, then $\bar{T}=[n]\setminus T$ is a set of size $\beta/8$ chosen uniformly
    at random from $[n] \setminus S_\istar$.
  \end{itemize}
\end{itemize}
}
Recall that $\opt(\mathcal{S}',T)$ denotes the size of the minimum \emph{dominating set} of the
graph $G$ whose edge set is given by $N[i]=\{i\}\cup S_{i}$ for each $i\in [n]$ and $N[n+1]=\{n+1\}\cup T$. It is easy to see that $G$ has arboricity $\leq (\beta+2)$ since it has $(n+1)$ vertices and $\leq (\beta+1)n+ (1+n-\frac{\beta}{8})$ edges. We first establish the following lemma regarding the parameter $\theta$ and $\opt(\mathcal{S}',T)$
in the distribution $\diste$.

\begin{lemma}
\label{lem:theta-random}
	Let $\alpha=\frac{\beta}{32}$. Then, for $(\mathcal{S}',T) \sim \diste$ we have
	\begin{enumerate}
		\item 
                    $\PR{\opt(\mathcal{S}',T) = 2 \mid \theta=0} = 1$.
		\item 
                    $\PR{\opt(\mathcal{S}',T) > 2\alpha \mid \theta = 1} = 1-o(1)$.
	\end{enumerate}
\end{lemma}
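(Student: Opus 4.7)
When $\theta=0$ we have $\bar{T}\subseteq S_{\istar}$, so $T\supseteq [n]\setminus S_{\istar}$; then the pair $\{\istar,n+1\}$ is a dominating set because
\[
 N[\istar]\cup N[n+1]\;\supseteq\;(\{\istar\}\cup S_{\istar})\cup(\{n+1\}\cup T)\;=\;[n+1].
\]
This gives $\opt\le 2$. To rule out $\opt=1$: vertex $n+1$ misses the nonempty $\bar{T}$; any $i\in[n]$ has $|N[i]|\le 1+|S_i|+|\{v:i\in S_v\}|$, and the last term is distributed as $\mathrm{Bin}(n-1,\beta/n)$, bounded by $O(\beta+\log n)=o(n)$ simultaneously over all $i$ via Chernoff and a union bound, so no singleton dominates with probability $1-o(1)$.

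\textbf{Part 2 ($\theta=1$).} The plan is to union-bound over candidate dominating sets of size at most $2\alpha=\beta/16$. For any such $\mathcal{C}\subseteq[n+1]$, set $\mathcal{C}'=\mathcal{C}\cap[n]$ and define the ``Alice-reach''
\[
 R(\mathcal{C}')\;=\;\bigcup_{j\in\mathcal{C}'}\bigl(\{j\}\cup S_j\cup\{v\in[n]:j\in S_v\}\bigr).
\]
Since vertex $n+1$ is adjacent only to $T$, it covers nothing in $\bar{T}$; hence a necessary condition for $\mathcal{C}$ to dominate $G$ is $\bar{T}\subseteq R(\mathcal{C}')$. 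By Chernoff plus a union bound, with probability $1-o(1)$ over $\mathcal{S}'$ every vertex of $[n]$ has degree at most $D=O(\beta+\log n)$, so on this event $|R(\mathcal{C}')|\le 2\alpha\, D =: M$ uniformly in $\mathcal{C}$. Conditioning further on $\istar$, the set $\bar{T}$ is uniform over $(\beta/8)$-subsets of $[n]\setminus S_{\istar}$, so
\[
 \Pr\bigl[\bar{T}\subseteq R(\mathcal{C}')\bigr]\;\le\;\binom{M}{\beta/8}\Big/\binom{n-\beta}{\beta/8}\;=\;O(M/n)^{\beta/8}.
\]
A union bound over the $\binom{n+1}{\le 2\alpha}=O(n^{\beta/16})$ candidate sets yields a total failure probability of $O\bigl(M^{\beta/8}/n^{\beta/16}\bigr)=o(1)$, using $\alpha=\beta/32$ and $M=\mathrm{polylog}(n)$ for constant $\beta$.

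\textbf{Main obstacle.} The trickiest part is correctly handling the three-way dependence between $\mathcal{S}'$, $\istar$, and $\bar{T}$: the cleanest order is to condition on $(\mathcal{S}',\istar)$ first (so that $R(\mathcal{C}')$ is deterministic and $S_{\istar}$ is frozen) before taking probability over $\bar{T}$. The more delicate quantitative point is the degree bound, since when $\beta$ is a small constant $|\{v:j\in S_v\}|$ is only $O(\log n)$-concentrated; however this polylogarithmic slack is comfortably absorbed because the union bound contributes $n^{\beta/16}$ while the per-set probability is $(M/n)^{\beta/8}=(M/n)^{2\cdot\beta/16}$, giving an extra factor of $n^{-\beta/16}$ that closes the argument.
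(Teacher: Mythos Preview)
Your proof is correct and proceeds along a different route from the paper's.

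For Part 1, the paper simply exhibits the dominating set $\{\istar,n+1\}$ and calls the claim ``immediate''; it never argues $\opt\ge 2$. Your Chernoff argument does address this, but only with probability $1-o(1)$ rather than $1$ --- and in fact that is best possible, since with tiny positive probability a single vertex can dominate everything. Only $\opt\le 2$ is needed downstream, so this discrepancy is cosmetic.

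For Part 2 the key difference is which source of randomness you exploit after fixing a candidate set. The paper fixes $\bar T$ (hence $T_0=\bar T\setminus\{\mu_1,\dots,\mu_{2\alpha}\}$) and uses the randomness of the sets $S_{\mu_1},\dots,S_{\mu_{2\alpha}}$: it observes that covering $T_0$ forces the total hit count $X=\sum_{j,k}\mathbb{1}[k\in S_{\mu_j}]$ to be at least $|T_0|$, then bounds $\Pr[X\ge |T_0|]$ by Chernoff (using that the $S_{\mu_j}$ are independent of $T_0$ once $\istar$ is excluded from the candidate). You instead freeze all of $\mathcal{S}'$ and $\istar$, bound $|R(\mathcal{C}')|$ uniformly via a degree-concentration step, and take the probability over $\bar T$, which is uniform on $(\beta/8)$-subsets of $[n]\setminus S_{\istar}$. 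Both union bounds close.

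A concrete advantage of your route is that, by including $\{v:j\in S_v\}$ in $R(\mathcal{C}')$, you transparently handle the backward edges of the undirected graph; the paper's argument (ported from a Set Cover lower bound) treats domination as one-directional coverage and does not explicitly address this. Conversely, the paper's calculation avoids your auxiliary degree bound and its polylog slack: your numerics rely on $\beta$ being a constant (which is the regime of the theorem), whereas the paper's Chernoff computation would scale to moderately growing $\beta$ without change.
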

\begin{proof}
  The first claim is immediate since by construction, when $\theta = 0$ we have that $T
  \union S'_\istar = [n+1]$ and hence $\{n+1, \istar\}$ forms a dominating set of size $2$.

  We now prove the second claim, i.e., when $\theta=1$. The vertex $(n+1)$ dominates all vertices in the set $T\cup \{n+1\}$. It remains to dominate vertices of $\bar{T}=[n]\setminus T$. Since $i\in S'_{i}$ for each $i\in [n]$ it follows that the set $\{j\ :\ j\in \bar{T}\}\cup \{n+1\}$ forms a dominating set of size $1+\frac{\beta}{8}$ for $G$.   Fix a collection $\widehat{\mathcal{S}'}$ of $2\alpha$ sets in $\mathcal{S}' \setminus \set{S'_\istar}$, and let $\widehat{\mathcal{S}'} = \{S'_{\mu_{1}}, S'_{\mu_2}, \ldots, S'_{\mu_{2\alpha}}\}$. Let $T_0 = \bar{T}\setminus \{\mu_1, \mu_2, \ldots, \mu_{2\alpha}\}$, and note that $|T_0|= |\bar{T}|-2(\frac{\beta}{32}) = \frac{\beta}{16}$. Hence, we have that $\widehat{\mathcal{S}} = \{S_{\mu_{1}}, S_{\mu_2}, \ldots, S_{\mu_{2\alpha}}\}$ has to cover $T_0$, where $S_{\mu_{j}} = S'_{\mu_{j}}$ for each $1\leq j\leq 2\alpha$ and the sets $\{S_{\mu_{1}}, S_{\mu_2}, \ldots, S_{\mu_{2\alpha}}\}$ are chosen independent of $T_0$ (according to the distribution $\diste$). We first analyze the probability that $\widehat{\mathcal{S}}$ covers $T_0$ and then take union bound over
  all choices of $2\alpha$ sets from $\mathcal{S}' \setminus \set{S'_\istar}$.


  Fix any choice of $T_0$; for each
  element $k \in T_0$, and for each set $S_j \in \widehat{\mathcal{S}}$, define an indicator random
  variable $\bX^{j}_k \in \set{0,1}$, where $\bX^{j}_k=1$ iff $k \in S_j$.  Let $\bX :=
  \sum_{j}\sum_{k} \bX^{j}_k$ and notice that:

  \[
	\Ex[\bX] = \sum_{j}\sum_{k} \Ex[\bX^{j}_k] = (2\alpha) \cdot (\frac{\beta}{16}) \cdot (\frac{\beta}{n}) = \frac{\alpha \beta^{2}}{8n}
	\]
	We have,
	\[
		\PR{\text{$\widehat{\mathcal{S}}$ covers $T_0$}} \leq \PR{\bX \geq \frac{\beta}{16}} = \PR{\bX \geq \frac{ n}{2\alpha\beta} \cdot \Ex[\bX]}
	\]
	It is easy to verify that the $\bX^{j}_k$ variables are negatively
        correlated. Hence, applying the extended Chernoff bound\footnote{Let $\bX_1, \bX_2,\ldots, \bX_r$ be a sequence of negatively correlated Boolean random variables, and let $\bX=\sum_{i=1}^{r} \bX_i$. Then $\Pr(|\bX-\Ex[\bX]|\geq \epsilon\cdot \Ex[\bX]) \leq 3\cdot \text{exp}(\frac{-\epsilon^{2}\Ex[\bX]}{3})$} due to Panconesi and Srinivasan~\cite{DBLP:journals/siamcomp/PanconesiS97} we get
    \[
		\PR{\bX \geq \frac{n}{2\alpha \beta} \cdot \Ex[\bX]} \leq 3\exp\paren{\frac{-\epsilon^{2} \Ex[\bX]}{3}}\ \text{where}\ 1+\epsilon=\frac{ n}{2\alpha\beta}
	\]

        Finally, by union bound,
	\begin{align*}
		\Pr(\opt(\mathcal{S}',T) \leq 2\alpha) &\leq \PR{\exists~\Salpha \text{ covers } T_0} \leq {n \choose 2\alpha} \cdot 3\exp\paren{\frac{-\epsilon^{2} \Ex[\bX]}{3}}
\\
		&\leq \exp\paren{2\alpha\cdot\log{n}}\cdot 3\exp\paren{\frac{-\epsilon^{2} \Ex[\bX]}{3}}
	\end{align*}

Since $\alpha=\frac{\beta}{32}$, one can easily check that $\exp\paren{\frac{-\epsilon^{2} \Ex[\bX]}{3}} \leq \exp\paren{-3\alpha\cdot \log n} $ and hence we have
\begin{align*}
		\Pr(\opt(\mathcal{S}',T) \leq 2\alpha) &\leq \PR{\exists~\Salpha \text{ covers } \bar{T}} \leq {n \choose 2\alpha} \cdot 3\exp\paren{\frac{-\epsilon^{2} \Ex[\bX]}{3}}
\\
		&\leq \exp\paren{2\alpha\cdot\log{n}}\cdot 3\exp\paren{\frac{-\epsilon^{2} \Ex[\bX]}{3}}\\
        &\leq \exp\paren{2\alpha\cdot\log{n}}\cdot 3\exp\paren{-3\alpha\cdot \log n}\\
        &= o(1)
	\end{align*}

\end{proof}

Observe that distribution $\diste$ is not a product distribution due to the correlation
between the input given to Alice and Bob. However, we can express the distribution as a
convex combination of a relatively small set of product distributions. To do so, we need the following definition.  For
integers $k,t$ and $n$, a collection $P$ of $t$ subsets of $[n]$ is called a \emph{random
  $(k,t)$-partition} if the $t$ sets in $P$ are constructed as follows: Pick $k$ elements
from $[n]$, denoted by $S$, uniformly at random, and partition $S$ randomly into $t$ sets
of equal size.  We refer to each set in $P$ as a \emph{block}.

\textbox{An alternative definition of the distribution \diste.}{
\begin{enumerate}
	\item[] \textbf{Parameters:} $~~~~~~~~~~k = 2\beta~~~~~~~~~~p = \frac{\beta}{8}~~~~~~~~~~t = 16$
	\item For any $i \in [n]$, let $P_i$ be a random $(k,t)$-partition in $[n]$ (chosen independently).
	\item The input to Alice is $\mathcal{S}' = \paren{S'_1,\ldots,S'_n}$, where for each $i$ we have $S'_i = \{i\}\cup S_i$ and $S_i$ is created by picking $t/2$ blocks from $P_i$ uniformly at random.
	\item The input to Bob is a set $T$ where $\bar{T}$ is created by first picking an
          $\istar \in [n]$ uniformly at random, and then picking a block from $P_\istar$
          uniformly at random.
\end{enumerate}
}
To see that the two formulations of the distribution $\diste$ are indeed equivalent, notice that $(i)$ the input given to Alice in the new formulation is a collection of sets of size $\beta$ chosen
independently and uniformly at random (by the independence of $P_i$'s), and $(ii)$ the complement of the set given to Bob is a set of size $\frac{\beta}{8}$ which, for $\istar \in_R [n]$, with probability half, is chosen uniformly at
random from $S_\istar$, and with probability half, is chosen from $[n] \setminus S_{\istar}$ (by the randomness in the choice of each block in $P_\istar$).

\subsubsection{The Lower Bound for the Distribution $\diste$}
\label{app:lb-distribution}

\newcommand{\bTbar}{\ensuremath{\mathbf{\bar{T}}}\xspace}

\textbf{Notation}: First we set up some notation to be used throughout this section: we use bold face letters to represent random variables. For any random variable $\bX$,
$\supp{\bX}$ denotes its support set. We define $\card{\bX}: = \log{\card{\supp{\bX}}}$.  For any $k$-dimensional tuple $X = (X_1,\ldots,X_k)$ and any $i \in [k]$, we
define $X^{<i}:=(X_1,\ldots,X_{i-1})$, and $X^{-i}:=(X_1,\ldots,X_{i-1},X_{i+1},\ldots,X_{k})$.  The notation ``$X\in_R U$'' indicates that $X$ is chosen uniformly at random from a set $U$. We denote the \emph{Shannon Entropy} of a random variable $\bA$ by
$H(\bA)$ and the \emph{mutual information} of two random variables $\bA$ and $\bB$ by
$I(\bA;\bB) = H(\bA) - H(\bA \mid \bB) = H(\bB) - H(\bB \mid \bA)$. If the distribution
$\dist$ of the random variables is not clear from the context, we use $H_\dist(\bA)$
(resp. $I_{\dist}(\bA;\bB)$). We use $H_2$ to denote the binary entropy function where for any real number $0
< \delta < 1$, $H_2(\delta) = \delta\log{\frac{1}{\delta}} +
(1-\delta)\log{\frac{1}{1-\delta}}$.

We are now ready to show our desired lower bound: fix any $\delta$-error protocol $\protds$ (set $\delta = 1/4$) for \dsest on the
distribution $\diste$. Recall that $\bprotds$ denotes the random variable for the concatenation of the message of Alice with the
public randomness used in the protocol $\protds$. We further use $\bPP := (\bP_1,\ldots,\bP_t)$ to denote the random partitions $(P_1,\ldots,P_{t})$, $\bI$ for the choice of the special index $\istar$, and $\bT$
for the parameter $\theta \in \set{0,1}$, whereby $\theta = 0$ if and only if $\bar{T} \subseteq S_\istar$.


We make the following simple observations about the distribution $\diste$. The proofs are straightforward.

\begin{remark}\label{rem:dist} In the distribution $\diste$,
	\begin{enumerate}
		\item \label{p1} The random variables $\bSS$, $\bPP$, and $\bprotds(\bSS)$ are all
                  independent of the random variable $\bI$.
		\item \label{p2} For any $i \in [m]$, conditioned on $\bP_i = P$, and $\bI = i$, the
                  random variables $\bS_i$ and $\bTbar$ are independent of each
                  other. Moreover, $\supp{\bS_i}$ and $\supp{\bTbar}$ contain,
                  respectively, ${t \choose \frac{t}{2}}$ and $t$ elements and both
                  $\bS_i$ and $\bTbar$ are uniform over their support.
		\item \label{p3} For any $i \in [m]$, the random variable $\bS_i$ is
                  independent of both $\bSS^{-i}$ and $\bPP^{-i}$.
	\end{enumerate}
\end{remark}

We will show our claimed lower bound on the space complexity of any streaming algorithm for \dsest through the well-known connection to one-way communication complexity~\cite{nisan-cc-book}.

\begin{definition}
  The \emph{communication cost} of a protocol $\prot$ for a problem $P$ on an input
  distribution $\dist$, denoted by $\norm{\prot}$, is the worst-case size of the message
  sent from Alice to Bob in the protocol $\prot$, when the inputs are chosen from the distribution
  $\dist$. \newline The \emph{communication complexity} $\CC{P}{\dist}{\delta}$ of a
  problem $P$ with respect to a distribution $\dist$ is the minimum communication cost of
  a $\delta$-error protocol $\prot$ over $\dist$.
\end{definition}

However, since the information complexity is a lower bound on the communication complexity~\cite{amit-yao}, we will instead bound the information complexity of $\protds$ (set $\delta = 1/4$) for \dsest on the
distribution $\diste$.

\begin{definition}
  Consider an input distribution $\dist$ and a protocol $\prot$ (for some problem
  $P$). Let $\bX$ be the random variable for the input of Alice drawn from $\dist$, and let
  $\bprot := \bprot(\bX)$ be the random variable denoting the message sent from Alice to
  Bob \emph{concatenated} with the public randomness $\bR$ used by $\prot$.  The
  information cost $\ICost{\prot}{\dist}$ of a one-way protocol $\prot$ with respect to
  $\dist$ is $I_\dist(\bprot;\bX)$. \newline The \emph{information
    complexity} $\IC{P}{\dist}{\delta}$ of $P$ with respect to a distribution $\dist$ is
  the minimum $\ICost{\prot}{\dist}$ taken over all one-way $\delta$-error protocols
  $\prot$ for $P$ over $\dist$.
\end{definition}

Our goal now is to lower bound $\ICost{\protds}{\diste}$, which then also gives a desired lower bound on $\norm{\protds}$. We start by simplifying the expression for $\ICost{\protds}{\diste}$.

\begin{lemma}
\label{lem:info-cost-bound}
	$\ICost{\protds}{\diste} \geq \sum_{i=1}^{n} I(\bprotds ; \bS_i \mid \bP_i)$
\end{lemma}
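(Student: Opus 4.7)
The plan is to first expand $\ICost{\protds}{\diste}=I(\bprotds;\bSS)$ via a carefully chosen chain rule decomposition over the partitions $\bPP$, and then reduce each summand to $I(\bprotds;\bS_i\mid \bP_i)$ using the independence structure recorded in Remark~\ref{rem:dist}. There are two ingredients in this: (a) going from $I(\bprotds;\bSS)$ to $I(\bprotds;\bSS\mid \bPP)$, and (b) passing from the conditioning $(\bS_{<i},\bPP)$ to the simpler conditioning $\bP_i$ inside each term.

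For (a), the key observation is that the public randomness $\bR$ used by $\protds$ is independent of $(\bSS,\bPP)$, and conditioned on $\bSS$ the transcript $\bprotds$ is a (deterministic) function of $\bR$ alone. Hence $\bprotds\perp \bPP\mid \bSS$, which gives $I(\bprotds;\bPP\mid \bSS)=0$. Using the chain rule two ways on $I(\bprotds;\bSS,\bPP)$ then yields
\[
I(\bprotds;\bSS)=I(\bprotds;\bSS,\bPP)=I(\bprotds;\bPP)+I(\bprotds;\bSS\mid \bPP)\ \geq\ I(\bprotds;\bSS\mid \bPP).
\]

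For (b), I will apply the chain rule across coordinates to get
\[
I(\bprotds;\bSS\mid \bPP)=\sum_{i=1}^{n} I(\bprotds;\bS_i\mid \bS_{<i},\bPP),
\]
and then show term by term that
\[
I(\bprotds;\bS_i\mid \bS_{<i},\bPP)\ \geq\ I(\bprotds;\bS_i\mid \bP_i).
\]
This is the main step, and I expect it to be the only subtle point. It follows from a standard ``conditioning on independent side information only increases mutual information'' lemma: if $X\perp Z\mid Y$, then $I(X;M\mid Y,Z)\geq I(X;M\mid Y)$. To see this, expand $I(X;M,Z\mid Y)$ in two ways; one way gives $I(X;M\mid Y,Z)$ (since $I(X;Z\mid Y)=0$), and the other gives $I(X;M\mid Y)+I(X;Z\mid M,Y)\geq I(X;M\mid Y)$.

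I apply this with $X=\bS_i$, $Y=\bP_i$, $Z=(\bS_{<i},\bPP^{-i})$, and $M=\bprotds$. By Remark~\ref{rem:dist}.(\ref{p3}), $\bS_i$ is independent of both $\bSS^{-i}$ and $\bPP^{-i}$, so in particular $\bS_i\perp (\bS_{<i},\bPP^{-i})\mid \bP_i$ (since $\bP_i$ is independent of $(\bS_{<i},\bPP^{-i})$ as well). The lemma then gives $I(\bprotds;\bS_i\mid \bP_i,\bS_{<i},\bPP^{-i})\geq I(\bprotds;\bS_i\mid \bP_i)$, which is exactly the claimed inequality since $(\bP_i,\bS_{<i},\bPP^{-i})=(\bS_{<i},\bPP)$. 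Summing over $i$ and combining with (a) completes the proof.
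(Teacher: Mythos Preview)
Your proposal is correct and follows essentially the same route as the paper. Both arguments first pass from $I(\bprotds;\bSS)$ to $I(\bprotds;\bSS\mid\bPP)$ using that $\bprotds\perp\bPP\mid\bSS$, then apply the chain rule over coordinates, and finally drop the extra conditioning $(\bSS^{<i},\bPP^{-i})$ using the conditional independence $\bS_i\perp(\bSS^{<i},\bPP^{-i})\mid\bP_i$; the only cosmetic difference is that the paper carries out step~(b) by splitting each term into two entropies and bounding them separately, whereas you package the same two facts into the standard lemma ``$X\perp Z\mid Y\Rightarrow I(X;M\mid Y,Z)\geq I(X;M\mid Y)$''.
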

\begin{proof}
	We have,
	\begin{align*}
		\ICost{\protds}{\diste}&= I(\bprotds; \bSS) \geq I(\bprotds ; \bSS \mid \bPP)
	\end{align*}
	where the inequality holds since $(i)$ $H(\bprotds) \geq H(\bprotds \mid \bPP)$ and $(ii)$ $H(\bprotds \mid \bSS) = H(\bprotds \mid \bSS,\bPP)$ as $\bprotds$ is independent of $\bPP$ conditioned on $\bSS$.
	We now bound the conditional mutual information term in the above equation.
	\begin{align*}
		 I(\bprotds ; \bSS \mid \bPP) &= \sum_{i=1}^{m} I(\bS_i ; \bprotds \mid \bPP,\bSS^{<i}) \tag{the chain rule for the mutual information} \\
		&= \sum_{i=1}^{m} H(\bS_i \mid \bPP,\bSS^{<i}) - H(\bS_i \mid \bprotds,\bPP,\bSS^{<i}) \\
		&\geq \sum_{i=1}^{m} H(\bS_i \mid \bP_i) - H(\bS_i \mid \bprotds,  \bP_i) \\
                 & = \sum_{i=1}^{m} I(\bS_i ; \bprotds \mid \bP_i)
	\end{align*}
	The inequality holds since:
	\begin{enumerate}[(i)]
	\item $H(\bS_i \mid \bP_i) = H(\bS_i \mid \bP_i,\bPP^{-i}, \bSS^{<i}) = H(\bS_i \mid \bPP,\bSS^{<i})$
	because conditioned on $\bP_i$, $\bS_i$ is independent of $\bPP^{-i}$ and $\bSS^{<i}$ (Remark~\ref{rem:dist}-(\ref{p3})), hence the equality since conditioning reduces the entropy.
	\item
        $H(\bS_i \mid \bprotds,\bP_i) \geq H(\bS_i \mid \bprotds, \bP_i, \bPP^{-i},\bSS^{<i}) =
        H(\bS_i \mid \bprotds, \bPP,\bSS^{<i})$ since conditioning reduces the entropy.
	\end{enumerate}	
\end{proof}

Equipped with Lemma~\ref{lem:info-cost-bound}, we only need to bound  $\sum_{i \in [n]} I(\bprotds ; \bS_i \mid \bP_i)$. Note that,
\begin{align}
	\sum_{i=1}^{n} I(\bprotds ; \bS_i \mid \bP_i) = \sum_{i=1}^{n} H(\bS_i \mid \bP_i) - \sum_{i=1}^{n} H(\bS_i \mid \bprotds , \bP_i) \label{eq:minfo-bound}
\end{align}
Furthermore, for each $i \in [n]$, $\card{\supp{\bS_i \mid \bP_i}} = {t \choose \frac{t}{2}}$ and $\bS_i$ is uniform over its support (Remark~\ref{rem:dist}-(\ref{p2})); hence we have
\begin{align}
	\sum_{i=1}^{n} H(\bS_i \mid  \bP_i) =  \sum_{i=1}^{n} \card{\bS_i \mid  \bP_i} =  \sum_{i=1}^{n} \log{\card{\supp{\bS_i \mid  \bP_i}}} = \sum_{i=1}^{n} \log{t \choose \frac{t}{2}}
= 13.64n
\label{eq:ent-bound}
\end{align}
 since $t=16$. Consequently, we only need to bound $\sum_{i=1}^{n} H(\bS_i \mid \bprotds, \bP_i)$. In order to do so, we show that $\protds$ can be used to estimate the value of
 the parameter $\theta$, and hence we only need to establish a lower bound for the problem of estimating $\theta$.

\begin{lemma}\label{lem:sc-to-theta}
  Any $\delta$-error protocol $\protds$ over the distribution $\diste$ can be used to
  determine the value of $\theta$ with error probability $\delta + o(1)$.
\end{lemma}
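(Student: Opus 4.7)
\textbf{Proof proposal for Lemma~\ref{lem:sc-to-theta}.}
The plan is a standard gap-to-decision reduction: I will use the approximation guarantee of $\protds$ together with the completeness/soundness gap established in Lemma~\ref{lem:theta-random} to read off $\theta$ from the numerical estimate returned by the protocol.

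First I would recall from Lemma~\ref{lem:theta-random} (with $\alpha = \beta/32$) that under $\diste$ we have $\opt(\mathcal{S}',T) = 2$ whenever $\theta = 0$, while $\opt(\mathcal{S}',T) > 2\alpha$ whenever $\theta = 1$, the latter holding with probability $1 - o(1)$ over the randomness of $\diste$ conditioned on $\theta=1$. Since the gap between the two values is a factor of $\alpha$, and $\protds$ is a $\delta$-error $\frac{\beta}{32}$-approximation protocol for \dsest, its output $\tilde{O}$ satisfies $\opt \le \tilde{O} \le \alpha \cdot \opt$ with probability at least $1 - \delta$ (over its internal randomness and the input).

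The decision rule is then: run $\protds$ on $(\mathcal{S}',T)$ to obtain $\tilde{O}$, and output $\hat{\theta} = 0$ if $\tilde{O} \le 2\alpha$ and $\hat{\theta} = 1$ otherwise. For correctness I would split into cases. If $\theta = 0$, conditioned on $\protds$ succeeding we have $\tilde{O} \le \alpha \cdot \opt = 2\alpha$, so $\hat{\theta} = 0$; the error is at most $\delta$. If $\theta = 1$, by Lemma~\ref{lem:theta-random} we have $\opt > 2\alpha$ except on an $o(1)$-probability event, and on the $\protds$-success event (probability $\ge 1 - \delta$) we have $\tilde{O} \ge \opt > 2\alpha$, so $\hat{\theta} = 1$; by a union bound the error in this case is at most $\delta + o(1)$.

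The main (minor) subtlety is just the boundary case where $\opt$ equals $2\alpha$ exactly, and making sure the approximation convention is consistent with the two endpoints of the gap. This is handled by placing the threshold at $2\alpha$ and using the strict inequality $\opt > 2\alpha$ provided by Lemma~\ref{lem:theta-random}. Combining the two cases by a final union bound over the choice of $\theta$ gives a procedure that determines $\theta$ with error probability at most $\delta + o(1)$, as required.
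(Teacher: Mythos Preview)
Your proposal is correct and follows essentially the same approach as the paper: run $\protds$, threshold its $\alpha$-approximate estimate at $2\alpha$, and invoke the gap from Lemma~\ref{lem:theta-random} to bound the error by $\delta + o(1)$. Your treatment of the boundary case at $2\alpha$ is in fact slightly more careful than the paper's terse proof, which uses a strict inequality for the threshold.
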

\begin{proof}
  Alice sends the message $\protds(\SS)$ as before. Using this message, Bob can compute an
  $\alpha$-estimation of the dominating set problem using $\protds(\SS)$ and his input.
  If the estimation is less than $2\alpha$, we output $\theta = 0$ and otherwise we output $\theta=1$. The
  bound on the error probability follows from Lemma~\ref{lem:theta-random}.
\end{proof}

\begin{remark}\label{rem:known-i}
  We assume that in \dsest over the distribution $\diste$, Bob is additionally provided with the special index $\istar$.
\end{remark}
Note that this assumption can only make our lower bound stronger since Bob can always ignore this information and solve the original \dsest.

Let $\gamma$ be the function that estimates $\theta$ used in Lemma~\ref{lem:sc-to-theta}; the input to $\gamma$ is the message given from Alice, the public coins used by the players,
the set $\bar{T}$, and (by Remark~\ref{rem:known-i}) the special index $\istar$. We have,
\begin{align*}
  \Pr(\gamma(\bprotds,\bTbar,\bI) \neq \bT) \leq \delta+o(1)
\end{align*}

Hence, by Fano's inequality\footnote{For any binary random variable $\bB$ and any (possibly randomized) function f that
predicts $\bB$ based on $\bA$, if $\Pr($f(\bA)$ \neq \bB) = \delta$, then $H(\bB \mid \bA)
\leq H_2(\delta)$.}
\begin{align}
	H_2(\delta+o(1)) &\geq H(\bT \mid \bprotds,\bTbar, \bI) \notag \\
	&= \Ex_{i \sim \bI} \Bracket{H(\bT \mid \bprotds,\bTbar, \bI = i)} \notag \\
	&= \frac{1}{n} \sum_{i=1}^{m} H(\bT \mid \bprotds,\bTbar, \bI = i) \label{eq:i-sum}
\end{align}

We now show that each term above is lower bounded by $H(\bS_i \mid \bprotds,\bP_i) / t$ and hence we obtain the desired upper bound on $H(\bS_i \mid \bprotds,\bP_i)$ in Equation~(\ref{eq:minfo-bound}).

\begin{lemma}
\label{lem:theta-to-s}
  For any $i \in [n]$, $H(\bT \mid \bprotds , \bTbar , \bI = i) \geq  H(\bS_i \mid \bprotds,\bP_i) / t$.
\end{lemma}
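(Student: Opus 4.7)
The plan is to reduce the target inequality to a subadditivity bound on $H(\bS_i \mid \bprotds, \bP_i)$, after decomposing $\bS_i$ into block-indicator random variables determined by the partition $\bP_i$. Fix the conditioning $\bP_i = P$ and let $B_1,\ldots,B_t$ be the blocks of $P$. For each $j \in [t]$, define $\bZ_j \in \{0,1\}$ to be the indicator that block $B_j$ is among the $t/2$ blocks whose union equals $\bS_i$. Because $\bS_i$ is a deterministic function of $(\bZ_1,\ldots,\bZ_t)$ once $P$ is revealed, subadditivity of Shannon entropy immediately gives
\[
H(\bS_i \mid \bprotds, \bP_i = P) \;=\; H(\bZ_1,\ldots,\bZ_t \mid \bprotds, \bP_i = P) \;\leq\; \sum_{j=1}^{t} H(\bZ_j \mid \bprotds, \bP_i = P).
\]

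Next, I would relate each term $H(\bZ_j \mid \bprotds, \bP_i = P)$ to a conditional entropy of $\bT$. Under the alternative formulation of $\diste$, conditioned on $\bI = i$ and $\bP_i = P$, the set $\bTbar$ is a uniformly chosen block of $P$ and is drawn independently of $\bS_i$ (since the block inside $P$ that forms $\bS_i$ and the block selected to be $\bTbar$ are chosen by independent random coins). Moreover, conditioned further on $\bTbar = B_j$, the definition of $\theta$ forces $\bT = 1 - \bZ_j$. Thus
\[
H(\bT \mid \bprotds, \bTbar = B_j, \bI = i, \bP_i = P) \;=\; H(\bZ_j \mid \bprotds, \bI = i, \bP_i = P) \;=\; H(\bZ_j \mid \bprotds, \bP_i = P),
\]
where the last equality uses Remark~\ref{rem:dist}-(\ref{p1}) that $\bprotds$, $\bPP$ and $\bSS$ are all independent of $\bI$, so conditioning on $\bI = i$ can be dropped.

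Averaging the previous identity over the uniform distribution of $\bTbar$ over the $t$ blocks of $P$ and combining with the subadditivity bound yields
\[
H(\bT \mid \bprotds, \bTbar, \bI = i, \bP_i = P) \;=\; \frac{1}{t}\sum_{j=1}^{t} H(\bZ_j \mid \bprotds, \bP_i = P) \;\geq\; \frac{1}{t}\, H(\bS_i \mid \bprotds, \bP_i = P).
\]
Taking expectations over $\bP_i$ on both sides and using that additional conditioning can only reduce entropy, $H(\bT \mid \bprotds, \bTbar, \bI = i) \geq H(\bT \mid \bprotds, \bTbar, \bI = i, \bP_i)$, gives the desired $H(\bT \mid \bprotds, \bTbar, \bI = i) \geq H(\bS_i \mid \bprotds, \bP_i)/t$. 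The main subtlety to check carefully is the two independence claims supplied by Remark~\ref{rem:dist}: that $\bTbar$ is independent of $\bS_i$ given $(\bI=i, \bP_i=P)$ (so that the block identity of $\bTbar$ does not itself leak information about $\bS_i$ beyond $\bprotds$), and that the $\bI = i$ conditioning can be dropped on the right-hand side. Once these are in place, the rest is routine manipulation using subadditivity and the monotonicity of conditional entropy.
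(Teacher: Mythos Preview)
Your proposal is correct and follows essentially the same approach as the paper: define block-indicator variables for $\bS_i$ relative to $\bP_i$, identify $\bT$ with the indicator corresponding to the block $\bTbar$, drop the $\bTbar=B_j$ conditioning via the independence in Remark~\ref{rem:dist}-(\ref{p2}), apply subadditivity, and remove the $\bI=i$ conditioning via Remark~\ref{rem:dist}-(\ref{p1}). The only point to make slightly more explicit is that dropping the $\bTbar=B_j$ conditioning requires $\bTbar$ to be independent not just of $\bS_i$ but of the pair $(\bZ_j,\bprotds)$ given $E$; this holds because $\bprotds$ is a function of $\bSS$ (and public coins) and all of $\bSS$ is independent of $\bTbar$ conditioned on $(\bI=i,\bP_i=P)$, exactly as the paper argues.
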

\begin{proof}
	We have,
	\begin{align*}
		H(\bT \mid \bprotds, \bTbar , \bI = i)&\geq H(\bT \mid \bprotds,\bTbar ,  \bP_i , \bI = i) \tag{conditioning on random variables reduces entropy} \\
		&= \Ex_{P \sim \bP_i \mid \bI = i} \Bracket{H(\bT \mid \bprotds,\bTbar,  \bP_i = P, \bI = i)}
	\end{align*}
	For brevity, let $E$ denote the event $(\bP_i = P , \bI = i)$.  We can write the above equation as,
	\begin{align*}
		 H(\bT \mid \bprotds,\bTbar ,  \bP_i , \bI = i) = \Ex_{P \sim \bP_i \mid \bI = i} \Ex_{\bar{T} \sim \bTbar \mid E} \Bracket{H(\bT \mid \bprotds,\bTbar = \bar{T},  E)}
	\end{align*}
	Note that by Remark~\ref{rem:dist}-(\ref{p2}), conditioned on the event $E$,
        $\bar{T}$ is chosen to be one of the blocks of $P = (B_1,\ldots,B_t)$ uniformly at
        random. Hence,
	\begin{align*}
          H(\bT \mid \bprotds,\bTbar ,  \bP_i , \bI = i) &= \Ex_{P \sim \bP_i \mid \bI = i} \Bracket{ \sum_{j=1}^{t}  \frac{H(\bT \mid \bprotds,\bTbar = B_j, E )}{t}}
	\end{align*}

        Define a random variable $\bX:= (\bX_1,\ldots,\bX_{t})$, where each $\bX_j \in \set{0,1}$ and
        $\bX_j = 1$ if and only if $\bS_i$ contains the block $B_j$. Note that conditioned on $E$,
        $\bX$ uniquely determines the set $\bS_i$. Moreover, notice that
        conditioned on $\bTbar = B_j$ and $E$, $\bT = 0$ if and only if $\bX_j = 1$. Hence,
	\begin{align*}
		 H(\bT \mid \bprotds,\bTbar ,  \bP_i , \bI = i) &= \Ex_{P \sim \bP_i \mid \bI = i} \Bracket{\sum_{j=1}^{t} \frac{H(\bX_j \mid \bprotds,\bTbar = B_j, E )}{t}} 
	\end{align*}
	Now notice that $\bX_j$ is independent of the event $\bTbar = B_j$ since $\bS_i$
        is chosen independent of $\bTbar$ conditioned on $E$
        (Remark~\ref{rem:dist}-(\ref{p2})). Similarly, since $\bprotds$ is only a
        function of $\bS$ and $\bS$ is independent of $\bTbar$ conditioned on $E$, $\bprotds$
        is also independent of the event $\bTbar = B_j$. Consequently, we can ``drop'' the conditioning
        on $\bTbar = B_j$,
	\begin{align*}
	H(\bT \mid \bprotds,\bTbar ,  \bP_i , \bI = i) &= \Ex_{P \sim \bP_i \mid \bI = i} \Bracket{ \sum_{j=1}^{t} \frac{H(\bX_j \mid \bprotds, E )}{t}}  \\
		 &\geq \Ex_{P \sim \bP_i \mid \bI = i} \Bracket{ \frac{H(\bX \mid \bprotds, E )}{t}}  \tag{sub-additivity of the entropy} \\
		 &= \Ex_{P \sim \bP_i \mid \bI = i} \Bracket{\frac{H(\bS_i \mid \bprotds, E)}{t}} \tag{$\bS_i$ and $\bX$ uniquely define each other conditioned on $E$} \\
		 &= \Ex_{P \sim \bP_i \mid \bI = i} \Bracket{\frac{H(\bS_i \mid \bprotds, \bP_i = P, \bI = i)}{t}} \tag{$E$ is defined as $(\bP_i = P, \bI = i)$} \\
		 &= \frac{H(\bS_i \mid \bprotds,\bP_i,\bI = i)}{t}
	\end{align*}
	Finally, by Remark~\ref{rem:dist}-(\ref{p1}), $\bS_i$, $\bprotds$, and $\bP_i$ are all independent of the event $\bI = i$, and hence we have that $H(\bS_i \mid  \bprotds,\bP_i ,\bI = i) = H(\bS_i \mid \bprotds,\bP_i)$, which concludes the proof.
\end{proof}

By plugging in the bound from Lemma~\ref{lem:theta-to-s} in Equation~(\ref{eq:i-sum}) we have,
\begin{align*}
	\sum_{i=1}^{n} H(\bS_i \mid \bprotds , \bP_i) \leq H_2(\delta+o(1)) \cdot (nt) = 0.812\times (16n)= 12.992n
\end{align*}
 since $\delta=1/4$ and $t=16$. Finally, by plugging in this bound together with the bound from Equation~(\ref{eq:ent-bound}) in Equation~(\ref{eq:minfo-bound}), we get,
\begin{align*}
		\sum_{i=1}^{n} I(\bprotds ; \bS_i \mid \bP_i) &\geq 0.64n
\end{align*}

By Lemma~\ref{lem:info-cost-bound},

\begin{align*}
	\IC{\dsest}{\diste}{1/4} = \min_{\protds}\Paren{\ICost{\protds}{\diste}} = \Omega(n)
\end{align*}
To conclude, since the information complexity is a lower bound on the communication complexity~\cite{amit-yao}, we obtain a lower bound of $\Omega(n)$ for \dsest over the distribution $\diste$. This completes the proof of Theorem~\ref{thm:dom-set-lb}.

\subsection{Streaming Lower Bounds Inspired by Kernelization Lower Bounds}
\label{sec:lb-from-parameterized}



Streaming algorithms and kernelization are two (somewhat related) compression models. In kernelization, we have access to the whole input but our computation power is limited to polynomial time whereas in streaming algorithms we don't have access to the whole graph (have to pay for whatever we store) but have unbounded computation power on whatever part of the input we have stored.

A folklore result states that a (decidable) problem is FPT if and only if it has a kernel. Once the fixed-parameter tractability for a problem is established, the next natural goals are to reduce the running time of the FPT algorithm and reduce the size of the kernel. In the last decade, several frameworks have been developed to show (conditional) lower bounds on the size of kernels~\cite{DBLP:journals/jcss/BodlaenderDFH09,dell-and,dell,DBLP:conf/focs/Drucker12,fortnow-santhanam}. Inspired by these frameworks, we define a class of problems, which we call as AND-compatible and OR-compatible, and show (unconditionally) that none of these problems belong to the class \subps.

\begin{definition}
We say that a graph problem $\Pi$ is \emph{AND-compatible} if there exists a constant $k$ such that
\begin{itemize}
  \item for every $n\in \mathbb{N}$ there exists a graph $G_{\emph{YES}}$ of size $n$ such $\Pi(G_{\emph{YES}},k)$ is a YES instance
  \item for every $n\in \mathbb{N}$ there exists a graph $G_{\emph{NO}}$ of size $n$ such $\Pi(G_{\emph{NO}},k)$ is a NO instance
  \item for every $t\in \mathbb{N}$ we have that $\Pi\Big(\uplus_{i=1}^{t} G_i, k\Big) = \wedge_{i=1}^{t} \Pi(G_i,k)$ where $G=\uplus_{i=1}^{t} G_i$ denotes the union of the vertex-disjoint graphs $G_1, G_2, \ldots, G_t$
\end{itemize}
\label{defn:and-compatible}
\end{definition}

Examples of \emph{AND-compatible} graph problems are $k$-Treewidth, $k$-Girth, $k$-Pathwidth, $k$-Coloring, etc.

\begin{definition}
We say that a graph problem $\Pi$ is \emph{OR-compatible} if there exists a constant $k$ such that
\begin{itemize}
  \item for every $n\in \mathbb{N}$ there exists a graph $G_{\text{YES}}$ of size $n$ such $\Pi(G_{\text{YES}},k)$ is a YES instance
  \item for every $n\in \mathbb{N}$ there exists a graph $G_{\text{NO}}$ of size $n$ such $\Pi(G_{\text{NO}},k)$ is a NO instance
  \item for every $t\in \mathbb{N}$ we have that $\Pi(\uplus_{i=1}^{t} G_i, k) = \vee_{i=1}^{t} \Pi(G_i,k)$ where $G=\uplus_{i=1}^{t} G_i$ denotes the union of the vertex-disjoint graphs $G_1, G_2, \ldots, G_t$
\end{itemize}
\label{defn:or-compatible}
\end{definition}

A general example of an \emph{OR-compatible} graph problem is the subgraph isomorphism problem parameterized by size of smaller graph: given a graph $G$ of size $n$ and a smaller graph $H$ of size $k$, does $G$ have a subgraph isomorphic to $H$? Special cases of this problem are $k$-Path, $k$-Clique, $k$-Cycle, etc.

\begin{theorem}
If $\Pi$ is an \emph{AND-compatible} or an \emph{OR-compatible} graph problem then $\Pi\notin \subps$
\label{thm:and-compatible-not-in-subps}
\end{theorem}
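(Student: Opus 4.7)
The plan is to reduce from a one-way communication problem with an $\Omega(n)$ lower bound. If $\Pi$ admitted a one-pass streaming algorithm using $\widetilde{O}(f(k)\,n^{1-\epsilon})$ bits, then by the standard simulation template Alice (processing a prefix of the stream) could transmit her state in the same number of bits to Bob (who processes the suffix), producing a one-way protocol whose communication matches the space. An $\Omega(n)$ communication lower bound on the source problem would then force the space to be $\Omega(n^{1-o(1)})$, incompatible with membership in \subps.

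For the OR-compatible case, fix the constant $k$ witnessing Definition~\ref{defn:or-compatible} and let $G_{\text{YES}}^{(m)}, G_{\text{NO}}^{(m)}$ denote the promised YES/NO instances of size $m$. I would reduce from Set Disjointness on $\{0,1\}^n$ with promise $|X \cap Y| \le 1$, whose one-way communication complexity is $\Omega(n)$. Given $X$ to Alice and $Y$ to Bob, build a graph on $O(n)$ vertices consisting of $n$ vertex-disjoint ``slots,'' each equipped with a two-party gadget: Alice's prefix of the stream inserts her half of the gadget into slot $i$ precisely when $X_i = 1$, and Bob's suffix inserts his half precisely when $Y_i = 1$. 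The gadget is engineered so that slot $i$ becomes a YES-instance of $\Pi$ iff both $X_i = Y_i = 1$, and a NO-instance otherwise. Applying OR-compatibility across the disjoint union of slots then yields $\Pi = \bigvee_i (X_i \wedge Y_i) = [X \cap Y \ne \emptyset]$, so a sublinear-space streaming solution for $\Pi$ would violate the $\Omega(n)$ communication lower bound.

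For the AND-compatible case, the argument is symmetric via De~Morgan. Here the slot gadget is designed so that slot $i$ is a NO-instance of $\Pi$ exactly when $X_i = Y_i = 1$, with empty or single-party slots defaulting to YES. AND-compatibility then gives $\Pi = \bigwedge_i \neg(X_i \wedge Y_i) = [X \cap Y = \emptyset]$; since recognizing disjointness also has $\Omega(n)$ one-way complexity, the desired space lower bound follows.

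The main obstacle is the construction of the two-party slot gadget using only the existence of $G_{\text{YES}}^{(m)}$ and $G_{\text{NO}}^{(m)}$ of every size $m$ guaranteed by Definition~\ref{defn:or-compatible} (resp.\ Definition~\ref{defn:and-compatible}). For the OR case one must exhibit edge sets $E_A, E_B$ on a common slot vertex set $V$ such that $(V, E_A)$ and $(V, E_B)$ are each NO-instances while $(V, E_A \cup E_B)$ is a YES-instance; the dual is needed for the AND case. I would construct these gadgets by taking a sufficiently large $G_{\text{YES}}^{(m)}$ and decomposing its edge set across the two parties, supplementing with padding vertices so that each half on its own is guaranteed to remain a NO-instance (via the freedom to choose any size of $G_{\text{NO}}$). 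Verifying that such a decomposition is always achievable from the bare primitives of OR- or AND-compatibility, uniformly across all problems satisfying the definitions, will be the most delicate step of the proof.
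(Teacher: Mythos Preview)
Your reduction from Disjointness hinges on building, for each slot, edge sets $E_A,E_B$ on a common vertex set $V$ so that $(V,E_A)$ and $(V,E_B)$ are each NO-instances while $(V,E_A\cup E_B)$ is a YES-instance (and dually for the AND case). You flag this as ``the most delicate step,'' but in fact it is not extractable from Definitions~\ref{defn:and-compatible} and~\ref{defn:or-compatible} at all. Those definitions speak only about the behaviour of $\Pi$ under \emph{vertex-disjoint} union; they say nothing whatsoever about overlaying two edge sets on the \emph{same} vertex set, nor about whether an arbitrary subgraph of $G_{\text{YES}}^{(m)}$ is a NO-instance. Your suggested fix of ``padding with vertices so that each half is guaranteed to be a NO-instance via the freedom to choose any size of $G_{\text{NO}}$'' does not help: matching the \emph{size} of some $G_{\text{NO}}^{(m')}$ is irrelevant, since the definition promises only one specific NO-graph per size, not that every graph of that size is NO. So the gadget simply cannot be built uniformly from the bare hypotheses, and the proposal does not go through.

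The paper sidesteps this entirely by never overlaying edges. It works purely with the disjoint-union structure that the definition actually provides: take $t$ vertex-disjoint slots of size $n$ each, fill each slot with either $G_{\text{YES}}^{(n)}$ or $G_{\text{NO}}^{(n)}$, and argue by a pigeonhole/fooling argument that any correct streaming algorithm must use at least $t$ bits on such inputs. The total graph has $N=nt$ vertices, and the punchline is to pick $t$ super-polynomially larger than $n$ (concretely $t=n^{(2-\epsilon)/\epsilon}$), so that the $t$-bit lower bound exceeds any $f(k)\cdot N^{1-\epsilon}\cdot\log^{O(1)}N$ budget. The OR case then follows by De~Morgan, since the complement of an OR-compatible problem is AND-compatible. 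If you want to rescue a communication-complexity flavoured argument, the natural route is a reduction from \textsc{Index} on $\{0,1\}^t$ (Alice encodes her string by filling the $t$ slots; Bob's move is the issue you'd need to resolve), not from Disjointness with per-slot overlays.
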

\begin{proof}
Let $\Pi$ be an \emph{AND-compatible} graph problem, and $G=\uplus_{i=1}^{t} G_i$ for some $t\in \mathbb{N}$. We claim that any streaming algorithm ALG for $\Pi$ must use $t$ bits. Intuitively, we need at least one bit to check that each of the instances $(G_i, k)$ is a YES instance of $\Pi$ (for all $1\leq i\leq t$). Consider a set of $t$ graphs $\mathcal{G}=\{G_1, G_2, \ldots, G_t\}$: note that we don't fix any of these graphs yet. For every subset $X\subseteq [t]$ we define the instance $(G_X,k)$ of $\Pi$ where $G_{X} = \uplus_{j\in J} G_j$. Suppose that ALG uses less than $t$ bits. Then by pigeonhole principle, there are two subsets $I,I'$ of $[t]$ such that ALG has the same answer on $(G_{I},k)$ and $(G_{I'},k)$. Since $I\neq I'$ (without loss of generality) there exists $i^*$ such that $i^* \in I\setminus I'$. This is where we now fix each of the graphs in $\mathcal{G}$ to arrive at a contradiction: consider the input where $G_{i}=G_{\text{YES}}$ for all $(I\cup I')\setminus i^*$ and $G_{i^*}=G_{\text{NO}}$. Then, it follows that $(G_{I},k)$ is a NO instance but $(G_{I'},k)$ is a YES instance.

Suppose that $\Pi\in \subps$, i.e., there is an algorithm for $\Pi$ which uses $f(k)\cdot N^{1-\epsilon}\cdot \log^{O(1)} N$ bits (for some $1>\epsilon>0$) on a graph $G$ of size $N$ to decide whether $(G,k)$ is a YES or NO instance. Let $G=\uplus_{i=1}^{t} G_i$ where $|G_i|=n$ for each $i\in [t]$. Then $|G|=N=nt$. By the previous paragraph, we have that
$$ f(k)\cdot (nt)^{1-\epsilon}\cdot \log^{O(1)} (nt) \geq t \Rightarrow f(k)\cdot n^{1-\epsilon}\cdot \log^{O(1)} (nt) \geq t^{\epsilon}$$

Choosing $t=n^{\frac{2-\epsilon}{\epsilon}}$ we have that $ f(k)\cdot \log^{O(1)} n^{1+(\frac{2-\epsilon}{\epsilon})} \geq n $, which is a contradiction for large enough $n$ (since $k$ and $\epsilon$ are constants).

We now prove the lower bound for AND-compatible problems. Recall that De Morgan's law states that $\neg (\vee_{i} P_i) = \wedge_{i} (\neg P_i) $. Hence, if  $\Pi$ is an \emph{OR-compatible} graph problem then the complement\footnote{By complement, we mean that $\bar{\Pi}(G,k)$ is YES if and only if $\Pi(G,k)$ is NO} problem $\bar{\Pi}$ is an \emph{AND-compatible} graph problem, and hence the lower bound follows from the previous paragraph.
\end{proof}


\begin{remark}
Note that throughout this paper we have considered the model where we allow unbounded computation at each edge update, and also at the end of the stream. However, if we consider a \textbf{restricted} model of allowing only polynomial (in input size $n$) computation at each edge update and also at end of the stream, then it is easy to see that existing (conditional) lower bounds from the parameterized algorithms and kernelization setting translate easily to this restricted model. For example, the following two lower bounds for parameterized streaming algorithms follow immediately in the restricted (polytime computation) model:
\begin{itemize}
  \item Let $X$ be a graph problem that is $W[i]$-hard parameterized by $k$ (for some $i\geq 1$). Then (in the polytime computation model) $X\notin \fps$ unless $\FPT=W[i]$.
  \item Let $X$ be a graph problem that is known to not have a polynomial kernel unless $\NP\subseteq \mathrm{co}\NP/\mathrm{poly}$. Then (in the polytime computation model) $X$ does not have a parameterized streaming algorithm which uses $k^{O(1)}\cdot \log^{O(1)} n$ bits, unless $\NP\subseteq \mathrm{co}\NP/\mathrm{poly}$.
\end{itemize}
\end{remark}

\subsection{$\Omega((N/d)^d)$ bits lower bound for $d$-SAT}
\label{app:lb-sat}

Finally, in this section, we show that for any $d\geq 2$, any streaming algorithm for $d$-SAT (in the clause arrival model) must essentially store all the clauses (and hence fits into the ``brute-force'' streaming setting). This is the only non-graph-theoretic result in this paper, and may be viewed as a ``streaming analogue'' of the Exponential Time Hypothesis. We fix the notation as follows: there are $N$ variables and $M$ clauses. The variable set is fixed, and the clauses arrive one-by-one.

\begin{theorem}
\label{thm:lb-sat}
Any streaming algorithm for $d$-SAT requires storage of $\Omega((N/d)^d)$ bits, where $N$ is the number of variables
\end{theorem}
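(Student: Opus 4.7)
The plan is to establish the lower bound via a reduction from the one-way randomized communication complexity of \textsc{Index}, for which there is an $\Omega(n)$ lower bound on instances of length $n$. The target is to show that $d$-SAT instances on $N$ variables can encode an \textsc{Index} instance of length $n = (N/d)^d$, so that any streaming algorithm for $d$-SAT must store $\Omega((N/d)^d)$ bits at the end of Alice's portion of the stream (which is sent as the message to Bob).

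First I would partition the variable set into $d$ disjoint groups $V_1, V_2, \ldots, V_d$, each of size $N/d$. The collection of $d$-clauses of the form $\neg v_1 \vee \neg v_2 \vee \cdots \vee \neg v_d$, where $v_i \in V_i$, has cardinality exactly $(N/d)^d$, and I would fix a canonical bijection $\phi$ between the coordinates of Alice's bit-string $B \in \{0,1\}^{(N/d)^d}$ and these $d$-clauses. Alice's reduction streams, in any order, exactly those clauses $\phi(I)$ for which $B[I] = 1$. After processing this prefix, the streaming algorithm's memory state is handed off to Bob.

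Bob, holding an index $I^* \in [(N/d)^d]$ which corresponds to some $d$-tuple $(u_1, \ldots, u_d) \in V_1 \times \cdots \times V_d$, then appends clauses that force $u_i = \textsf{TRUE}$ for each $i \in [d]$, and $v = \textsf{FALSE}$ for every $v \in \bigcup_{i=1}^{d}(V_i \setminus \{u_i\})$. After the forcing, exactly one potential Alice-clause is falsified by the forced assignment, namely $\neg u_1 \vee \cdots \vee \neg u_d$, while every other candidate Alice-clause is automatically satisfied because it contains some literal $\neg v$ with $v$ forced to $\textsf{FALSE}$. Consequently, the combined formula is satisfiable if and only if $B[I^*] = 0$, so running the $d$-SAT streaming algorithm on this combined stream and inspecting its final output lets Bob recover $B[I^*]$. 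Applying the $\Omega(n)$ lower bound for \textsc{Index} with $n = (N/d)^d$ then yields the claimed $\Omega((N/d)^d)$-bit lower bound for the state after Alice's phase, which upper bounds the algorithm's space usage.

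The main obstacle is a technical one: Bob's forcing clauses are unit clauses, whereas ``$d$-SAT'' nominally requires each clause to have exactly $d$ literals. I would handle this by padding each unit clause with repetitions of its own literal (producing a length-$d$ clause that is logically equivalent to the unit clause); if the model disallows repeated literals, one can instead introduce a small number of auxiliary variables whose values are pinned by a constant-sized gadget of $d$-clauses, which preserves the reduction up to $O(1)$ extra variables and does not affect the $\Omega((N/d)^d)$ asymptotic bound. A minor bookkeeping point is that the lower bound speaks about the size of Alice's message, so I would explicitly observe that the memory footprint of the streaming algorithm at the moment when it finishes reading Alice's clauses is at least the one-way communication cost of the reduction, which is $\Omega((N/d)^d)$.
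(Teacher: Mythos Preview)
Your proposal is correct and follows the same high-level strategy as the paper: reduce from \textsc{Index} on $n=(N/d)^d$ bits by partitioning the $N$ variables into $d$ blocks of size $N/d$ and associating each $d$-tuple across the blocks with one coordinate of Alice's string. The implementations differ in how the bit is encoded. In the paper (worked out for $d=2$), Alice inserts one clause for \emph{every} bit position and encodes the bit in the \emph{polarity} of one literal (adding $(x_i\vee y_j)$ if $b_L=0$ and $(\bar{x}_i\vee y_j)$ if $b_L=1$); Bob then appends only the unit clause $(\bar{y}_{j^*})$ together with $(x_{i^*}\vee y_{j^*})$, so that a contradiction arises exactly when both polarities on $x_{i^*}$ are present. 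In your version, Alice encodes by \emph{presence} (she streams the all-negative clause iff the bit is $1$), and Bob pins every variable to a fixed truth value so that satisfiability reduces to whether one specific clause was included. Your route is arguably cleaner for general $d$, since the paper only spells out $d=2$ and asserts the extension; the price is that Bob streams $\Theta(N)$ forcing clauses rather than $O(1)$, but this is irrelevant to the lower bound on Alice's message and hence on the algorithm's space. Your treatment of the unit-clause technicality (pad with repeated literals, or share $d-1$ dummy variables across all forcing clauses) matches the paper's own footnote on the matter.
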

\begin{proof}
  For simplicity, we show the result for $2$-SAT; the generalization
  to $d$-SAT for other $d > 2$ is simple.
Let $n=(N/2)^2$. We reduce from the \textsc{Index} problem. Let Alice have a string $B=b_1 b_2 \ldots b_n \in \{0,1\}^n$. We now
map $B$ to an instance $\phi_B$ of $2$-SAT defined over $N$
variables. The $N$ variables are partitioned into $d=2$ sets $X,Y$ of
$N/2$ variables each. Fix a canonical mapping $\psi:
[(N/2)^2]\rightarrow [N/2]^2$.
For each index $L\in [(N/2)^2]$, we add the following clauses depending on the value of $b_L$:
\begin{itemize}
  \item If $b_L=0$, then add the clause $(x_i \vee y_j)$ where $\psi(L)=(i,j)$.
  \item If $b_L=1$, then add the clause $(\bar{x_i} \vee y_j)$ where $\psi(L)=(i,j)$.
\end{itemize}
Observe that the sub-instance constructed so far is trivially
satisfiable, by setting all $y\in Y$ to true. Suppose Bob has the index $L^{*}\in [n]$. To solve the instance of \textsc{Index}, we need to retrieve the
value of the bit $b_{L^*}$.
Let $\psi(L^*)=(i^*, j^*)$. We add two new clauses as follows:
\begin{itemize}
  \item Add the clause $(\bar{y}_{j^*})$\hspace{1mm} \footnote{If we insist that
    all clauses should have cardinality exactly 2, then we can simply
    create a new ``dummy'' variable $z$, and add the clauses $(y_{j^*}
    \vee z), (y_{j^*} \vee \bar{z})$ to achieve the same effect.}
  \item Add the clause $(x_{i^*} \vee y_{j^*})$
\end{itemize}
This completes the construction of the $2$-SAT instance $\phi_B$. Now we claim that $\phi_B$ is satisfiable if and only if $b_{L^*}=0$. Consider a clause of the form $(x \vee y)$ of $\phi_B$:
\begin{itemize}
  \item If $y\neq y_{j^*}$ , we can set $y$ to be true and satisfy this clause.
  \item If $y= y_{j^*}$ but $x\neq x_{i^*}$, then we can satisfy this clause by setting $x=1$ (or $0$, if $x$ appears in complemented form). This is the only time we need to set $x$, and each such $x$ appears in at most one such clause, so there is no clash\footnote{Note that we do not have to know in what form $x$ appears in the input, as our question is just whether the
instance is satisfiable, not to provide a satisfying assignment.}.
\end{itemize}
This only leaves the clause on the variables $x_{i^*}$ and $y_{j^*}$.
We must set $y_{j^*}=0$ to satisfy the clause $\bar{y}_{j^*}$.
If $b_{L^*}=1$ then we have both the clauses $({x_{i^*}} \vee y_{j^*})$
and $(\bar{x}_{i^*} \vee y_{j^*})$, and hence the instance $\phi_B$ is not
satisfiable.
However, if $b_{L^*}=0$ then we only have the clause $(x_{i^*} \vee y_{j^*})$, and hence the instance $\phi_B$ is satisfiable by setting $x_{i^*}=1$.
Hence, the lower bound of $\Omega(n)=\Omega((N/2)^2)$ translates from \textsc{Index} to $2$-SAT.
%
%
%
\end{proof}

Note that the naive algorithm for $d$-SAT which stores all the clauses in memory requires $\Ot(\binom{N}{d})=\Ot(d\cdot N^d)$ bits, and therefore Theorem~\ref{thm:lb-sat} shows that $d$-SAT is hard from a space perspective (essentially have to store all the clauses) for all
$d\ge 2$, whereas there is a transition from \Pe to \NP-complete for
the time cost when going from 2-SAT to 3-SAT.

\section{Conclusions and Open Problems}

In this paper, we initiate a systematic study of graph problems from the paradigm of parameterized streaming algorithms. We define space complexity classes of \fps, \subps, \semips, \supps and \bps, and then obtain tight classifications for several well-studied graph problems such as Longest Path, Feedback Vertex Set, Girth, Treewidth, etc. into these classes. Our parameterized streaming algorithms use techniques of bidimensionality, iterative compression and branching from the FPT world. In addition to showing lower bounds for some parameterized streaming problems via communication complexity, we also show how (conditional) lower bounds for kernels and W-hard problems translate to lower bounds for parameterized streaming algorithms.

Our work leaves open several concrete questions. We list some of them below:
\begin{itemize}
  \item The streaming algorithm (Algorithm~\ref{alg:streaming-branching}) for $k$-VC (on insertion-only streams) from
    Section~\ref{subsec:branching} has an optimal storage of $O(k\log n)$ bits but requires $2^{k}$ passes. Can we reduce the number of passes to $\poly(k)$, or instead show that we need passes which are superpolynomial in $k$ if we restrict space usage to $O(k\log n)$ bits? The only known lower bound for such algorithms is $(k/\log n)$ passes (see Theorem~\ref{thm:abboud-lb}).

\item As in the FPT setting, a natural problem to attack using iterative compression in the streaming setting would be the $k$-OCT problem. It is known that $0$-OCT, i..e, checking if a given graph is bipartite, in the $1$-pass model has an upper bound of $O(n\log n)$ bits~\cite{andrew-semi-streaming} and a lower bound of $\Omega(n\log n)$ bits~\cite{woodruff-tight-bounds-insertion}. For $k\geq 1$, can we design a $g(k)$-pass algorithm for $k$-OCT which uses $\Ot(f(k)\cdot n)$ bits for some functions $f$ and $g$, maybe using iterative compression? To the best of our knowledge, such an algorithm is not known even for $1$-OCT.

 \end{itemize}

%



\bibliography{papers}





\end{document}